 \let\backslash=\setminus \let\ge=\geqslant \let\le=\leqslant 
\def\a{\alpha} \def\b{\beta} \def\d{\delta} \def\e{\varepsilon}  \def\g{\gamma} \def\l{\lambda} \def\s{\sigma} \def\t{\tau} \def\z{\zeta}
\def\D{\Delta}   \def\S{\Sigma}
     \def\N{\mathcal N} 
\def\P{\mathcal P} \def\T{\mathcal T}  \def\X{\mathcal X}
\def\Re{\mathbb{R}}
\def\rho{\varrho}
\def\BR{\text{BR}}
\def\Graph{\text{Graph}}
\begin{document}\openup 1\jot

\begin{abstract} 
	We present the following analog of O'Neill's Theorem (Theorem 5.2 in \cite{O1953}) for finite games. 
	Let $C_1, \ldots, C_k$ be the components of Nash equilibria of a finite normal-form game $G$.  For each $i$, let $c_i$ be the index of $C_i$.  For each $\e>0$,  there exist pairwise disjoint neighborhoods $V_1,...,V_k$ of the components such that for any choice of finitely many distinct completely mixed strategy profiles $\{\s^{ij}\}_{ij}, \s^{ij} \in V_i$ for each $i = 1, \ldots, k$ and numbers $r_{ij} \in \{-1,1\}$ such that $\sum_j r_{ij} = c_i$, there exists a normal-form game $\bar G$ obtained from  $G$ by adding duplicate strategies and an $\e$-perturbation $\bar G^\e$ of $\bar G$ such that the set of equilibria of $\bar G^{\e}$ is  $\{\bar \s^{ij} \}_{ij}$, where for each $i, j$: (1) $\bar \s^{ij}$ is equivalent to the profile $\s^{ij}$;  (2) the index $\bar \s^{ij}$ equals $r_{ij}$.

\end{abstract}

\title[O'Neill's Theorem for Games]{ O'Neill's Theorem for Games\\}
\author[S. Govindan]{Srihari Govindan}
\address{Department of Economics, University of Rochester, NY 14627, USA.}
\email{s.govindan@rochester.edu}
\author[R. Laraki]{Rida Laraki}
\address{Moroccan Center for Game Theory, University Mohammed VI Polytechnic (UM6P), Rabat, Morocco.}
\noindent  
\email{rida.laraki@um6p.ma}
\author[L. Pahl]{Lucas Pahl}
\address{School of Economics, University of Sheffield, Sheffield, S10 2TN, UK.}
\email{l.pahl@sheffield.ac.uk} 
\date{\today}
\thanks{We would like to thank Klaus Ritzberger and three anonymous referees for their comments. Lucas Pahl acknowledges financial support from the Hausdorff Center for Mathematics (DFG project no. 390685813).}
\maketitle

\section{Introduction}

Multiplicity of equilibria is a pervasive phenomenon in games, and refinement theory aims to reduce it by strengthening the Nash criterion through the imposition of additional requirements. For most well-known refinement concepts,  their power derives from  asking for robustness of the equilibria with respect to some type of perturbation. Perfect \cite{RS1975}, Proper \cite{RM1978} and Stable equilibria (either the Kohlberg-Mertens \cite{KM1986} or the Mertens \cite{M1989} variants) are examples of concepts that require robustness of Nash equilibria with respect to perturbations of the players' strategies. Essentiality \cite{WJ1962} and Hyperstability \cite{KM1986} are examples of concepts that require robustness to payoff perturbations. Most strategy perturbations can be encoded in payoff perturbations, which implies that concepts that require robustness to payoff perturbations are usually more stringent than the strategic ones. In this paper we focus on payoff perturbations.  

Every game induces an associated fixed-point problem, where the fixed-point map (usually called a Nash-map) or best-reply correspondence has as its fixed points the Nash equilibria of the game. Small perturbations of a game generate close-by perturbations of the fixed-point map, and therefore fixed-point theory can also inform us in the search for equilibria that are robust to payoff perturbations. The precise fixed-point theoretic tool that allows us to identify robustness of equilibria to perturbations of the fixed-point map is called the \textit{fixed-point index}. The fixed-point index is an integer number associated to each connected component of fixed points of a map and provides a characterization of robustness: a component is robust to perturbations of its fixed-point map to nearby maps if and only if its index is nonzero. 

The lesson for game theory obtained from this characterization is somewhat ambiguous: though nonzero index components are robust to payoff perturbations (cf. \cite{KR1994}), the converse is not true (see \cite{HH2002}). Moreover, a fundamental result in fixed point theory, O'Neill's Theorem (Theorem 5.2 in \cite{O1953}) implies that given any fixed point in a component, there is a nearby map for which it is the  the unique fixed point in a neighborhoord of the component. Though this result permits the interpretation that no fixed point in a connected component of fixed points is particularly distinguishable from the point of view of robustness to perturbations of the map, it yields no immediate prescription for game theory in terms of equilibrium selection: maps that approximate a Nash map may not be Nash maps of payoff-perturbed games, and therefore the same lesson cannot be translated immediately to selection of equilibria through payoff robustness.

Govindan and Wilson \cite{GW2005} proved that if we allow for duplicate strategies, then the class of fixed point maps generated by payoff perturbations is rich enough to capture the notion of essentiality in fixed point theory, i.e., of robustness to map-perturbations.\footnote{The addition of duplicate strategies is not only of technical importance to bridge the gap between the space of maps and that of payoff perturbations in the result of \cite{GW2005}, but is important in the theory of refinements as a decision-theoretic property: Kohlberg and Mertens list the property of \textit{Invariance} as a requirement a solution concept should satisfy, in the sense that a solution concept should not depend on the addition/elimination of duplicate strategies to the game, i.e., the solution should depend only on the reduced normal form.} More precisely, they proved that a Nash component $C$ of a finite game has index zero if and only if for any $\e>0$, there exists an equivalent game and an $\e$-payoff perturbation of the equivalent game with no equilibria near the equivalent component to $C$. This result provides a clear picture of the relation between robustness to map perturbations and to payoff perturbations, but is still incomplete. What about equilibrium components of nonzero index? What is the structure of the equilibria generated by payoff perturbations around positive index components? 

For general fixed-point problems, this question is answered by O'Neill's Theorem, which asserts that if $f : U \to X$ is a continuous function in a Euclidean neighborhood $U$  of a fixed point component $C$ of $f$, $r_1, \ldots, r_k$ are integers whose sum is the fixed-point index of $C$, and $x_1, \ldots, x_k$ are distinct points of $C$, then there is a map arbitrarily close to $f$ whose fixed points are $x_1, \ldots, x_k$, with the fixed point index of each $x_i$ being $r_i$. The result of O'Neill provides us with the precise topological invariant to understand which fixed-point components can be robust, namely, the index of a component: for example, if a component has an index of $+2$, one can select any finite number of points inside the component and assign to them integers such that, as long as the sum of the integers is $+2$, an arbitrarily nearby map exists with those points as the only fixed points around the component, and the indices allocated to them are precisely the integers we have chosen. 

For game-theoretic problems, the immediate question is whether a similar addition of duplicate strategies and perturbation of payoffs of equivalent games would yield like in \cite{GW2005} an analog of O'Neill's Theorem for games. This is the main result we present in this paper. In essence, we prove that one can select in each equilibrium component a finite number of equilibria and associate to each one of them an integer equal to $+1$ or $-1$, such that the sum of the numbers allocated to the selected points of the same component equals its index. For each $\e>0$, one can then construct an equivalent game---obtained by adding duplicate strategies to the original game---and an $\e$-payoff perturbation of this equivalent game with a single Nash equilibrium close to each selected point, each of them having an index precisely equal to the pre-assigned number.

Let us illustrate this result in an example. Consider the following bi-matrix game (from Kohlberg and Mertens \cite{KM1986}). It has a unique Nash component of equilibria, homeomorphic to a circle.

 
   \[
    \begin{array}{cccc}
        & L & M & R \\ 
        \cline{2-4}
        t & \multicolumn{1}{|c}{\tikz[remember picture]\node[red,circle,fill,inner sep=2pt] (nodeTL) {}; (1, 1)} & \multicolumn{1}{|c|}{(0, -1)} & \multicolumn{1}{|c|}{\tikz[remember picture]\node[red,circle,fill,inner sep=2pt] (nodeTR) {}; (-1, 1)} \\ 
        \cline{2-4}
        m & \multicolumn{1}{|c}{(-1, 0)} & \multicolumn{1}{|c|}{\tikz[remember picture]\node[red,circle,fill,inner sep=2pt] (nodeMM) {}; (0, 0)} & \multicolumn{1}{|c|}{\tikz[remember picture]\node[red,circle,fill,inner sep=2pt] (nodeMR) {}; (-1, 0)} \\ 
        \cline{2-4}
        b & \multicolumn{1}{|c}{\tikz[remember picture]\node[red,circle,fill,inner sep=2pt] (nodeBL) {}; (1, -1)} & \multicolumn{1}{|c|}{\tikz[remember picture]\node[red,circle,fill,inner sep=2pt] (nodeBM) {}; (0, -1)} & \multicolumn{1}{|c|}{(-2, -2)} \\ 
        \cline{2-4}
    \end{array}
    \]

    \begin{tikzpicture}[overlay, remember picture]
        \draw[red, ultra thick] (nodeTL) -- (nodeBL);  
        \draw[red, ultra thick] (nodeTL) -- (nodeTR);  
        \draw[red, ultra thick] (nodeTR) -- (nodeMR);  
        \draw[red, ultra thick] (nodeMR) -- (nodeMM);  
        \draw[red, ultra thick] (nodeMM) -- (nodeBM);  
        \draw[red, ultra thick] (nodeBM) -- (nodeBL);  
    \end{tikzpicture}

Say we would like to make $(\frac{1}{2}t+\frac{1}{2}b, L)$ the unique equilibrium (and so necessarily with index $+1$) of a perturbed equivalent game, we duplicate strategy $L$ and perturb the payoffs as follows. 
\smallskip
$$
\begin{array}{ccccc}
	& L& L' & M & R \\
	\cline{2-5}
	t & \multicolumn{1}{|c}{(1+\varepsilon, 1)} &\multicolumn{1}{|c}{(1, 1+\varepsilon)} & \multicolumn{1}{|c|}{(0+\varepsilon,-1)}& \multicolumn{1}{|c|}{(-1+\varepsilon, 1)} \\
	\cline{2-5}
	m & \multicolumn{1}{|c}{(-1, 0)} &\multicolumn{1}{|c}{(-1, +\varepsilon)} & \multicolumn{1}{|c|}{(0 ,0)}& \multicolumn{1}{|c|}{(-1, 0)} \\
	\cline{2-5}
	b & \multicolumn{1}{|c}{(1, -1+\varepsilon)} &\multicolumn{1}{|c}{(1+\varepsilon, -1)} & \multicolumn{1}{|c|}{(0,-1)}& \multicolumn{1}{|c|}{(-2, -2)} \\
	\cline{2-5}
\end{array}
$$
\smallskip
As $m$, $M$ and $R$ are strictly dominated in the above game, eliminating them leads to a two-by-two game where the unique Nash equilibrium is $(\frac{1}{2}t+\frac{1}{2}b;\frac{1}{2}L+\frac{1}{2}L')$ (necessarily with index +1) which is equivalent in the original game to  $\sigma=(\frac{1}{2}t+\frac{1}{2}b, L)$. 

Consider now the three Nash equilibria $(t,L)$ and $(b,L)$ and $\sigma=(\frac{1}{2}t+\frac{1}{2}b, L)$ and associate to the first two the integer $+1$ and to the last the integer $-1$. We will show how to add to the original game the same duplicate strategy $L'$ as above and perturb the payoffs so as to generate exactly three equilibria in a perturbed game. Consider the following perturbation: 
\medskip
$$
\begin{array}{ccccc}
	& L& L' & M & R \\
	\cline{2-5}
	t & \multicolumn{1}{|c}{(1+\varepsilon, 1+\varepsilon)} &\multicolumn{1}{|c}{(1, 1)} & \multicolumn{1}{|c|}{(0+\varepsilon,-1)}& \multicolumn{1}{|c|}{(-1+\varepsilon, 1)} \\
	\cline{2-5}
	m & \multicolumn{1}{|c}{(-1, 0)} &\multicolumn{1}{|c}{(-1, +\varepsilon)} & \multicolumn{1}{|c|}{(0\,0)}& \multicolumn{1}{|c|}{(-1, 0)} \\
	\cline{2-5}
	b & \multicolumn{1}{|c}{(1, -1)} &\multicolumn{1}{|c}{(1+\varepsilon, -1+\varepsilon)} & \multicolumn{1}{|c|}{(0,-1)}& \multicolumn{1}{|c|}{(-2, -2)} \\
	\cline{2-5}
\end{array}
$$

Strategies $m$, $M$ and $R$ are strictly dominated. Eliminating them leads to a battle of the sexes where only three Nash equilibria are present: the two strict equilibria $(t,L)$ and $(b,L')$ (hence with indices $+1$ each) and a completely mixed equilibrium $(\frac{1}{2}t+\frac{1}{2}b;\frac{1}{2}L+\frac{1}{2}L')$ with index $-1$. But since $L'$ was a duplicate of $L$, we obtain as Nash equilibria $(t,L)$ and $(b,L)$ and $(\frac{1}{2}t+\frac{1}{2}b, L)$, and their indices correspond to the integers we fixed previously. Hence, perturbations to nearby equivalent games can single out $\sigma$ as an isolated equilibrium with different indices: in the first perturbation we obtained it with index $+1$ and in the second with index $-1$. 

Our proof of the extension of O'Neill's Theorem to games is inspired by the one in Govindan Wilson \cite{GW2005}. Though the steps of the proof are similar, the details are very different and pose different technical challenges. It requires more preliminary work in the realm of fixed point theory in order to produce a suitable approximation of the best-reply correspondence (which is achieved with the help of a fundamental result in obstruction theory, namely, the Hopf Extension Theorem). Then we construct a game that has essentially that approximation as its best-reply correspondence (which will require tools from the theory of triangulations). 

The paper is organized as follows. Section \ref{sec definitions} sets up the problem and states the main result. Section \ref{sec key ideas} gives a detailed outline of the proof and includes a table of notation that is used in the proof. Section \ref{sec proof} contains the proof of the theorem, which is presented in a sequence of steps.   

\section{Preliminary Definitions and Statement of the Theorem}\label{sec definitions}

We start by introducing some notational conventions. A finite normal-form game is a tuple $G = (\N, (S_n)_{n \in \N}, (G_n)_{n \in \N})$, where $\N = \{1,...,N\}$ is the set of players, $S_n$ is a finite set of pure strategies of player $n$, $S \equiv \times_{n \in \N}S_n$, and $G_n: S \to \Re$ is the payoff function of player $n$. We denote the set of mixed strategies of player $n$ by $\S_n \equiv \D(S_n)$, and the set of mixed-strategy profiles is denoted $\S \equiv \prod_n \S_n$. For each $n$, we continue to use $G_n$ to denote the extension of $n$'s payoff function to $\S$.  The best-reply correspondence of $G$ is denoted $\BR$.

For $V_n \subseteq \S_n$,  $\partial V_n$ denotes the boundary of $V_n$ in the affine space generated by $\S_n$; $\partial_{\S_n} V_n$ refers to the boundary of $V_n$ in $\S_n$. Similarly, $\partial V$ and $\partial_{\S} V$ denote, resp.,  the boundaries of $V \subseteq \S$ in the affine space generated by $\S$ and in $\S$. Finally, $\Vert \cdot \Vert$ denotes the $\ell_\infty$-norm on Euclidean spaces.

A \textit{polyhedral subdivision} or \textit{polyhedral complex} $\P$ of a polytope $X \subset \Re^m$ is a collection of polytopes in $\Re^m$ that satisfies three conditions: first, the union of the polytopes in $\P$ is equal to $X$; second, every face of a polyhedron in the collection belongs to the collection; third, the intersection of any two polytopes in $\P$ is a face of each (possibly empty). Given a polyhedral subdivision $\P$ of $X$, a typical member of $\P$ is called a \textit{cell}. The cells of dimension $0$ are called \textit{vertices}.  When the polyhedral subdivision is made of simplices, it is called a \textit{triangulation}. The \textit{closed star of a vertex $v$ of a triangulation $\T_n$} is the union of simplices that have $v$ as a vertex; and the \textit{simplicial neighborhood} of the closed star of $v$ is the union of all simplices that intersect the closed star. 

\begin{example} 
 Figure \ref{triang1} provides an example of a triangulation of a square. This triangulation has 16 vertices (labeled from $a$ to $p$), 33 1-simplices, and 18 2-simplices. The space of this triangulation is the union of all the simplices, therefore, the square itself. The closed star of vertex $f$ in the triangulation is depicted in red in Figure \ref{star} and the simplicial neighborhood of this closed star is depicted in Figure \ref{simplicial}.

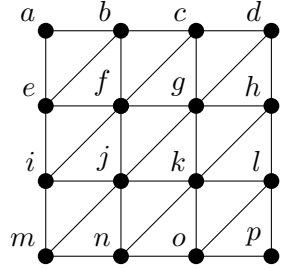
\begin{figure}
\caption{Triangulation of a square}\label{triang1}

\begin{tikzpicture}
  \draw
    (0, 0) grid[step=1cm] (3, 3)
    (0, 2) -- (1, 3)
    (0, 1) -- (2, 3)
    (0, 0) -- (3, 3)
    (1, 0) -- (3, 2)
    (2, 0) -- (3, 1)
  ;

  \fill[radius=3pt]
    \foreach \x in {0, ..., 3} {
      \foreach \y in {0, ..., 3} {
        (\x, \y) circle[]
      }
    }
  ;
 
  \fill[black] (1, 2) circle[radius=3pt];

  \path[above left]
    \foreach \y in {3} {
      \foreach[count=\x] \v in {a, b, c, d} {
        (\x-1, \y) node {$\v$}
      }
    }

\foreach \y in {0} {
      \foreach[count=\x] \v in {m, n, o, p} {
        (\x-1, \y) node {$\v$}
      }
    }
    \foreach \p/\v in {
      {0, 2}/e,
      {1, 2}/f,
      {2, 2}/g,
      {3, 2}/h,
      {0, 1}/i,
      {1, 1}/j,
      {2, 1}/k,
      {3, 1}/l%
    } {
      (\p) node {$\v$}
    }
  ;

\end{tikzpicture}

\end{figure}


\begin{figure} 
\caption{The closed star of vertex $f$}\label{star}
\begin{tikzpicture}
  \draw
    (0, 0) grid[step=1cm] (3, 3)
    (0, 2) -- (1, 3)
    (0, 1) -- (2, 3)
    (0, 0) -- (3, 3)
    (1, 0) -- (3, 2)
    (2, 0) -- (3, 1)
  ;
  
  \fill[red!20]
    (1, 2) -- (0, 2) -- (1, 3) -- cycle 
    (1, 2) -- (2, 2) -- (2, 3) -- cycle 
    (1, 2) -- (1, 1) -- (2, 2) -- cycle 
    (1, 2) -- (0, 1) -- (0, 2) -- cycle 
    (1,2) -- (1,1) -- (0,1) -- cycle 
	(1,2) -- (1,3) -- (2,3) -- cycle 
  ;

  \draw[thick, red]
    (1, 2) -- (0, 2)
    (1, 2) -- (2, 2)
    (1, 2) -- (0, 1)
    (1, 2) -- (1, 1)
    (1, 2) -- (2, 3)
	(1,2) -- (1,3)
	(1,1) -- (2,2)
	(2,2) -- (2,3)
	(0,2) -- (1,3)
	(1,3) -- (2,3)
	(0,2) -- (0,1)
	(0,1) -- (1,1)
  ;

  \fill[radius=3pt]
    \foreach \x in {0, ..., 3} {
      \foreach \y in {0, ..., 3} {
        (\x, \y) circle[]
      }
    }
  ;
 
  \fill[red] 
			(1, 2) circle[radius=3pt]
			(2,2) circle[radius=3pt]
			(2,3) circle[radius=3pt]
			(1,3) circle[radius=3pt]
			(0,2) circle[radius=3pt]
			(0,1) circle[radius=3pt]
			(1,1) circle[radius=3pt]

;

  \path[above left]
    \foreach \y in {3} {
      \foreach[count=\x] \v in {a, b, c, d} {
        (\x-1, \y) node {$\v$}
      }
    }

\foreach \y in {0} {
      \foreach[count=\x] \v in {m, n, o, p} {
        (\x-1, \y) node {$\v$}
      }
    }
    \foreach \p/\v in {
      {0, 2}/e,
      {1, 2}/f,
      {2, 2}/g,
      {3, 2}/h,
      {0, 1}/i,
      {1, 1}/j,
      {2, 1}/k,
      {3, 1}/l%
    } {
      (\p) node {$\v$}
    }
  ;

\end{tikzpicture}

\end{figure}
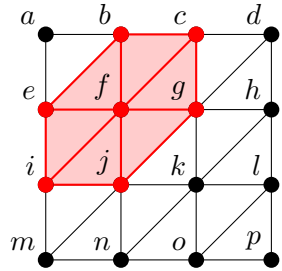 

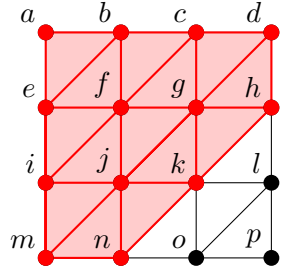
\begin{figure} 
\caption{The simplicial neighborhood of the closed star of $f$}\label{simplicial}
\begin{tikzpicture}
  \draw
    (0, 0) grid[step=1cm] (3, 3)
    (0, 2) -- (1, 3)
    (0, 1) -- (2, 3)
    (0, 0) -- (3, 3)
    (1, 0) -- (3, 2)
    (2, 0) -- (3, 1)
  ;
  
  \fill[red!20]
    (1, 2) -- (0, 2) -- (1, 3) -- cycle 
    (1, 2) -- (2, 2) -- (2, 3) -- cycle 
    (1, 2) -- (1, 1) -- (2, 2) -- cycle 
    (1, 2) -- (0, 1) -- (0, 2) -- cycle 
    (1,2) -- (1,1) -- (0,1) -- cycle 
	(1,2) -- (1,3) -- (2,3) -- cycle 
	(0,0) -- (0,1) -- (1,1) -- cycle
	(0,0) -- (1,0) -- (1,1) -- cycle
	(1,0) -- (1,1) -- (2,1) -- cycle
	(1,1) -- (2,1) -- (2,2) -- cycle
	(2,1) -- (2,2) -- (3,2) -- cycle
	(2,2) -- (3,2) -- (3,3) -- cycle
	(2,2) -- (2,3) -- (3,3) -- cycle
	(0,2) -- (0,3) -- (1,3) -- cycle
  ;

  \draw[thick, red]
    (1, 2) -- (0, 2)
    (1, 2) -- (2, 2)
    (1, 2) -- (0, 1)
    (1, 2) -- (1, 1)
    (1, 2) -- (2, 3)
	(1,2) -- (1,3)
	(1,1) -- (2,2)
	(2,2) -- (2,3)
	(0,2) -- (1,3)
	(1,3) -- (2,3)
	(0,2) -- (0,1)
	(0,1) -- (1,1)
	(0,0) -- (0,1)
	(0,0) -- (0,2)
	(0,0) -- (0,3)
	(0,0) -- (1,1)
	(1,1) -- (2,2)
	(2,2) -- (3,3)
	(1,1) -- (2,1)
	(0,0) -- (1,0)
	(1,0) -- (1,1)
	(1,0) -- (2,1)
	(2,1) -- (2,2)
	(2,1) -- (3,2)
	(2,2) -- (3,2)
	(3,2) -- (3,3)
	(2,3) -- (3,3)
	(0,3) -- (1,3)

  ;

  \fill[radius=3pt]
    \foreach \x in {0, ..., 3} {
      \foreach \y in {0, ..., 3} {
        (\x, \y) circle[]
      }
    }
  ;
 
  \fill[red] 

			(1, 2) circle[radius=3pt]
			(2,2) circle[radius=3pt]
			(2,3) circle[radius=3pt]
			(1,3) circle[radius=3pt]
			(0,2) circle[radius=3pt]
			(0,1) circle[radius=3pt]
			(1,1) circle[radius=3pt]
			(0,3) circle[radius=3pt]
			(0,0) circle[radius=3pt]
			(1,0) circle[radius=3pt]
			(2,1) circle[radius=3pt]
			(3,2) circle[radius=3pt]
			(3,3) circle[radius=3pt]
			
;  

  \path[above left]
    \foreach \y in {3} {
      \foreach[count=\x] \v in {a, b, c, d} {
        (\x-1, \y) node {$\v$}
      }
    }

\foreach \y in {0} {
      \foreach[count=\x] \v in {m, n, o, p} {
        (\x-1, \y) node {$\v$}
      }
    }
    \foreach \p/\v in {
      {0, 2}/e,
      {1, 2}/f,
      {2, 2}/g,
      {3, 2}/h,
      {0, 1}/i,
      {1, 1}/j,
      {2, 1}/k,
      {3, 1}/l%
    } {
      (\p) node {$\v$}
    }
  ;

\end{tikzpicture}
\end{figure}

\end{example}

The \textit{face complex} of a polytope $X \subset \Re^m$ is the polyhedral subdivision $\P$ of $X$ given by the collection of its faces ($X$ and the empty set are also considered faces, so they also belong to the face complex). If $\P$ is any polyhedral subdivision of $X$, then a \textit{subcomplex} of $\P$ is a subset of $\P$ that is itself a polyhedral subdivision. The \textit{space of a complex} is the union of the cells that belong to it.

\begin{example}\label{examplefacecomplex}

In Figure \ref{fcom} we have a 2-simplex ABC. The face complex of this 2-simplex is given by the empty face,  vertices $A,B,C$, 1-simplices $[A,B], [B,C], [A,C]$, and the 2-simplex $[A,B,C]$. We highlighted in red the space of the subcomplex of the face complex composed by vertices $A,B,C$ and 1-simplices $[A,B]$ and $[B,C]$. 

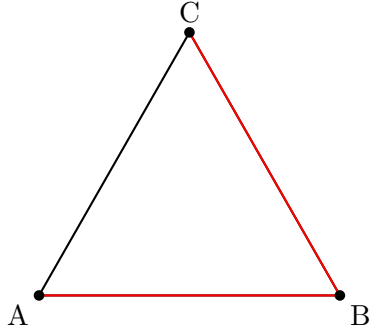
\begin{figure}
\caption{Face complex and subcomplex}\label{fcom}

\begin{tikzpicture}

\coordinate (A) at (0, 0);
\coordinate (B) at (4, 0);
\coordinate (C) at (2, 3.5);  

\draw[thick] (A) -- (C) -- (B) -- cycle;

\draw[thick, red] (A) -- (B);
\draw[thick, red] (B) -- (C);

\fill[black] (A) circle (2pt) node[anchor=north east] {A};
\fill[black] (B) circle (2pt) node[anchor=north west] {B};
\fill[black] (C) circle (2pt) node[anchor=south] {C};

\end{tikzpicture}

\end{figure}
\end{example}

Given a polyhedral subdivison $\T_n$ of space $X$ and $x \in X$, the \textit{carrier of $x$ in subdivision $\T_n$} is the unique cell of $\T_n$ that contains $x$ in its interior (relative to the affine space generated by the cell). Whenever the underlying polyhedral subdivision is understood, we shall omit reference to $\T_n$. Figure \ref{carrier}  illustrates the carrier of point $x$ in a triangulation of a 2-simplex: the carrier of point $x$ is the 2-simplex $AGC$.

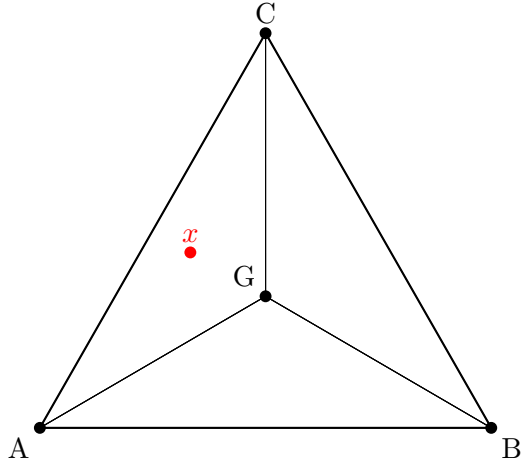
\begin{figure} 
\caption{Carrier of point in $x$ in a triangulation}\label{carrier}

\begin{tikzpicture}[scale=1.5]

\coordinate (A) at (0, 0);
\coordinate (B) at (4, 0);
\coordinate (C) at (2, 3.5);  

\coordinate (G) at (barycentric cs:A=1,B=1,C=1);  

\draw[thick] (A) -- (B) -- (C) -- cycle;

\draw (G) -- (A);
\draw (G) -- (B);
\draw (G) -- (C);

\draw (A) -- (G) -- (B) -- cycle;
\draw (B) -- (G) -- (C) -- cycle;
\draw (C) -- (G) -- (A) -- cycle;

\coordinate (F1) at (barycentric cs:A=1,G=1,C=1); 

\fill[black] (A) circle (1.5pt) node[anchor=north east] {A};
\fill[black] (B) circle (1.5pt) node[anchor=north west] {B};
\fill[black] (C) circle (1.5pt) node[anchor=south] {C};
\fill[black] (G) circle (1.5pt) node[anchor=south east] {G};

\fill[red] (F1) circle (1.5pt) node[anchor=south] {$x$};

\end{tikzpicture}
\end{figure}

For technical reasons, our proof will require considering the notion of a game defined over slightly more general strategy sets than simplices for each player (cf. Pahl \cite{LP2020}). 

\begin{definition} A \textit{polytope-form game} is a tuple $G = (\N, (P_n)_{n \in \N}, (V_n)_{n \in \N})$, where $\N = \{1,...,N\}$ is the set of players, $P_n$ is a polytope in $\Re^{m_n}$ (the set of strategies of player $n$) and $G_n: \prod_{n \in \N}P_n \to \Re$ is the payoff function of player $n$, which is affine in each coordinate $p_j \in P_j, j=1,...,N$.\footnote{The original term to denote this class of games is \textit{strategic-form games} and was coined in \cite{M2004}. In order to avoid confusion with the current terminology of ``strategic-form games'', we opted for the ``polytope-form game'' designation.} \end{definition}

Every finite game in mixed strategies can be viewed as a game in polytope-form since strategy sets are simplices hence polytopes and payoff functions are multi-linear.  Going the other way, given a polytope-form game, we can define a normal-form game where the set of pure strategies of each player is the set of vertices of $P_n$; these two are ``equivalent'' in the sense to be defined now.

\begin{definition}\label{equivalence} Two  games $G = (\N, (P_n)_{n \in \N}, (G_n)_{n \in \N})$ and $G' = (\N, (P'_n)_{n \in \N}, (G'_n)_{n \in \N})$ in polytope form  are \textit{equivalent} if there exist a polytope-form game $\tilde G = (\N, (\tilde P_n)_{n \in \N}, (\tilde G_n)_{n \in \N})$, affine and surjective maps $\phi_n: P_n \to \tilde P_n$ and $\phi'_n: P'_n \to \tilde P_n$, such that $G_n(p) = \tilde G_n(\phi (p))$ and $G'_n(p') = \tilde G_n(\phi'(p'))$, with $\phi \equiv \times_n \phi_n, \phi' \equiv \times_n \phi'_n$. 
\end{definition}

Let $G$ and $\tilde G$ be as in Definition \ref{equivalence}. We say $p_n \in P_n$ \textit{projects} to $\tilde p_n \in \tilde P_n$, if $\phi_n(p_n) = \tilde p_n$; similarly, we say $p \in P$ projects to $\tilde p \in \tilde P$, if $\phi (p) = \tilde p$. We say that a set $A \subseteq P$ projects to  $\tilde A \subseteq \tilde P$ if $\phi(A) = \tilde A$. Given a normal-form game $G$, a \textit{duplicate strategy $s_n$ of player $n$} is a pure strategy that projects to a (possibly mixed) strategy of player $n$ that is distinct from $s_n$. A normal-form game $\bar G$ obtained by adding duplicate strategies from $G$ is a finite game where each pure strategy is either a pure strategy of $G$ or is a duplicate of a strategy of $G$ (pure or mixed).

\begin{example} Assume player $n$ has three pure strategies $\{A,B,C\}$ in a finite normal-form game $G$. His mixed strategy set is therefore a 2-simplex as depicted in Figure \ref{equivalent}. We now introduce the duplicate strategy $b$ for player $n$, which corresponds to the barycenter of the 2-simplex. The newly defined pure strategy set of the player is then $\{A,B,C,b\}$, and the mixed strategy set is a tetrahedron. Payoffs for the player in the equivalent game $\bar{G}$ where the strategy set of the player is a tetrahedron are defined by projecting $b$ to the 2-simplex, that is: $\bar{G}_n(b, s_{-n}) = \frac{1}{3}G_n(A, s_{-n}) + \frac{1}{3}G_n(B,s_{-n}) + \frac{1}{3}G_n(C, s_{-n})$.

\begin{figure}
\caption{Strategy set of player $n$ and equivalent strategy set}\label{equivalent}
\begin{tikzpicture}[scale=1.0]

\coordinate (A) at (0, 0);
\coordinate (B) at (4, 0);
\coordinate (C) at (2, 3.5);  

\coordinate (G) at (barycentric cs:A=1,B=1,C=1);  

\draw[thick] (A) -- (B) -- (C) -- cycle;

\coordinate (F1) at (barycentric cs:A=1,G=1,C=1); 

\fill[black] (A) circle (1.5pt) node[anchor=north east] {A};
\fill[black] (B) circle (1.5pt) node[anchor=north west] {B};
\fill[black] (C) circle (1.5pt) node[anchor=south] {C};
\fill[black] (G) circle (1.5pt) node[anchor=south east] {b};


\end{tikzpicture}

\medskip
\begin{tikzpicture}[scale=2.5]

\coordinate (A) at (0, 0);
\coordinate (C) at (1.2, 0); 
\coordinate (G) at (0.5, 1.5); 
\coordinate (B) at (1.8, 0.5); 

\coordinate (F1) at ($(A)!1/3!(C)!1/3!(G)$); 

\draw[thick] (A) -- (B) -- (C) -- cycle;

\draw[thick] (G) -- (A);
\draw[thick] (G) -- (B);
\draw[thick] (G) -- (C);

\draw[thick] (B) -- (G);  

\filldraw[black] (A) circle (1pt) node[anchor=north east] {A};
\filldraw[black] (B) circle (1pt) node[anchor=west] {B};
\filldraw[black] (C) circle (1pt) node[anchor=south] {C};
\filldraw[black] (G) circle (1pt) node[anchor=south] {b};


\end{tikzpicture}

\end{figure}
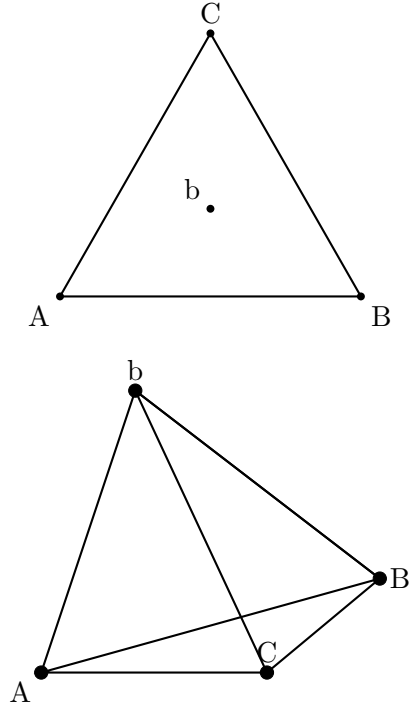

\end{example}

\subsection{Fixed Point Index}
We provide a brief review of the concept of fixed point index here. For a comprehensive and axiomatic treatment of the subject we refer the reader to \cite{AM2018}. Let $X$ be a convex set and $A$ a compact subset of $X$. Say that a correspondence $\varphi: A \twoheadrightarrow X$ is well-behaved if it is nonempty, compact, convex-valued and upper semicontinuous. Suppose $\varphi$ is a well-behaved correspondence without  fixed points on the boundary of $A$ (in $X$).  The index of $\varphi$ (over $A$) is an integer that serves as an algebraic count of the number of fixed points of $\varphi$ in $A$. In the case where $A$ is the closure of an open subset of an Euclidean space, and $\varphi$ is a function, the index of $\varphi$ can be computed as the degree of the displacement $d_\varphi$ of $\varphi$, which is defined as $d_\varphi \equiv \text{Id} - \varphi$.

We invoke three important properties of the fixed point index.  The first property is continuity: there exists $\e > 0$ such that for all well-behaved correspondences $\varphi': A \twoheadrightarrow X$ whose graph is in the $\e$-neighborhood of that of $\varphi$, the index of $\varphi'$ is the same as that of $\varphi$.  Since a well-behaved correspondence $\varphi$ can be approximated by functions whose graphs are contained in arbitrarily small neighborhoods of the graph of $\varphi$, the index of $\varphi$ can be computed by approximating it by a function.

The second property, which is related to continuity, is homotopy.  Given two well-behaved correspondences $\varphi, \varphi': A \twoheadrightarrow X$, if $\l\varphi + (1-\l)\varphi'$ has no fixed points on the boundary of $A$ for any $\l \in [0, 1]$, then the index of $\varphi$ and $\varphi'$ are the same.

The third property we need is the multiplicative property.  If $\varphi_1: A_1 \twoheadrightarrow X_1$ and $\varphi_2: A_2 \twoheadrightarrow X_2$ are two-well behaved correspondences without fixed points on the boundaries of $A_1$ and $A_2$, the index of $\varphi_1 \times \varphi_2$ is the product of the indices of $\varphi_1$ and $\varphi_2$.

Suppose $\varphi: X \twoheadrightarrow X$ is a well-behaved correspondence and $X$ is compact and convex.  Let $C$ be a component of fixed points of $\varphi$.  Let $A$ be a closed neighborhood of $C$ that contains no other fixed points.  The index of $C$ is defined to be the index of $\varphi$ over $A$.   The index of $C$ is independent of the choice of the neighborhood, so long as the neighborhood does not contain any other fixed points.  

Given a game $G$ in polytope-form, let $\varphi: P \twoheadrightarrow P$ be the best-reply correspondence, i.e., $\varphi(p)$ is the set of $q \in P$ such that for each $n$, $q_n$ is a best reply to $p$. $\varphi$ is a well-behaved correspondence.  The game $G$ has finitely many components of Nash equilibria, obtained as the set of fixed points of $\varphi$.  For each component $C$, we can then assign an index.  The index of Nash equilibria can also be computed using any Nash map, i.e., a function $f$ that is jointly continuous in payoffs and strategies and whose fixed points for any game are the Nash equilibria (cf. \cite{DG2000}). The sum of indices of each equilibrium component of any game is $+1$. (cf. \cite{AM2018}).

\subsection{Statement of the Theorem} Fix a finite normal-form game $G$.  Let $C_1, \ldots, C_k$ be the components of its equilibria.  Let the index of each $C_i$ be $c_i$.  

\begin{theorem}\label{main thm}
	For each $\e>0$,  there exist  pairwise disjoint sets $V_1, \ldots, V_k \subseteq \S$, $V_i$ a neighborhood of $C_i$ for each $i$, such that for any choice of finitely many distinct points $\{\s^{ij}\}_{ij}, \s^{ij} \in V_i \backslash \partial V_i$ and numbers $r_{ij} \in \{-1,1\}$ such that $\sum_j r_{ij} = c_i$, there exists a  normal-form game $\hat G$ that is equivalent to $G$ and a $\e$-perturbation $\hat G^\e$ of $\hat G$ such that: 
	\begin{enumerate} 
		\item The set of equilibria of $\hat G^{\e}$ equals $\{\hat \s^{ij} \}_{ij}$, where $\hat \s^{ij}$ projects to $\s^{ij}$, for each $i,j$. 
		\item The index of $\hat \s^{ij}$ equals $r_{ij}$.
	\end{enumerate}
\end{theorem}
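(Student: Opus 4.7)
My approach follows the architecture of Govindan--Wilson \cite{GW2005} but enlarges its map-construction stage so as to realize any distribution of $\pm 1$ indices that respects the total index $c_i$ of each component, rather than only the $c_i=0$ case treated there. I would proceed in four stages: build a target map with the desired fixed-point data, piecewise-linearize it on a fine product triangulation, realize the PL map as a best-reply correspondence of an $\e$-perturbed equivalent game with duplicates, and finally verify the index and uniqueness claims via the standard properties of the index recalled in the preliminaries.

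\emph{Stage 1 (Target map via Hopf Extension).} Choose pairwise disjoint closed neighborhoods $A_i \supset V_i \supset C_i$ so that $\BR$ has no other fixed points in $A_i$ and the index of $\BR$ over $A_i$ equals $c_i$. Fix a Nash map $f_G$ of $G$. Given any selection $\{\s^{ij}\} \subset V_i \setminus \partial V_i$ and signs $r_{ij}$ with $\sum_j r_{ij} = c_i$, I would apply the Hopf Extension Theorem to the displacement of $f_G$ on $\partial A_i$ to obtain a continuous map $f: \S \to \S$ that (i) agrees with $f_G$ outside $\bigcup_i A_i$, (ii) is uniformly close to $f_G$, and (iii) has fixed-point set inside $A_i$ exactly equal to $\{\s^{ij}\}_j$, each with local index $r_{ij}$. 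The restriction to $\pm 1$ summands is permissible at this stage because one can always split an integer of magnitude greater than one into $\pm 1$ contributions located at additional auxiliary points inside $V_i$, exploiting the freedom in choosing the $\s^{ij}$.

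\emph{Stage 2 (PL approximation).} I would build a product triangulation $\T = \prod_n \T_n$ of $\S$ fine enough that every $\s^{ij}$ lies in the interior of a maximal cell, the simplicial neighborhoods are smaller than the tolerance governing the continuity of the index for $f$, and each $V_i$ is a union of cells of $\T$. Replace $f$ by its simplicial approximation $f_\T$ obtained by defining $f$ at the vertices of $\T$ and extending affinely on each cell. The continuity and homotopy invariance of the index recalled earlier guarantee that $f_\T$ inherits exactly the isolated fixed-point data $(\s^{ij}, r_{ij})$ of $f$ inside $\bigcup_i A_i$.

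\emph{Stage 3 (Realization via duplicates and payoff perturbation).} This is the main obstacle. I would form the equivalent game $\hat G$ by adding, for every vertex $v$ of each $\T_n$ that is not already a pure strategy, a pure strategy of player $n$ duplicating the mixed strategy corresponding to $v$; the projection $\phi_n$ reads barycentric coordinates with respect to $\T_n$. Because payoffs in a polytope-form game are multilinear, the combinatorics of $\T$ determine the structure of $\BR_{\hat G}$. I would then define payoff perturbations of magnitude at most $\e$ on the duplicates so that, in $\hat G^\e$, each player's unique best reply at any profile in the interior of a cell of $\prod_n \T_n$ is exactly the vertex prescribed by $f_\T$; a small generic tie-breaking term removes spurious equilibria on cell boundaries. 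Multilinearity of payoffs plus the PL structure of $f_\T$ makes this encoding feasible, and the fineness of $\T$ keeps the perturbations within $\e$. This step is where the interplay between the polyhedral-complex language, the polytope-form framework, and the duplicate-strategy device of \cite{GW2005} all have to be pushed further than in that paper.

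\emph{Stage 4 (Conclusion).} The equilibria of $\hat G^\e$ are the fixed points of $\BR_{\hat G^\e}$, which, by construction, coincide with the fixed points of $f_\T$ and project via $\phi$ to the $\s^{ij}$. The homotopy invariance of the index, together with its behavior under the affine projection $\phi$ (yielding $\hat \s^{ij}$ of the same local index as $\s^{ij}$ under $f_\T$), gives that each $\hat \s^{ij}$ has index $r_{ij}$. The main difficulty of the whole argument, as flagged above, is Stage 3: translating a prescribed PL self-map of $\S$ into a genuine best-reply correspondence of an $\e$-perturbed equivalent game with duplicates while avoiding extraneous fixed points on lower-dimensional faces; everything else rests on standard obstruction-theoretic and index-theoretic machinery.
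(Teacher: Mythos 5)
Your four-stage plan mirrors the paper's overall architecture (approximate $\BR$ by a map with prescribed fixed-point data, triangulate, realize as a best-reply correspondence of a perturbed equivalent game), but two gaps in the middle two stages are serious enough that the argument as written would fail.

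\textbf{Gap 1: O'Neill/Hopf on a simplex, not a Euclidean neighborhood.} In Stage 1 you propose applying the Hopf Extension Theorem directly to the displacement of a Nash map on $\partial A_i$. O'Neill's construction requires that $A_i$ be a \emph{Euclidean} neighborhood: after extending the displacement $d'$, one sets $f'(x) = x - d'(x)$, and near $\partial\S$ this can leave the simplex with no way to rescue it by rescaling (the paper illustrates this failure explicitly in Figure \ref{boundary}). The actual proof first replaces $\hat f$ by a convex combination with a completely mixed point to push its fixed points off $\partial\S$, then damps $\hat f$ toward the identity on the $U_i$ (via a Urysohn function) so that the displacement has small norm there, and only then invokes Hopf on $U_i \setminus \cup_j (X^{ij}\setminus\partial X^{ij})$ with the extension still small enough that $\mathrm{Id} - d$ stays inside $\S$ and inside $\Graph(\BR_\e)$. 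Your plan skips all of this, so there is no guarantee the resulting $f$ maps into $\S$ or remains an $\e$-best-reply selection.

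\textbf{Gap 2: $f_n$ depends on $\s_n$, a best-reply correspondence cannot.} This is the decisive structural obstacle and your Stage 3 does not touch it. The map $f$ you produce via Hopf Extension will in general have its $n$th coordinate depend on player $n$'s own strategy $\s_n$ (the paper even emphasizes that near the $\s^{ij}$'s, $f_n$ is \emph{exclusively} a function of $\s_n$). No payoff perturbation and no addition of duplicate strategies can make a best-reply correspondence of any game (finite or polytope-form) depend on the replying player's own strategy. The paper's way around this is to pass to a polytope-form game $\tilde G$ in which each player $n$'s strategy set is $\tilde\S_{n,0}\times\tilde\S_{n,1}$ with $\tilde\S_{n,1}$ a payoff-irrelevant copy of player $n{+}1$'s simplex, and then to lift $f_n$ so that it reads $n$'s ``own'' coordinate off of player $n{-}1$'s second factor. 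Without this cyclic-copy device (or something playing the same role), Stage 3 cannot produce a best-reply correspondence that reproduces $f_\T$, and the ``small generic tie-breaking term'' you invoke would not repair this: the problem is not ties on cell boundaries but the wrong functional dependence throughout.

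Beyond these two, your Stage 3 also understates the machinery needed to isolate a \emph{single} equilibrium projecting to each $\s^{ij}$ once duplicate strategies are added (the paper needs bonus terms $\tilde g^0, \tilde g^1$ to align best replies with carriers of $\tilde f$, convex piecewise-linear penalties $\hat g^0, \hat g^1$ to select the carrier's vertices among equivalent mixtures, and a fifth indicator penalty $\hat G^*$ to kill the remaining polytope of equivalent equilibria), and Stage 2's ``simplicial approximation'' needs the fixed points to sit at barycenters of full-dimensional cells with $f$ locally affine there, which must be engineered into $f$ from the start rather than recovered afterward. Fixing Gaps 1 and 2 would require you to essentially rebuild Steps 1--4 of the paper's proof.
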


The following corollary follows immediately from Theorem \ref{main thm} and distinguishes components with a positive index from the others. Below, we say that a game is \textit{generic} if it possesses finitely many equilibria, each with index $+1$ or $-1$. It is known that the set of payoffs of a game for which this property is violated has a strictly lower dimension than the whole payoff space (cf. \cite{KR1994}). 

\begin{corollary}\label{corollary}
	A set $C \subseteq \S$ of strategy profiles is a component of equilibria of $G$ with a positive index iff it is a minimal set (under set inclusion) with respect to the following property: for each $\e > 0$, there exists $\d > 0$ such that for each equivalent game $\bar G$ and each generic $\bar G^\d$ that is an $\d$-perturbation of $\bar G$, there is a $+1$ equilibrium whose projection to $\S$ is within $\e$ of $C$.   
\end{corollary}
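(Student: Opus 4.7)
The plan is to prove both implications via Theorem \ref{main thm} together with the continuity and invariance properties of the fixed-point index. Since the property depends only on $\overline C$ (as $B(C,\e)=B(\overline C,\e)$), I may assume $C$ is closed throughout.

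For the forward direction, assume $C = C_i$ with $c_i > 0$. To verify the property, I combine invariance of the index under addition of duplicate strategies with its continuity under small payoff perturbations: for any equivalent $\bar G$ and any sufficiently small $\d$, the equilibria of any $\d$-perturbation $\bar G^\d$ partition via projection to $\S$ into clusters close to each $C_j$, with the cluster projecting to a neighborhood of $C_j$ having total index $c_j$. In a generic $\bar G^\d$, each equilibrium index is $\pm 1$, so the cluster near $C_i$ contains at least one $+1$ equilibrium whose projection lies within $\e$ of $C_i$. For minimality, suppose $C' \subsetneq C$ also satisfies the property. Pick $x \in C \setminus C'$ and $\e_0 > 0$ with $B(x,\e_0) \cap B(C',\e_0) = \emptyset$; let $\d_0$ be granted by the property for $C'$ at scale $\e_0$. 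Apply Theorem \ref{main thm} with a parameter $\e < \d_0$ chosen small enough that each $V_j$ for $j \neq i$ is disjoint from $B(C',\e_0)$; select $c_i$ distinct points in $V_i \cap B(x,\e_0)$ each with index $+1$, together with any compatible configuration in the other $V_j$'s. The resulting generic $\d_0$-perturbation has no $+1$ equilibrium projecting into $B(C',\e_0)$, contradicting the property for $C'$.

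For the reverse direction, the central claim is: any closed $C$ satisfying the property contains some $C_j$ with $c_j > 0$. Suppose not; then for each $j$ with $c_j > 0$, pick $y_j \in C_j \setminus C$, and by closedness of $C$ a single $\e_0 > 0$ with $B(y_j,\e_0) \cap C = \emptyset$ for all such $j$. For any $\d > 0$, apply Theorem \ref{main thm} with $\e < \d$ and suitably small $V_j$'s, assigning: for each $c_j > 0$, $c_j$ distinct $+1$-indexed points in $V_j \cap B(y_j,\e_0/2)$; for $c_j = 0$, no points (the empty sum equals $0$); for $c_j < 0$, $|c_j|$ points of index $-1$ anywhere in $V_j$. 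The theorem produces an equivalent game and a generic $\d$-perturbation whose $+1$ equilibria all project into $\bigcup_{c_j > 0} B(y_j,\e_0/2)$, which is disjoint from $B(C,\e_0/2)$, contradicting the property. Applied to the minimal $C$, the claim gives some positive-index $C_j \subseteq C$; since $C_j$ itself satisfies the property by the forward direction, minimality forces $C = C_j$.

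The main obstacle is the index-preservation step in the forward direction: precisely stating that for each equivalent $\bar G$ and small $\d$, equilibria of $\bar G^\d$ partition (via projection) by the components of $G$ with each class totaling the corresponding index. This relies on invariance of the fixed-point index under addition of duplicate strategies (so equilibria of $\bar G$ projecting to $C_j$ have total index $c_j$) together with its continuity under small perturbations of the best-reply correspondence. Once this is in hand, Theorem \ref{main thm} supplies the flexibility to place or suppress $+1$ equilibria in prescribed neighborhoods, and the remainder reduces to bookkeeping.
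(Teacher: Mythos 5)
Your overall strategy is the right one, and both directions use Theorem \ref{main thm} exactly as one would expect given that the paper offers no proof beyond ``follows immediately.'' The forward property-verification via index continuity and invariance under equivalence, the forward minimality via placing $+1$ equilibria away from a proper subset, and the reverse-direction claim via placing all $+1$ equilibria near chosen points $y_j \in C_j \setminus C$ are all the natural moves and are essentially correct.

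There is, however, one point you should not wave away: the reduction ``I may assume $C$ is closed throughout.'' Your stated justification (the property depends only on $\overline C$) shows that a set satisfies the property iff its closure does, but it does \emph{not} by itself let you restrict the minimality competition to closed sets. Indeed, read literally the corollary is false for a non-singleton positive-index component $C_j$: picking any non-isolated $x\in C_j$ (and every point of a connected non-singleton $C_j$ is non-isolated), the proper subset $C_j\setminus\{x\}$ has the same closure, hence satisfies the property, so $C_j$ is not minimal among arbitrary sets. The correct reading of ``minimal'' is minimal among \emph{closed} sets, and your forward-direction minimality argument in fact already uses this silently: to choose $\e_0>0$ with $B(x,\e_0)\cap B(C',\e_0)=\emptyset$ you need $d(x,C')>0$, which requires $C'$ closed. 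You should state explicitly that you interpret minimality among closed sets; your argument is then correct. (Two small further notes for completeness: in that same step, $\e_0$ should also be taken smaller than $\min_{j\neq i} d(C',C_j)$ so that each $V_j$, $j\neq i$, can be shrunk to avoid $B(C',\e_0)$; and the $\sigma^{ij}$'s must be completely mixed, which is fine since the relevant intersections are nonempty open sets.)

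The second issue, which you correctly flag as the ``main obstacle,'' is the uniformity of $\d$ over all equivalent games $\bar G$ in the forward property-verification. The point to spell out is: if $\bar\s$ is a Nash equilibrium of a $\d$-perturbation $\bar G^\d$ of an equivalent $\bar G$, then its projection to $\S$ is a $2\d$-equilibrium of $G$; upper semicontinuity of the $\e$-equilibrium correspondence of $G$ then gives the clustering of projected equilibria near the components uniformly in $\bar G$, and additivity plus continuity of the index (together with index invariance under equivalence, Theorem 5 in \cite{GW2005}) gives that the cluster near $C_j$ has total index $c_j$. With these clarifications your proof is sound and matches the intended argument.
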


Corollary \ref{corollary} shows that $+1$ equilibria always arise in generic perturbations of payoffs of a finite game, locally around a positive index component. In contrast, when negative index components are considered, there are always generic payoff perturbations for which all the equilibria arising locally have $-1$ index. The relevance of this result is that $+1$ equilibria possess a number of interesting properties their $-1$ counterparts do not: \cite{H2000} shows that the $+1$ equilibria of a generic game are the only ones that are dynamically stable for at least one Nash dynamics.  \cite{AM2016} argues for the selection of +1 index equilibria, on both theoretical as well as experimental grounds. \cite{HP2023} uses the $+1$ index selection criterion to refine equilibria in signaling games, and in \cite{GLP2023}, we show that they are the only equilibria that can be made unique in a larger game obtained by adding inferior replies to them.

\section{A Guide to the Proof of Theorem \ref{main thm}}\label{sec key ideas} 
This section provides a rough sketch of the proof of Theorem \ref{main thm}, which is presented formally in the next section. The main objective here is to highlight the key ideas involved in modifying O'Neill's theorem to construct a map that is game-theoretically meaningful, i.e., whose fixed points are Nash equilibria of a finite game. 

We work with the best-reply correspondence, rather than a Nash map.  The main reason is that for any approximation of the best-reply correspondence by a continuous function $f$, we have immediate information on the degree of suboptimality of $f(\s)$ against $\s$. More precisely, fix $\e > 0$ as in the statement of Theorem \ref{main thm}. Let $\BR_\e$ be the $\e$-best reply correspondence, i.e., $\BR_\e(\s)$ is the set of all profiles $\t$ such that for $n$, $\t_n$ yields a payoff against $\s$ that is strictly within $\e$ of the best payoff achievable against $\s$. The graph of the $\e$-best-reply correspondence, denoted $\Graph(\BR_\e)$, is a neighborhood of the graph of $\BR$.   For any continuous function $f$ whose graph is contained in $\Graph(\BR_\e)$, $f(\s)$ is an $\e$-best-reply to $\s$.  

The sets $V_i$ identified in the statement of the theorem are neighborhoods of the components that are pairwise disjoint and such that for any $\s$ that belongs to some $V_i$, $(\s, \s) \in \Graph(\BR_\e)$.  Now we fix points $\s^{ij}$ as in Theorem \ref{main thm}.  Our first task, carried out in Step 2, is to obtain a version of O'Neill's result by constructing an approximation of $\BR$ that has the $\s^{ij}$'s as its fixed points and with the assigned indices.  To see why we cannot directly appeal to O'Neill's result, let us recapitulate what he does. O'Neill considers the case of a map $f: A \to X$ where the set of fixed points is a connected set $C$ and $A$ is a Euclidean neighborhood of $C$.  Given finitely many points $x_1, \ldots, x_k$ and integers $r_i$ adding up to the index of $C$, O'Neill constructs a  nearby map $f'$ with exactly the $x_i$'s as the fixed points and with the given indices $r_i$, by first  perturbing the displacement map $d$ of $f$ to a map $d'$ that is close by, and then he constructs $f'$ as the map $x- d'(x)$.  Thus, compared to our set up, there are two principal differences.  Firstly, O'Neill is working with a function and not a correspondence.  Secondly, and more importantly, it is crucial to his proof that $A$ be a Euclidean neighborhood of $C$, as the following example illustrates, and this is a property that we cannot assume.

\begin{example}
Consider a map $f$ from a 2-simplex to itself. The displacement map of $f$ is defined as $d(x) = x - f(x)$. Figure \ref{interior} shows the case where a component of fixed points in red lies in the interior of the simplex, inside the Euclidean neighborhood $V$ in pale blue. As in O'Neill's proof, we would like to obtain a map $f'$ close to $f$, without any fixed points in $V$ and such that it is equal to $f$ outside of $V$. For that, the Hopf Extension Theorem is applied to the \textit{displacement map} $d$ in order to obtain a new map $d'$ which is equal to $d$ outside of the neighborhood, new and without fixed points inside of the neighborhood. After obtaining $d'$, the map defined by $x - d'(x)$ cannot be guaranteed to be inside the $2$-simplex. However, it is possible, with the help of Urysohn's Lemma, to scale $d'(x)$ by a strictly positive function $\l(\cdot)$ so as to have $f'(x) = x - \l(x) d'(x)$ inside the simplex and close to $f$, as Figure \ref{interior} indicates. 

\begin{figure}
\centering
\caption{The displacement of point $x$ under the extended map $d'$ and map $f'$}\label{interior}
\begin{tikzpicture}[scale=4]

\coordinate (A) at (0, 0);
\coordinate (B) at (1, 0);
\coordinate (C) at (0.5, 0.866);  

\coordinate (D) at (1,0.5);
\coordinate (E) at (0.48, 0.35);

\draw[thick] (A) -- (B) -- (C) -- cycle;

\filldraw[black] (A) circle (0.5pt) node[anchor=north east] {A};
\filldraw[black] (B) circle (0.5pt) node[anchor=north west] {B};
\filldraw[black] (C) circle (0.5pt) node[anchor=south] {C};

\coordinate (G) at (barycentric cs:A=1,B=1,C=1);

\filldraw[blue!20, opacity=0.7] (G) circle (0.2);  

\coordinate (P1) at (0.47, 0.43);  
\coordinate (P2) at (0.5, 0.2);   

\draw[red, thick] (P1) -- (P2);

\draw[blue, dashed] (D) -- (E);

\coordinate (P3) at (0.7, 0.415); 
\filldraw[blue] (P3) circle (0.3pt) node[anchor=north west] {$f'(x)$};

\filldraw[red] (E) circle (0.3pt) node[anchor=south] {$x$};
\filldraw[blue] (D) circle (0.3pt) node[anchor=south] {$x - d'(x)$};

\end{tikzpicture}
\end{figure}
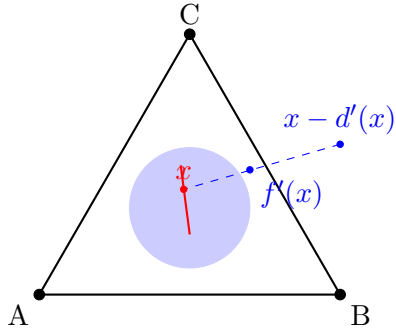

However, if the component is in the boundary of the simplex like in Figure \ref{boundary}, the reasoning employed for the interior case does not apply, i.e., O'Neill's construction does not hold. As the figure suggests, in the case $ x$ is mapped to the point $x - d'(x)$ as indicated, no small positive scaling of $d'(x)$ by $\lambda>0$ guarantees that $ x - \l d'(x)$ is inside of the simplex.

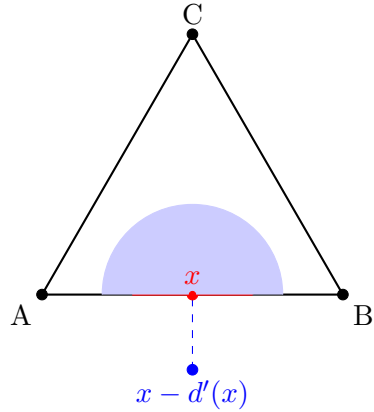
\begin{figure} 
\caption{A component of fixed points in the boundary}\label{boundary}
\begin{tikzpicture}[scale=4]

\coordinate (A) at (0, 0);
\coordinate (B) at (1, 0);
\coordinate (C) at (0.5, 0.866);  

\coordinate (D) at (0.5, -0.25);  
\coordinate (E) at (0.5,0);

\draw[thick] (A) -- (B) -- (C) -- cycle;

\filldraw[black] (A) circle (0.5pt) node[anchor=north east] {A};
\filldraw[black] (B) circle (0.5pt) node[anchor=north west] {B};
\filldraw[black] (C) circle (0.5pt) node[anchor=south] {C};
\filldraw[blue] (D) circle (0.5pt) node[anchor=north] {$x- d'(x)$};
\coordinate (P1) at (0.3, 0);  
\coordinate (P2) at (0.7, 0);  

\draw[red, thick] (P1) -- (P2);
\draw[blue, dashed] (D) -- (E);

\filldraw[red] (P1) circle (0pt);
\filldraw[red] (P2) circle (0pt);
\filldraw[red] (E) circle (0.3pt) node[anchor=south] {$x$};
\filldraw[red] (E) circle (0.5pt);

\coordinate (Mid) at ($(P1)!0.5!(P2)$);  

\def\radius{0.3}

\begin{scope}
\clip (A) -- (B) -- (C) -- cycle;  
\filldraw[blue!20, opacity=0.7] (Mid) circle (\radius);  
\end{scope}

\filldraw[red] (E) circle (0.3pt) node[anchor=south] {$x$};

\end{tikzpicture}
\end{figure}
\end{example}

In Step 2, we approximate $\BR$ by a function $\hat f$ whose graph is in $\Graph(\BR_\e)$ and whose fixed points are in $\cup_i V_i \backslash (\partial \S \cup \partial V_i)$. Now, we can proceed as in O'Neill to modify the function inside the $V_i$'s to obtain a function $f$ whose graph is still in $\Graph(\BR_\e)$ but whose fixed points are the $\s^{ij}$'s with index $r^{ij}$. 

Since we want to use $f$ to mimic a best-reply correspondence, the analysis is made easier if the function $f$ is locally affine around the $\s^{ij}$'s. This yields us the structure of a polymatrix game (one where the payoffs of a player are linear in the strategies of his opponents) at least locally around the equilibria of the game $\hat G^\e$, specified in the theorem.  The job of Step 1 is to ensure that it is possible to make $f$ locally affine. Local affineness, also, makes the task of constructing fixed points of $f$ with indices $+1$ or $-1$ simple, as one can define $f$ to be an affine homeomorphism locally with the corresponding $\s^{ij}$'s as fixed points.

We would like to view $f$ as a proxy for a best-reply correspondence of a perturbed game, as that would indicate how to perturb payoffs. There are two hurdles to doing so, which we deal with sequentially in our proof. The first problem is that the function $f$ maps into $\S \backslash \partial \S$ and therefore if $f_n(\s)$ is a ``best-reply'' to $\s_{-n}$, then every pure strategy of player $n$ is also a best reply to $\s_{-n}$.  The way around this problem is to say that $f(\s)$ is equivalent to a  strategy in an equivalent game $\tilde G$, thus avoiding making everything a best-reply.  Constructing a finite equivalent game, then, requires us to introduce finitely many points in each $\S_n$ as duplicate pure strategies and making $f(\s)$ mixtures over these points.  Geometrically speaking, this leads us to a triangulation $\T_n$ of $\S_n$ as carried out in Step 3. For each $\s$, player $n$'s ``best replies'' to $\s$ are among the vertices of the simplex that contains $f_n(\s)$ in its interior. As Example \ref{proxy} below demonstrates, resolving the first problem yields a new strategy space, call it $\tilde{\S}_n$, for each player $n$, in which each pure strategy is labeled at the vertex of the triangulation.

\begin{example}\label{proxy}Figure \ref{fproxy} illustrates the procedure we employ to use $f$ as a proxy for a best-reply correspondence in an equivalent game. The 2-simplex ABC (see (a)) represents the mixed strategy set of player $1$ in a finite game. Note that $f_1(\s) \neq \s_1$, so $\s = (\s_1, \s_2)$ is not a fixed point of $f$. We introduce a duplicate strategy $G$ in the mixed strategy set and from that produce a triangulation.  In this triangulation, the strategy $\s_1$ has ABG as a carrier, and the strategy $f_1(\s)$ has ACG as a carrier. In (d), we declare each vertex of the triangulation as a pure strategy in an equivalent strategy set, that is, a tetrahedron (recall the construction of the equivalent game illustrated in Figure \ref{equivalent}). Viewing $f_1(\s)$ and $\s_1$ as points in the tetrahedron, each of these strategies have distinct supports. If a perturbation of payoffs in the equivalent game can be made so that the support of $f_1(\s)$ can be made the set of pure best-replies to $\s_1$ (which is what is effectively meant by ``proxy''), then $\s_1$ cannot be an equilibrium in this perturbed equivalent game, since its support contains sub-optimal replies. 

\begin{figure}
\caption{Using $f$ to construct a best-reply proxy in an equivalent game}\label{fproxy}
\centering
\medskip
\begin{minipage}{0.3\textwidth}
    \centering
    \begin{tikzpicture}[scale=0.8] 
    \coordinate (A) at (0, 0);
    \coordinate (B) at (4, 0);
    \coordinate (C) at (2, 3.5);  
    \coordinate (S) at (2, 0.5);
    \coordinate (G) at (barycentric cs:A=1,B=1,C=1);  
    \coordinate (F1) at (barycentric cs:A=1,G=1,C=1); 

    \draw[thick] (A) -- (B) -- (C) -- cycle;

    \fill[black] (A) circle (1.5pt) node[anchor=north east] {A};
    \fill[black] (B) circle (1.5pt) node[anchor=north west] {B};
    \fill[black] (C) circle (1.5pt) node[anchor=south] {C};

    \fill[red] (F1) circle (1.5pt) node[anchor=south] {$f_1(\sigma)$};
    \fill[blue] (S) circle (1.5pt) node[anchor=south] {$\sigma_1$};
    \end{tikzpicture}
    \caption*{(a)}
\end{minipage}%
\hfill
\begin{minipage}{0.24\textwidth}
    \centering
    \begin{tikzpicture}[scale=0.8] 
    \coordinate (A) at (0, 0);
    \coordinate (B) at (4, 0);
    \coordinate (C) at (2, 3.5);  
    \coordinate (S) at (2, 0.5);
    \coordinate (G) at (barycentric cs:A=1,B=1,C=1);  
    \coordinate (F1) at (barycentric cs:A=1,G=1,C=1); 

    \draw[thick] (A) -- (B) -- (C) -- cycle;

    \fill[black] (A) circle (1.5pt) node[anchor=north east] {A};
    \fill[black] (B) circle (1.5pt) node[anchor=north west] {B};
    \fill[black] (C) circle (1.5pt) node[anchor=south] {C};
    \fill[black] (G) circle (1.5pt) node[anchor=south] {G};

    \fill[red] (F1) circle (1.5pt) node[anchor=south] {$f_1(\sigma)$};
    \fill[blue] (S) circle (1.5pt) node[anchor=south] {$\sigma_1$};
    \end{tikzpicture}
    \caption*{(b)}
\end{minipage}%
\hfill
\begin{minipage}{0.3\textwidth}
    \centering
    \begin{tikzpicture}[scale=0.8] 
    \coordinate (A) at (0, 0);
    \coordinate (B) at (4, 0);
    \coordinate (C) at (2, 3.5);  
    \coordinate (S) at (2, 0.5);
    \coordinate (G) at (barycentric cs:A=1,B=1,C=1);  
    \coordinate (F1) at (barycentric cs:A=1,G=1,C=1); 

    \filldraw[fill=red!40, opacity=0.3] (A) -- (C) -- (G) -- cycle; 
    \filldraw[fill=blue!20, opacity=0.3] (A) -- (B) -- (G) -- cycle; 

    \draw[thick] (A) -- (B) -- (C) -- cycle;

    \draw (G) -- (A);
    \draw (G) -- (B);
    \draw (G) -- (C);

    \fill[black] (A) circle (1.5pt) node[anchor=north east] {A};
    \fill[black] (B) circle (1.5pt) node[anchor=north west] {B};
    \fill[black] (C) circle (1.5pt) node[anchor=south] {C};
    \fill[black] (G) circle (1.5pt) node[anchor=south] {G};

    \fill[red] (F1) circle (1.5pt) node[anchor=south] {$f_1(\sigma)$};
    \fill[blue] (S) circle (2 pt) node[anchor=south] {$\sigma_1$};
    \end{tikzpicture}
    \caption*{(c)}
\end{minipage}%
\hfill
\begin{minipage}{0.24\textwidth}
    \centering
    \begin{tikzpicture}[scale=2] 
    \coordinate (A) at (0, 0);
    \coordinate (C) at (1.2, 0); 
    \coordinate (G) at (0.5, 1.5); 
    \coordinate (B) at (1.8, 0.5); 
    \coordinate (S) at (1, 0.8);

    \coordinate (F1) at ($(A)!1/3!(C)!1/3!(G)$); 

    \filldraw[fill=red!40, opacity=0.3] (A) -- (C) -- (G) -- cycle; 
    \filldraw[fill=blue!20, opacity=0.3] (A) -- (B) -- (G) -- cycle; 

    \draw[thick] (A) -- (B) -- (C) -- cycle;

    \draw[thick] (G) -- (A);
    \draw[thick] (G) -- (B);
    \draw[thick] (G) -- (C);

    \filldraw[black] (A) circle (0.5pt) node[anchor=north east] {A};
    \filldraw[black] (B) circle (0.5pt) node[anchor=west] {B};
    \filldraw[black] (C) circle (0.5pt) node[anchor=south] {C};
    \filldraw[black] (G) circle (0.5pt) node[anchor=west] {G};

    \filldraw[red] (F1) circle (0.5pt) node[anchor=north] {$f_1(\sigma)$};
    \filldraw[blue] (S) circle (0.5pt) node[anchor=north] {$\sigma_1$};
    \end{tikzpicture}
    \caption*{(d)}
\end{minipage}

\end{figure}

\end{example}

The second problem is that, unlike a best-reply correspondence, the coordinate function $f_n$ of player $n$ could depend on his strategy, $\s_n$.\footnote{One could say that the problem is more acute around the $\s^{ij}$'s where $f_n$ is exclusively a function of $\s_n$!} In any game, for any profile $\s$ of mixed strategies, the value of the best-reply correspondence of player $n$ does not depend on player $n$'s strategy $\s_{n}$. When generating the proxy for a best reply correspondence from the map $f$, recall that this map was obtained by approximating the best-reply correspondence of the original game by a continuous map and then modifying it through the use of the Hopf Extension Theorem. The resulting map $f$ unfortunately cannot be guaranteed to inherit that property of the best-reply correspondence. This impairs the possibility of converting  $f_n(\cdot)$ into a best-reply correspondence of an equivalent strategy set of player $n$ along the lines we illustrated in Example \ref{proxy}.

To deal with this problem, we consider an equivalent polytope-form  game $\tilde G$ in which each player $n$ chooses a strategy in $\tilde \S_n$ and one in $\tilde \S_{n+1}$ (where $n+1 = 1$ if $n = N$), and the strategy space of player $n$ is $\tilde \S_n \times \tilde \S_{n+1}$. The set $\tilde \S_n \times \tilde \S_{n+1}$ is a (non-simplicial) polytope. In order to define payoffs, we simply consider the first coordinates of each player and use the payoffs of the original game $G$: denote by $(\tilde \s_{n,0}, \tilde \s_{n,1}) \in \tilde \S_n \times \tilde \S_{n+1}$ a general strategy of player $n$ in $\tilde G$. The payoff from a profile $(\tilde \s_{n,0}, \tilde \s_{n,1})_{n \in \N}$ in $\tilde G$ to player $n$ is defined as $\tilde {G}_n((\tilde \s_{n,0},\tilde \s_{n,1})_{n \in \N}) = G_n((\tilde \s_{n,0})_{n \in \N})$. The construction is such that the choice by $n$ of a strategy in $\tilde \S_{n+1}$ is payoff-irrelevant for player $n$.  

Now the function $f$ can be ``lifted'' to this strategy space by making $f_n$ depend on $\s_{-n}$ and player $n-1$'s choice in $\S_n$ (where $n-1 = N$ if $n = 1$): since player $n$'s coordinate function $f_n$ might depend on $\S_n$, we will use player $n-1$'s copy of those coordinates.  We define a new map $\tilde f_n$ over the cartesian product of strategy sets of the polytope-form game (with the exception of player $n$'s) to $\tilde \S_n \times \tilde \S_{n+1}$ by $\tilde f_n ((\tilde \s_{m,0}, \tilde \s_{m,1})_{m \neq n})  = f_n(\tilde \s_{-n,0}, \tilde \s_{n-1,1}) \times \tilde \s_{n+1,0}$.\footnote{The precise definition of this map is slightly more nuanced than what is discussed here, but the main idea for its definition is to avoid the dependency on player $n$'s choices.}

Combining this idea along with the one illustrated in Example \ref{proxy}, we surpass the two main hurdles to obtaining a function that can be used as a proxy for a best-reply correspondence: steps 1 to 3 are preparatory steps to ensure the existence of a game equivalent to the original one, where a proxy for a best-reply correspondence can be constructed so as to satisfy the requirements of our main result (with precisely the $\s^{ij}$'s as fixed points and with the correct indices). Step 4 presents this equivalent game explicitly. 


The remaining steps of the proof show the precise sense in which $\tilde f$ is a proxy for a perturbed best-reply correspondence: they are responsible for constructing the payoff perturbation of the equivalent game whose best-reply correspondence will essentially match that proxy. 

This construction has two main phases: in the first one conducted in steps 5 and 6, for each $\tilde \s$, we define a payoff perturbation of the polytope-form game $\tilde G$ (constructed in step 4) such that for each $n$ the best replies to $\tilde \s$ are among the vertices of the carrier of $\tilde f_n(\tilde \s)$. The perturbation comes in the form of a bonus $\tilde g_{n}^0(\tilde \s)$ for the first coordinate of $n$'s strategy and $\tilde g_n^1(\tilde \s)$ for the second coordinate. For each $\tilde \s$, the finite, perturbed game is denoted by $\tilde{G} \oplus \tilde g^0(\tilde \s) \oplus \tilde g^1(\tilde \s)$. Step 5 details this construction. Step 6 analyses the relevant properties of the ``best-reply" correspondence that assigns to each $\tilde \s$ the set of best replies to it in the perturbed game  $\tilde{G} \oplus \tilde g^0(\tilde \s) \oplus \tilde g^1(\tilde \s)$: note that the payoff perturbations in Step 5 depend on the strategy profile $\tilde{\s}$ and therefore may be different depending on which profile is played. Hence, though the correspondence that assigns to each $\tilde \s$ the set of best replies to it in $\tilde{G} \oplus \tilde g^0(\tilde \s) \oplus \tilde g^1(\tilde \s)$ matches our proxy $\tilde f$, the correspondence is not (yet) the best-reply correspondence of a \textit{finite} game.  

The final phase of the construction uses the maps $\tilde g^0$ and $\tilde g^1$ to finally construct a finite game with the required properties. This phase is the most technical part of the proof and providing a thorough intuition for it cannot be properly done without introducing an inordinate amount of notations and definitions. In spite of that, we would like to provide some insight into the final part of the argument. 

The perturbation  $\tilde g^0(\cdot)$ maps $\tilde g^1(\cdot)$ are simply continuous maps defined over the strategy space of the polytope-form game $\tilde G$, namely, $\prod_{n \in \N} (\tilde{\S}_n \times \tilde{\S}_{n+1})$. If maps $\tilde g^0(\cdot)$ and $\tilde g^1(\cdot)$ were in addition multilinear maps of the strategy profiles of this game, then they would immediately define payoff perturbations allowing for the definition of a perturbed polytope-form game. Step 7 provides the preliminary work for generating approximations of the maps $\tilde g^0(\cdot)$ and $\tilde g^1(\cdot)$ so as to be able to define the desired polytope form. By triangulating the strategy sets of each player with sufficiently small diameter, we approximate the maps $\tilde g^0(\cdot)$ and $\tilde g^1(\cdot)$ by maps which are locally multilinear in each product of simplices of the triangulation. Each newly introduced vertex of the triangulation is declared to be a new strategy of an equivalent game $\hat G$ to $\tilde G$---in the same fashion as exemplified in Figure \ref{fproxy}---and this game can now be perturbed so as to achieve the required properties.

Step 8 defines the perturbation of the game $\hat G$, which has five components.  The first two ($\hat G^0$ and $\hat G^1$) correspond to the perturbations that are induced by the multiaffine versions of the maps $\tilde g^0$ and $\tilde g^1$.  Two other perturbations ($\hat g^0$ and $\hat g^1$) are required to ensure that optimal strategies only mix over points that are vertices of a cell of the triangulation (this is illustrated in more detail in Example \ref{final}). The fifth and final component of payoff perturbations is required to ensure uniqueness of the equilibria (projecting to $\sigma^{ij}$'s). As explained, passing to equivalent games involves introducing duplicate strategies, which may result in sets of mixed strategies that are equivalent to a point $\s^{ij}$ in the original game. One therefore needs to select one among the many profiles of strategies included in the equivalent set to be the isolated equilibrium of the equivalent game - and the payoff perturbations are constructed to achieve this goal. 
Step 9 completes the proof of the main theorem by showing that this perturbed game is the finite normal-form game we were after.

Since the proof involves a lot of notation, some of which are step-specific and others not, we end this section with a table of the notations that are used across multiple steps. The table should be used as a consultation tool throughout the reading of the proof, in case some notation that has already been encountered is forgotten.   \begin{center}
	\renewcommand{\arraystretch}{1.5}
	\captionof{table}{Table of Notations}
	\begin{tabular}{ | m{5cm} | m{11cm} | } 
		\hline
		\textbf{Notation} & \textbf{Reference and/or Definition} \\
		\hline
		$Y^{ij}_n$ & Simplex contained in $\S_n$, with $\s^{ij}_n$ as barycenter (see Step 1)  \\ 
		\hline
		$X^{ij}_n$ & Simplex contained in $Y^{ij}_n$, with $\s^{ij}_n$ as barycenter (see Step 1) \\ 
		\hline
		$f^{ij}_n$ & $f^{ij}_n: X^{ij}_n \to Y^{ij}_n$; see Lemma \ref{lem f ij}  \\ 
		\hline
		$\BR$ & Best-reply correspondence of game $G$ \\ 
		\hline
		$\BR_{\e}$ & $\e$-Best-reply correspondence of game $G$ (see Step 1) \\ 
		\hline
		$f$ & $f: \S \to \S$ (See Lemma \ref{lem function f}, Step 2) \\
		\hline
		$\T_n$ & Triangulation of $\S_n$ (see Lemma \ref{lem triangulation T}, Step 3) \\
		\hline 
		$T^{ij}_n$ & Simplex of $\T_n$ contained in $X^{ij}_n$, containing $\s^{ij}_n$ as its barycenter  (see Lemma \ref{lem triangulation T}, Step 3) \\
		\hline
		$\tilde S_{n,0}$ & Vertices of the triangulation $\T_n$ (see Step 4) \\ 
		\hline
		$\tilde S_{n,1}$ & Vertices of the triangulation $\T_{n+1}$ (see Step 4) \\
		\hline
		$\tilde \S_{n,k}$ & $\tilde \S_{n,k} \equiv \D(\tilde S_{n,k}), k =0,1$ (See Step 4) \\
		\hline
		$\tilde G$ & Polytope-form game where the strategy set of each player $n$ is $\tilde \S_n \equiv \tilde \S_{n,0} \times \tilde \S_{n,1}$ (see Step 4) \\
		\hline
		$\tilde BR_\e$ & $\e$-Best-reply correspondence of game $\tilde G$ (see Step 4) \\
		\hline
		$\tilde \b_{n,k}$ & $\tilde \b_{n,k}: \S_n \to \tilde \S_{n,k}$ is the barycentric map: it associates to a point $\s_n$ its barycentric coordinates in the triangulation $\T_n$, viewed as a face of $\tilde \S_{n,k}, k=0,1$ (see Step 4) \\
		\hline
		$\phi_{n,k}$ &  $\phi_{n,k}: \tilde \S_{n,k} \to \S_{n+k}$ projects the point $\tilde \s_{n,k}$ to its location in $\S_{n+k}, k=0,1$ (see Step 4) \\
		\hline
		$\tilde \S^*_{n,k}$ & $\tilde \S^*_{n,k} \equiv \tilde \b_{n,k}(\S_{n+k}), k=0,1$ (see step 4) \\
		\hline
		$\mathcal{S}^*_{n,k}$ & Subcomplex of the face complex of $\D(\tilde S_{n,k})$ whose space is $\tilde \S^*_{n,k}, k=0,1$ (see Step 4) \\
		\hline
		$\tilde T^{ij}_{n,k}$ & $ \tilde T^{ij}_{n,k} \equiv \tilde \b_{n,k}((f^{ij}_{n+k})^{-1}(T^{ij}_{n+k})), k=0,1$ (see Step 4) \\
		\hline
	
		\end{tabular}
\end{center}
\begin{center}
\renewcommand{\arraystretch}{1.5}
\begin{tabular}{ | m{5cm} | m{11cm} | } 

\hline
\textbf{Notation} & \textbf{Reference and/or Definition} \\
\hline
$\tilde f_{n,0}$ & $$\tilde f_{n,0}: \tilde \S_{-n} \to \tilde \S_{n,0},$$  $$\tilde f_{n,0}(\tilde \s_{-n}) \equiv \tilde \beta_{n,0}(f_n(\phi_{-n, 0}(\tilde \s_{-n,0}), \phi_{n-1, 1}(\tilde \s_{n-1})))$$  (see Step 4 and Lemma \ref{lem function f tilde}) \\
		\hline
		$\tilde f_{n,1}$ & $\tilde f_{n,1}(\tilde \s_{n+1}) \equiv \tilde \b_{n,1}(\phi_{n+1,0}(\tilde \s_{n+1}))$ (see Step 4 and Lemma \ref{lem function f tilde}) \\
		\hline
		$\tilde g^0(\cdot)$ & $\tilde g^0: \tilde \S \to \prod_n \Re^{\tilde S_{n,0}}$ (see Step 5) \\
		\hline
		$\tilde g^1(\cdot)$ & $\tilde g^1: \tilde \S \to \prod_n \Re^{\tilde S_{n,1}}$ (see Step 5) \\
		\hline
		$r_n(\cdot)$ & $r_n: \S_{-n} \to \Re$ returns the best payoff player $n$ obtains against $\s_{-n}$ (see Step 5) \\
		\hline
		$\tilde \T_n$ & Triangulation of $\D(\tilde S_{n,0}) = \D(\tilde S_{n-1,1})$ (see Step 7) \\
		\hline
		$\hat S_{n,0}$ & $\hat S_{n,0}$ the set of vertices of $\tilde \T_n$, $\hat S_{n,0} = \hat S_{n-1,1}$ (see Step 8) \\
		\hline
		$\hat T^{ij}_n$ & Carrier of $\tilde \s^{ij}_n$ in $\tilde \T_n$  (see Step 7 and Lemma \ref{lem triangulation T tilde}) \\
		\hline
		$\hat \phi_{n,k}$ & $\hat \phi_{n,k}: \hat \S_{n,k} \to \tilde \S_{n,k}$ projects strategies from $\hat \S_{n,k}$ to $\tilde\S_{n,k}$ (see Step 8) \\ 
		\hline
		$\hat \mu_{n,k}$ & $\hat \mu_{n,k}: \hat \S_n \to \hat \S_{n,k}$ computes the marginal of $\hat \s_n$ over $\S_{n,k}$ (see Step 8) \\
		\hline 
		$\hat G$ & Normal-form game with pure strategy set of player $n$ equals to $\hat S_n \equiv \hat S_{n,0} \times \hat S_{n,1}$ and payoff $\hat G_n(\hat \s) = \hat G_n(\hat \phi \circ \hat \mu (\hat \s))$ (see Step 8)\\
		\hline
		
	\end{tabular}
\end{center}

\medskip

\section{Proof of Theorem \ref{main thm}}\label{sec proof}

As in the statement of the theorem, we have the following notation, used uniformly throughout the section. For each component  $C_i$, $i = 1, \ldots, k$, we have a finite (possibly empty) set $J_i$, and a finite set of integers  $r^{ij} \in \{ \, +1, -1\, \}$,  $j \in J_i$, such that $\sum_{j \in J_i} r^{ij} = c_i$, with the sum being zero if $J_i$ is an empty set.  

\medskip

\noindent {\bf Step 1.} Fix a finite number of completely-mixed strategy profiles $\s^{ij}$, $i = 1, \ldots, k$ and $j \in J_i$ such that for each $n$ the $\s_n^{ij}$'s are distinct mixed strategies.  Locally around each $\s^{ij}$, we construct an affine map that has $\s^{ij}$ as its unique fixed point and assigns it the index $r^{ij}$.  In Step 2, these maps are then extended to the whole of $\S$ (not affinely though) without introducing additional fixed points. 

For each $n$, and $i,j$, let $X_n^{ij}$ and $Y_n^{ij}$ be two full-dimensional simplices in $\S_n$ with $\s_n^{ij}$ as their barycenters and $X_n^{ij}\subseteq \text{int}(Y_n^{ij})$.  Let $X^{ij} = \prod_n X_n^{ij}$ and $Y^{ij} = \prod_n Y_n^{ij}$. We will assume that the simplices $Y_n^{ij}$ are small enough that the $Y^{ij}$'s are pairwise disjoint. For now, these are the only resctrictions we choose for $Y_n^{ij}$ and $X_n^{ij}$ in order to obtain Lemma \ref{lem f ij}. In Step 2, we will specify their choice further.
  
\begin{lemma}\label{lem f ij}
	For each $i,j$ and $n$, there is an affine homeomorphism $f_n^{ij}: X_n^{ij} \to Y_n^{ij}$ such that:
	\begin{enumerate}
		\item $\s_n^{ij}$ is the unique fixed point of $f_n^{ij}$;
		\item letting $f^{ij} = \prod_n f_n^{ij}: X^{ij} \to Y^{ij}$, the index of $\s^{ij}$ under $f^{ij}$ is $r^{ij}$;
		\item for all $\s_n \neq \s^{ij}_n$, $\Vert f^{ij}_n (\s_n) - \s^{ij}_n \Vert > 0$.
	\end{enumerate}  	
\end{lemma}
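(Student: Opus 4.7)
The strategy is to construct each $f_n^{ij}$ as the restriction of an affine map of the form $\s_n \mapsto \s_n^{ij} + A_n^{ij}(\s_n - \s_n^{ij})$, where $A_n^{ij}$ is a linear automorphism of the tangent space $T_n := \mathrm{aff}(\S_n) - \s_n^{ij}$ carrying $X_n^{ij} - \s_n^{ij}$ bijectively onto $Y_n^{ij} - \s_n^{ij}$. Because each $\s^{ij}$ is completely mixed, $\s_n^{ij}$ lies in the relative interior of $\S_n$, so I may tacitly take $X_n^{ij}$ to be the contraction of $Y_n^{ij}$ by a factor $1/c$ about $\s_n^{ij}$ for some fixed $c>1$; this is compatible with the ``basic restrictions'' of Step~1. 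Since any affine bijection between two $m_n$-simplices must send vertices to vertices (vertices are the extreme points of a simplex), specifying $A_n^{ij}$ reduces to fixing a bijection between the vertex sets of $X_n^{ij}$ and $Y_n^{ij}$. Requiring in addition that $1 \notin \mathrm{spec}(A_n^{ij})$ makes $\s_n^{ij}$ the unique fixed point, giving property~(1); property~(3) is then automatic because $A_n^{ij}$ is a bijection, so $\s_n^{ij}$ is its own unique preimage.

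For the index, $f^{ij}$ is affine with unique interior fixed point $\s^{ij}$, hence its index equals $\mathrm{sgn}\,\det(I - Df^{ij}) = \prod_n \mathrm{sgn}\,\det(I_{T_n} - A_n^{ij})$. To realise property~(2) I will provide, for each $n$ with $|S_n| \geq 2$, two canonical choices of $A_n^{ij}$ with opposite displacement signs. The first is the scalar dilation $A_n^{ij} = c\cdot\mathrm{Id}$, obtained by the identity labeling of vertices; its displacement has determinant $(1-c)^{m_n}$, with sign $(-1)^{m_n}$ (where $m_n := \dim \S_n$). The second is a ``vertex-swap'' map: fix a transposition $\pi$ of two vertices of $Y_n^{ij}$ and let $A_n^{ij}$ send the tangent vertex $(w_k - \s_n^{ij})/c$ to $w_{\pi(k)} - \s_n^{ij}$. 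In a basis of $T_n$ adapted to $\pi$, the matrix $I - A_n^{ij}$ is block-diagonal, with one $2\times 2$ block of the form $\bigl(\begin{smallmatrix} 1 & -c \\ -c & 1 \end{smallmatrix}\bigr)$ (determinant $1 - c^2$) and the remaining entries equal to $(1-c)$, so the displacement sign is $(-1)^{m_n+1}$, the opposite of the dilation's.

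With both signs available for every nontrivial coordinate, property~(2) reduces to bookkeeping: pick $\epsilon_n \in \{\pm 1\}$ with $\prod_n \epsilon_n = r^{ij}$ (e.g.\ put the entire sign on a single player with $|S_n|\geq 2$ and use the dilation for the other players) and take $A_n^{ij}$ accordingly. The resulting $f^{ij} = \prod_n f_n^{ij}$ then fulfils all three requirements of the lemma. The only point demanding care is the sign computation for the vertex-swap map, which requires checking that the map genuinely carries the tangent simplex of $X_n^{ij}$ \emph{onto} (not merely into) that of $Y_n^{ij}$ and that its displacement determinant has the claimed sign; both reduce to elementary linear algebra using that a permutation of the vertices of a simplex extends uniquely to a linear automorphism preserving the centred vertex set. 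This is the only non-formal step; the remainder is definitional.
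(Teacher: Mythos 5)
Your proof is correct and gives a more self-contained argument than the paper, which dispatches this lemma by citing \cite{GP2022} for the single-factor case and then invoking the multiplicative property of the index. You replace the citation with an explicit construction: every affine bijection between two simplices with a common barycenter fixes that barycenter and is determined by a bijection of vertex sets, so taking $X_n^{ij}$ to be the $1/c$-contraction of $Y_n^{ij}$ (a harmless normalization, since Step~1 imposes no restriction beyond $X_n^{ij}\subset\text{int}(Y_n^{ij})$ and the shared barycenter, and in any case the sign of $\det(I-A_n^{ij})$ depends only on the chosen vertex bijection by a homotopy argument) reduces everything to linear algebra on the tangent space. The two canonical choices you identify---dilation with displacement sign $(-1)^{m_n}$, and a single vertex transposition with sign $(-1)^{m_n+1}$---give opposite signs, and the product over players then yields either $\pm 1$ by toggling one player's choice, matching the paper's ``assign the sign to player~1 and use $+1$ elsewhere'' bookkeeping. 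Your proof of (3) (bijectivity implies the only preimage of $\s_n^{ij}$ is $\s_n^{ij}$) is exactly right, and your observation that (1) follows from $1\notin\text{spec}(A_n^{ij})$ is in fact automatic for \emph{any} affine bijection $X_n^{ij}\to Y_n^{ij}$ with $X_n^{ij}\subset\text{int}(Y_n^{ij})$, since such a map carries boundary to boundary, so a fixed direction would force a boundary point of $X_n^{ij}$ to lie on $\partial Y_n^{ij}$. The only cosmetic caveat is that your ``$2\times 2$ block plus $(1-c)$ entries'' picture of $I-A_n^{ij}$ presupposes $m_n\geq 2$; for $m_n=1$ there is no block, but the sign formula $(-1)^{m_n+1}$ still holds (the swap is then multiplication by $-c$ on a line, giving $1+c>0$). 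In short: same multiplicative skeleton as the paper, but with the one-factor ingredient constructed from scratch instead of imported, which makes your version the more elementary and more transparent of the two.
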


\begin{proof}
	The result for the case $N =1$ was proved in \cite{GP2022}.  Use their result for each $n$, assigning index $r^{ij}$ to player 1's map $f_1^{ij}$ and $+1$ for the maps of all other players. The index of  $\s^{ij}$ under $f^{ij}$ is $r^{ij}$ by the multiplication property of the index. 
\end{proof}


\medskip

\noindent {\bf Step 2.} This step gives us a version of O'Neill's theorem with three main differences. (1) It applies to the $\BR$ correspondence. (2) The points chosen are in the interior of $\S$ and close to, but not necessarily in, $C_i$. (3) We insist that the map be locally affine around the fixed points. 

\medskip

Let $\Graph(\BR)$ be the graph of the best-reply correspondence $\BR: \S \twoheadrightarrow \S$ of the game $G$.  For each $\e > 0$, let $\BR_\e$ be the $\e$-best-reply correspondence, i.e., for each $\s$, $\BR_\e(\s)$ is the set of strategy profiles $\t$ such that for each $n$, $G_n(\s_{-n}, \t_n) > \max_{s_n} G_n(\s_{-n}, s_n) - \e$. The graph of $\BR_\e$, denoted $\Graph(\BR_\e)$, is a neighborhood of $\Graph(\BR)$. (In fact, the sets $\Graph(\BR_\e)$ form a basis of neighborhoods of $\Graph(\BR)$.) From now on, we fix $\e > 0$ as required by the main theorem.

Choose $\d > 0$ such that the $\d$-neighborhoods $V_i$ of $C_i$ satisfy the following properties: $V_i \cap V_j = \emptyset$ for all $i \neq j$; and for all $i$ and $\s \in V_i$,  $(\s, \s)$ belongs to  $\Graph(\BR_{\e})$. These $V_i$'s are the ones specified in the statement of Theorem \ref{main thm}.  These, too, are fixed from now on.  Pick points $\s^{ij}$, $i = 1, \ldots, k$, $j \in J_i$ as required by the same theorem and fix them from now on as well.  If necessary by considering an equivalent game $\bar G$ obtained by adding duplicate strategies, we can assume that the $\s_n^{ij}$'s are all distinct for each $n$ and with disjoint supports.  Indeed, consider an equivalent game $\bar G$ where we duplicate every pure strategy of every player. Let $\bar \S$ be the strategy space in $\bar G$ and $\phi: \bar \S \to \S$ be the map that sends strategies in $\bar G$ to equivalent profiles in $G$.  Let $\bar C_i = \phi^{-1}(C_i)$ be the equilibrium component of $\bar G$ for each $i$, and let $\bar V_i \equiv \phi^{-1}(V_i)$ be the corresponding neighborhood of $\bar C_i$.  For each $\s^{ij}$, we can pick a point $\bar \s^{ij} \in \phi^{-1}(\s^{ij})$ such that for each $n$, the $\bar \s_n^{ij}$'s are distinct mixed strategies with disjoint supports. Because $\bar G$ is equivalent to $G$, each $\bar C_i$ has the same index as $C_i$ (cf. Theorem 5 in \cite{GW2005}). Therefore, in what follows, we can replace $G$ with $\bar G$.  To simplify notation, we will refer to the game still as $G$ and just assume that the chosen $\s^{ij}$'s involve distinct mixed strategies for each player. We ilustrate this procedure in Example \ref{distinct}.

\begin{example}\label{distinct}

In Figure \ref{duplication}, (a) represents a 1-simplex with vertices $A$ and $B$ where we singled-out point $P$, representing a mixed strategy. By duplicating the vertex $A$ to $A'$ and $B$ to $B'$, we represent in (b) a tetrahedron, where the points $P_1$ and $P_2$ are singled-out duplicates of point $P$, with disjoint supports.

\begin{figure}[ht]
\caption{Duplicating pure strategies and choosing points with distinct supports}\label{duplication}
\centering

\begin{minipage}[t]{0.45\textwidth}
    \centering
    \begin{tikzpicture}[scale=2]
        \coordinate (A) at (0, 0);
        \coordinate (B) at (2, 0);

        \draw[thick] (A) -- (B);

        \coordinate (M) at ($(A)!0.5!(B)$); 

        \filldraw[black] (A) circle (1pt) node[anchor=north] {A};
        \filldraw[black] (B) circle (1pt) node[anchor=north] {B};

        \filldraw[red] (M) circle (1.5pt) node[anchor=south] {P};
    \end{tikzpicture}
    \caption*{(a)}
\end{minipage}%
\hfill
\begin{minipage}[t]{0.45\textwidth}
    \centering
    \begin{tikzpicture}[scale=2.5]

        \coordinate (A) at (0, 0);
        \coordinate (B) at (1.25, 0);   
        \coordinate (C) at (0.75, 1.5);    
        \coordinate (G) at (1.5, 0.25);      

        \draw[thick] (A) -- (B) -- (C) -- cycle;

        \draw[thick] (A) -- (G);
        \draw[thick] (B) -- (G);
        \draw[thick] (C) -- (G);

        \filldraw[black] (A) circle (1pt) node[anchor=north] {A};
        \filldraw[black] (B) circle (1pt) node[anchor=south] {B};
        \filldraw[black] (C) circle (1pt) node[anchor=south] {$A'$}; 
        \filldraw[black] (G) circle (1pt) node[anchor=south] {$B'$}; 

        \coordinate (M_AB) at ($(A)!0.5!(B)$);
        \coordinate (M_CG) at ($(C)!0.5!(G)$);

        \filldraw[red] (M_AB) circle (1pt) node[anchor=north] {$P_2$};
        \filldraw[red] (M_CG) circle (1pt) node[anchor=south] {$P_1$};

    \end{tikzpicture}
    \caption*{(b)}
\end{minipage}

\end{figure}

\end{example}

\begin{lemma}\label{lem function f}
	There exists a continuous function $f: \S \to \S$ with the following properties:
	\begin{enumerate}
		\item the graph of $f$ is contained in $\Graph(\BR_\e)$;
		\item the fixed points of $f$ are the points $\s^{ij}$ and the index of each $\s^{ij}$ is $r^{ij}$; 
		\item for each $n$ and $i,j$, there exist a polyhedral subdivision $\mathcal{X}_n$ of $\S_n$ and full-dimensional simplices $X_n^{ij}$ and $Y_n^{ij}$ such that:
		\begin{enumerate}
			\item $\s_n^{ij}$ is the barycenter of both $X_n^{ij}$ and $Y_n^{ij}$;
			\item $X_n^{ij} \subset \text{int}(Y_n^{ij}) \subset \S_n \backslash \partial \S_n$;
			\item $\prod_n Y_n^{ij}\times \prod_n Y_n^{ij} \subset \Graph(\BR_\e)$;
			\item the restriction $f^{ij}$ of $f$ to $\prod_n X_n^{ij}$ is a product map $\prod_n f_n^{ij}$ where $f_n^{ij}$ is an affine homeomorphism of $X_n^{ij}$ onto $Y_n^{ij}$;
			\item  $X^{ij}_n$ is the space of a subcomplex of $\mathcal{X}_n$ and the carrier of $\s^{ij}_n$ in $\mathcal{X}_n$ has dimension dim($\S_n$).
		\end{enumerate}
	\end{enumerate}    
\end{lemma}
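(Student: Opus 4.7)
My plan is to build $f$ in three layers: a background continuous approximation of $\BR$ on all of $\S$, affine cores of prescribed index placed around each $\s^{ij}$, and a collar gluing the two via the Hopf Extension Theorem inside each $V_i$. I would begin with a Cellina-type approximation of the upper semicontinuous $\BR$ to obtain a continuous $f_0 \colon \S \to \S$ with $\Graph(f_0) \subset \Graph(\BR_\e)$; since the fixed points of $\BR$ form $\cup_i C_i \subset \cup_i V_i$, one may arrange all fixed points of $f_0$ to lie in the interiors of the $V_i$, and for each $i$ choose a compact neighborhood $W_i$ with $C_i \cup \bigcup_j \{\s^{ij}\} \subset \text{int}(W_i) \subset W_i \subset V_i$ on whose boundary $f_0$ has no fixed points.

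Next I would choose full-dimensional simplices $X_n^{ij} \subset \text{int}(Y_n^{ij}) \subset \S_n \setminus \partial \S_n$ with common barycenter $\s_n^{ij}$, small enough that the $Y^{ij}$ are pairwise disjoint, each $Y^{ij} \subset \text{int}(W_i)$, and $Y^{ij} \times Y^{ij} \subset \Graph(\BR_\e)$; the last point follows from openness of $\Graph(\BR_\e)$ together with $(\s^{ij},\s^{ij}) \in \Graph(\BR_\e)$. Lemma \ref{lem f ij} then furnishes the product affine homeomorphism $f^{ij} = \prod_n f_n^{ij} \colon X^{ij} \to Y^{ij}$ whose sole fixed point is $\s^{ij}$, of index $r^{ij}$. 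I would set $f := f^{ij}$ on $X^{ij}$ and $f := f_0$ on $\S \setminus \bigcup_i \text{int}(W_i)$; on these already-defined pieces $\Graph(f) \subset \Graph(\BR_\e)$ holds by construction.

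The core step is to extend $f$ over the collar $R_i \equiv W_i \setminus \bigcup_{j\in J_i}\text{int}(X^{ij})$ so that the extension matches the prescribed boundary values, has no fixed points in $R_i$, stays inside $\S$, and keeps $(\s,f(\s)) \in \Graph(\BR_\e)$. I would pass to the displacement $d \equiv \text{Id} - f$, already defined and zero-free on $\partial R_i = \partial W_i \sqcup \bigsqcup_j \partial X^{ij}$. The degree of $d$ on the outer component $\partial W_i$ equals $c_i$ (the index of $C_i$ under $f_0$), while on each inner component $\partial X^{ij}$ it equals $r^{ij}$. Because $c_i = \sum_j r^{ij}$, the total degree of $d$ on $\partial R_i$ viewed as the boundary of $R_i$ vanishes, so the Hopf Extension Theorem produces a zero-free extension $d \colon R_i \to \Re^{\dim \S} \setminus \{0\}$. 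I would then define $f(\s) \equiv \s - \l(\s)\,d(\s)$ with $\l \colon R_i \to (0,1]$ a Urysohn function equal to $1$ near $\partial R_i$ and small enough in the interior to keep $f(\s) \in \S$; since $d$ is zero-free and $\l > 0$, no spurious fixed points appear. Containment $\Graph(f) \subset \Graph(\BR_\e)$ is preserved because $\BR_\e(\s)$ is a convex open set containing both $\s$ (as $\s \in V_i$) and $f_0(\s)$, so choosing $\l$ sufficiently small in the interior keeps $f(\s)$ in this open set.

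Finally, the polyhedral subdivision $\mathcal{X}_n$ is obtained as any triangulation of $\S_n$ having each $X_n^{ij}$ as a maximal $\dim(\S_n)$-dimensional cell; then $X_n^{ij}$ is the space of a subcomplex, and $\s_n^{ij}$, the barycenter of $X_n^{ij}$, lies in the interior of this full-dimensional cell, as required by (3e). The main obstacle is the Hopf-plus-Urysohn step in the collar: the identity $c_i = \sum_j r^{ij}$ is precisely the vanishing of the cohomological obstruction that allows a zero-free extension of the displacement, and the subsequent scaling must simultaneously respect the polytope $\S$ and the open convex set $\BR_\e(\s)$—which is feasible only because both constraints are satisfied at the reference point $\s$ itself whenever $\s \in V_i$.
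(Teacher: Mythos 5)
Your outline shares the same skeleton as the paper's proof (approximate $\BR$ continuously, install affine cores from Lemma~\ref{lem f ij}, glue via the Hopf Extension Theorem applied to the displacement), but there is a genuine gap in the gluing step, and it is exactly the issue that forces the paper's more careful preparatory work.

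You apply the Hopf Extension Theorem first to get a zero-free extension $d$ on the collar $R_i$, and \emph{then} propose to write $f(\s) = \s - \l(\s) d(\s)$ with $\l$ a Urysohn function equal to $1$ near $\partial R_i$ and small inside. The problem is that the Hopf Extension Theorem gives no control over the size of $d$ in the interior of the collar, yet $\l$ is pinned to $1$ near $\partial R_i$ in order to match the boundary data. Just inside $\partial W_i$, where $\l = 1$, the extension $d$ need only be \emph{continuously close} to the boundary value $\s - f_0(\s)$, not equal to it; so $f(\s) = \s - d(\s)$ is only \emph{close} to $f_0(\s)$, and if $f_0(\s)$ lies on $\partial\S$ (or $\s$ itself does), an arbitrarily small outward perturbation throws $f(\s)$ out of $\S$. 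The same issue bites the containment $\Graph(f)\subset\Graph(\BR_\e)$: your argument that $\BR_\e(\s)$ is open and contains $\s$ only helps where $\l$ can be made small, which is precisely not where the problem lives. This is the obstacle the paper illustrates with Figure~\ref{boundary}, and it is not removable by post-hoc scaling.

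The paper's resolution is to control the displacement \emph{before} invoking Hopf. It first perturbs $\hat f$ toward a completely mixed profile to push all fixed points off $\partial\S$, then shrinks to sets $U_i$ at distance $2\eta$ from $\partial_{\S_i}V_i\cup\partial\S$, and crucially replaces $\hat f$ on $U_i$ by $\hat f^1 = \g\hat f + (1-\g)\text{Id}$, which has displacement norm $\le 2\eta$ there. Only then is Hopf applied, and because $A - \{0\}$ retracts onto the $2\eta$-sphere, the zero-free extension can be taken with $\|d\|\le 2\eta$ on all of $U_i$. Since $U_i$ is $2\eta$ away from $\partial\S$ and the $Y^{ij}$ have diameter $\le 2\eta$, the map $f = \text{Id} - d$ automatically stays in $\S$ and in $\Graph(\BR_\e)$, with no auxiliary $\l$ needed. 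Without this pre-shrinking of the displacement, your collar construction does not close. (As a minor secondary point: you should also take $W_i$ large enough to swallow \emph{all} fixed points of $f_0$ inside $V_i$, not merely $C_i\cup\{\s^{ij}\}$, otherwise stray fixed points of $f_0$ survive in $V_i\setminus W_i$.) Your treatment of $\mathcal{X}_n$---triangulating $\S_n$ with each $X_n^{ij}$ as a maximal cell---is correct and is a perfectly acceptable, arguably simpler, alternative to the paper's hyperplane-extension construction for part~(3e).
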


\begin{remark} As it will be clear from the proof of the lemma and illustrated in Example \ref{example lem f}, the polyhedral subdivision $\mathcal{X}_n$ is constructed by ``extending'' the faces of each $X^{ij}_n$, thus producing a polyhedral subdivision of $\S_n$ (see Figure \ref{extendingsub}). Point $(e)$ of the Lemma implies that $X^{ij}_n$ is a union of cells of the polyhedral subdivision. The simplices $Y^{ij}_n$ play no role in the construction of $\mathcal{X}_n$ and are neither a cell nor the space of a subcomplex of $\mathcal{X}_n$ necessarily. \end{remark} 

\begin{proof}
 	By Corollary 1 and Axiom 2 in \cite{AM1989}, there  exists a function $\hat f: \S \to \S$ such that

	\begin{enumerate}
		\item the graph of $\hat f$ is contained in $\Graph(\BR_{\e})$;
		
		\item all its fixed points are contained in $\cup_i (V_i \backslash \partial_{\S} V_i)$;
	
		\item the index of $\hat f$ over $V_i$ is $c_i$ for each $i$.
		\end{enumerate}	

  If necessary by replacing $\hat f$ with a suitable convex combination $(1-\a) \hat f + \a \s^{\circ}$, where $\sigma^{\circ}$ is a completely mixed profile in $\S$ and $\a>0$ is sufficiently small, we can assume furthermore that the fixed points of $\hat f$ are contained in $\S \backslash \partial \S$.

    Let $C_i(\hat f)$ be the set of fixed points of $\hat f$ in $V_i$. As $V_i$ is the $\d$-neighborhood of the component $C_i$, it is semialgebraic (cf. Proposition 2.2.8 in \cite{BCR1998}) and $V_i \backslash (\partial_{\S_i} V_i \cup \partial \S)$ is connected. Therefore for all $\eta > 0$ sufficiently small, the set of points in $V_i$ whose distance from $\partial_{\S_i} V_i \cup \partial \S$ is strictly greater than $\eta$ is a connected set.  As the points $\s^{ij}$ and the set $C_i(\hat f)$ lie in $V_i \backslash (\partial_{\S_i} V_i \cup \partial \S)$,  we can choose $\eta > 0$ small enough such that for each $i$, letting $U_i$ be the set of $\s \in V_i$ such that $d(\s, \partial_{\S_i} V_i \cup \partial \S) > 2\eta$, we have that $U_i \backslash \partial U_i$ is connected and contains $C_i(\hat f)$ as well as the points $\s^{ij}$. Furthermore, by choosing an even smaller $\eta$, if necessary, we can assume that $(\s, \t) \in \Graph(\BR_\e)$ for $\s \in V_i$ and $\t$ within $2\eta$ of $\s$.

	Use Urysohn's lemma to construct a function  $\g: \S \to [\eta, 1]$ that equals 1 outside the interiors of $V_i$'s and $\eta$ on $U_i$. Replace the function $\hat f$ with the function $\hat f^1  \equiv \g \hat f + (1-\g)\text{Id}$. The graph of $\hat f^1$ is also contained in $\Graph(\BR_{\e})$ and its fixed points are still the points in the $C_i(\hat f)$'s. Also, $\Vert \hat f^1(\s) - \s \Vert \le 2\eta$ if $\s \in U_i$ for some $i$. 
		
 	For each $i,j$ and $n$ choose simplices $X_n^{ij}$  and $Y_n^{ij}$ such that:

\begin{enumerate}
  	\item[(4)] 	$X_n^{ij}$ is contained in the interior of $Y_n^{ij}$, and $Y^{ij}$ is contained in the interior of $U_i$; 
	\item[(5)]  $\s_n^{ij}$ is the barycenter of both $X_n^{ij}$ and $Y_n^{ij}$;
	\item[(6)]  $Y_n^{ij}$ is contained in the $\eta$-radius ball around $\s^{ij}$;  
	\item[(7)] $\{Y_n^{ij}\}_{i,j}$ are pairwise disjoint for each $n$.    
\end{enumerate}

Each dimension dim($\S_n) -1$ face $F^{ij}_n$ of $X^{ij}_n$ is contained in a unique hyperplane $H_{F^{ij}_n}$ in $\Re^{S_n}$ orthogonal to $\S_n$. Denote by $H^{*}_{F^{ij}_n}$, $* \in \{+,-\}$, the positive and negative half-spaces defined by this hyperplane. For a fixed $X^{ij}_n$, if a hyperplane $H_{F^{ij}_n}$ intersects $\s^{\hat i \hat j}_n$ for $\hat i \neq i$ or $\hat j \neq j$, then the vertices of $F^{ij}_n$ are not in general position in $\S_n$, i.e., they belong to a subspace of $\S_n$ of dimension strictly lower than dim$(\S_n)$. Therefore, choosing the vertices of $X_n^{ij}$ conveniently in the interior of $Y^{ij}_n$, we can assume that $H_{F^{ij}_n}$ does not intersect $\s^{\hat i \hat j}_n$ for $\hat i \neq i$ or $\hat j \neq j$. We assume this holds for each $i,j,n$. We can now define $\mathcal{X}_n$. Consider the intersection of the hyperplane $H_{F^{ij}_n}$ with $\S_n$, for each $i,j$. This defines a polyhedral subdivision where a cell of the subdivision is given by the intersection of finitely many half-spaces $H^{*}_{F^{ij}_n}$.\footnote{In the Appendix, we elaborate on polyhedral subdivisions constructed through this procedure.} Due to our assumption, the carrier of $\s^{ij}_n$ in this subdivision is a cell of dimension dim$(\S_n)$. By construction of this subdivision, $X^{ij}_n$ is the space of a subcomplex of $\mathcal{X}_n$. This proves 3(e).

Use Step 1 to construct for each $n$ and $i,j$ an affine homeomorphism $f_n^{ij}: X_n^{ij} \to Y_n^{ij}$ satisfying property (1) of Lemma \ref{lem function f} and such that $f^{ij} = \prod_n f_n^{ij}$ satisfies property (2) of Lemma \ref{lem function f}.  Letting $A$ be the affine space generated by $\S - \S$,  define $d: (\S \backslash \cup_i  (U_i \backslash \partial U_i)) \cup (\cup_{i,j} X^{ij}) \to A$ by $d(\s) = \s - \hat f^1(\s)$ if $\s \notin \cup_i (U_i \backslash \partial U_i)$; $d(\s) = \s - f^{ij}(\s)$ if $\s \in X^{ij}$.  The only zeros of $d$ are the $\s^{ij}$'s.   The set $U_i$ is semi-algebraic and $U_i \backslash \partial U_i$ is an open and connected set.  Therefore, $(U_i, \partial U_i)$ is a pseudomanifold with boundary. Hence, $U_i \backslash \cup_j(X^{ij} \backslash \partial X^{ij})$ is a pseudo-manifold with boundary $(\cup_j \partial X^{ij}) \cup \partial U_i$ and the restriction of $d$ to this boundary has degree zero. As there are at least two players with at least two pure strategies each, the dimension of $\S$ is at least two. Therefore, by the Hopf Extension Theorem, $d$ extends to the sets $U_i$ in such a way that its norm over $\cup_i U_i$ is still no more than $2\eta$ and it has no additional zeros.\footnote{The version of the Hopf Extension Theorem we use here is Corollary 18, Chapter 8 in \cite{ES1966}. We note this Corollary is stated for singular cohomology (with coefficients in $\mathbb{Z}$), and provides a condition for the extension of $d: (\cup_j \partial X^{ij}) \cup \partial U_i \to A -\{0\}$ to  $U_i \backslash \cup_j(X^{ij} \backslash \partial X^{ij})$.  The condition is that $\d(d^*(s^*)) =0$, where $s^*$ is a generator of $H^m(A - \{0\})$ ($A - \{0\}$ being a homotopy sphere), where $m = \text{dim($A$)} -1$ and $\d$ is the coboundary cohomology morphism of the long exact sequence of the pair $(U_i \backslash \cup_j(X^{ij} \backslash \partial X^{ij}),(\cup_j \partial X^{ij}) \cup \partial U_i)$. If the degree of $d: (\cup_j \partial X^{ij}) \cup \partial U_i \to A -\{0\}$ is zero (because it is the displacement of $\hat f^1$), then by definition of the singular cohomology functor, $d^*(s^*) =0$, which implies $\d(d^*(s^*)) =0$.}  Thus, we have a displacement map defined over the whole of $\S$, denoted still by $d$. Define $f$ as $\text{Id} - d$.  
 	
 	Outside $\cup_i U_i$,  $d$ is the displacement of $\hat f^1$, whose graph is contained in $\Graph(\BR_\e)$; on the $U_i$'s $d$ has norm $2\eta$ or less, which by the choice of $\eta$, therefore, means the graph of $f$ is contained in $\Graph(\BR_\e)$, proving point (1) of the Lemma.  Point 3(c) of the lemma follows from the fact that the diameter of $Y^{ij}$ is at most $2\eta$; all other points of the lemma follow from the construction of the function $f$. 
\end{proof}

\begin{example} We explain and illustrate our claim on the degree of $d$ over $(\cup_{j} \partial X^{ij}) \cup \partial U_i$ being zero. One can apply Theorem 5.1 in \cite{O1953} to obtain this result: the degree of $d$ over $\partial U_i$ is determined by the index of $\hat{f}^1$ in $U_i$ and is $c_i$.  The degree of $d$ over $\partial X^{ij}$ is determined by the index of $f^{ij}$ over $X^{ij}$ which is $r^{ij}$ by construction. This degree is measured with respect to the orientation induced on $\partial X^{ij}$ by $X^{ij}$. But the orientation induced by $U_i \backslash \cup_j(X^{ij} \backslash \partial X^{ij})$  over $\partial X^{ij}$ is opposite to the one with respect to which $f^{ij}$ is measured, by the definition of orientability. According to this orientation, the index of $f^{ij}$ is opposite, hence, $-r^{ij}$. Therefore, since $\sum_j r^{ij} = c_i$,  the restriction of $d$ to $(\cup_j \partial X^{ij}) \cup \partial U_i$ has degree zero. 

In (a) Figure \ref{orient}, we illustrate a counterclockwise orientation induced on the boundary of the disc, and the counterclockwise orientations on each of the boundaries of the smaller red discs centered in points $P_1$ and $P_2$. These orientations agree, since the ones on the boundary of the smaller discs are induced (by excision) from the one on the boundary of the blue disk. But when orienting the boundary of the blue disc with the two smaller red discs removed, note that the inner boundaries are now clockwise oriented, by the definition of orientability. So if a map has degree $+2$ measured according to the orientation on the boundary of the blue disc (or according to the sum of the degrees measured according to the orientation on the boundary of the red discs, say, each being $+1$), then, when the red discs are removed, the orientation on the inner boundary is now clockwise (therefore implying the degree of the map, with respect to this orientation, is $-1$). So the degree on the outer-boundary is $+2$, and on the inner-boundaries is $-1$ each, resulting in the degree of zero in the union of all boundaries.

\begin{figure}[ht]
\caption{Orientation}\label{orient}
\medskip
\medskip
\begin{minipage}{0.4 \textwidth}
\centering
\begin{tikzpicture}[scale=0.8]

\filldraw[fill=blue!20, draw=red, thick] (0,0) circle (3);

\filldraw[red] (1, -1) circle (2pt);
\filldraw[red] (-1.5, 0.5) circle (2pt);

\draw[->, red, thick, >=stealth, line width=2pt] (3, 0) arc [start angle=0, end angle=270, radius=3];

\node[above right] at (1, -1) {$P_1$};
\node[above left] at (-1.5, 0.5) {$P_2$};

\filldraw[fill=red!20, draw=red, thick] (1,-1) circle (0.8);
\draw[->, red, thick, >=stealth, line width=1.5pt] (1.8, -1) arc[start angle=0, end angle=270, radius=0.8];

\filldraw[fill=red!20, draw=red, thick] (-1.5,0.5) circle (0.7);
\draw[->, red, thick, >=stealth, line width=1.5pt] (-0.8, 0.5) arc[start angle=0, end angle=270, radius=0.7];

\filldraw[red] (1, -1) circle (2pt);
\filldraw[red] (-1.5, 0.5) circle (2pt);

\node[above right] at (1, -1) {$P_1$};
\node[above left] at (-1.5, 0.5) {$P_2$};

\end{tikzpicture}
\caption*{(a)}
\end{minipage} 
\begin{minipage}{0.3 \textwidth}
	\centering
	\begin{tikzpicture}[scale=0.8]

\filldraw[fill=blue!20, draw=red, thick] (0,0) circle (3);

\filldraw[red] (1, -1) circle (2pt);
\filldraw[red] (-1.5, 0.5) circle (2pt);

\draw[->, red, thick, >=stealth, line width=2pt] (3, 0) arc [start angle=0, end angle=270, radius=3];

\node[above right] at (1, -1) {$P_1$};
\node[above left] at (-1.5, 0.5) {$P_2$};

\filldraw[fill=white!0, draw=red, thick] (1,-1) circle (0.8);
\draw[<-, red, thick, >=stealth, line width=1.5pt] (1.8, -1) arc[start angle=0, end angle=270, radius=0.8];

\filldraw[fill=white!0, draw=red, thick] (-1.5,0.5) circle (0.7);
\draw[<-, red, thick, >=stealth, line width=1.5pt] (-0.8, 0.5) arc[start angle=0, end angle=270, radius=0.7];

\filldraw[red] (1, -1) circle (2pt);
\filldraw[red] (-1.5, 0.5) circle (2pt);

\node[above right] at (1, -1) {$P_1$};
\node[above left] at (-1.5, 0.5) {$P_2$};

	\end{tikzpicture}
	\caption*{(b)}
\end{minipage} 
\end{figure}

\end{example}

\begin{example}\label{example lem f} Figure \ref{extendingsub} illustrates how the polyhedral subdivision of Lemma \ref{lem function f} is obtained. The two points in red in Figure (a) can be viewed as two of the $\s^{ij}_n$'s, first picked in the interior of the simplex. We omit any reference to $Y^{ij}_n$'s in this construction because they do not play a role in obtaining the subdivision.  In (b) we choose the vertices for each of the $X^{ij}_n$'s that should have the fixed red points as their barycenters. In (c), the 2-simplices $A1-B1-C1$ and $A2-B2-C2$ should be viewed as two distinct $X^{ij}_n$'s  of player $n$.  When the vertices $A1,B1,C1$ and $A2,B2,C2$ are well-chosen, for previously fixed points $BC1$ and $BC2$, then they define 2-simplices such that, when their 1-dimensional faces are extended as in the figure, they do not intersect any of the points $BC1$ and $BC2$. The result of the process is a polyhedral subdivision as depicted in (d), where $BC1$ and $BC2$ have 2-dimensional carriers and where simplices $A1-B1-C1$ and $A2-B2-C2$ are unions of cells of the subdivision, i.e., both are spaces of a subcomplex of the the polyhedral subdivision: in particular, $A1-B1-C1$ is a cell of the resulting polyhedral subdivision, whereas $A2-B2-C2$ is not. 


\begin{figure}[hb]
\caption{An illustration for the polyhedral subdivision of Lemma \ref{lem function f}}\label{extendingsub}
\medskip
\begin{minipage}[t]{0.24\textwidth}
    \centering
\begin{tikzpicture}[scale=5]

\coordinate (A) at (0, 0);
\coordinate (B) at (1, 0);
\coordinate (C) at (0.5, 0.866);  
\coordinate (AC) at (0.3186,0.5519);
\coordinate (AB) at (0.65,0);
\coordinate (AC1) at (0.1444,0.25); 
\coordinate (CB) at (0.8556,0.25);
\coordinate (AB1) at (0.05,0);
\coordinate (BC) at (0.5341,0.8069);

\coordinate (CB1) at (0.7254,0.4757);
\coordinate (AB2) at (0.44,0);

\coordinate (AB3) at (0.0577,0.1);
\coordinate (BC3) at (0.9422,0.1);

\coordinate (A1) at (0.2, 0.25);
\coordinate (B1) at (0.5, 0.25);
\coordinate (C1) at (0.35, 0.5);

\coordinate (A2) at (0.5, 0.1);
\coordinate (B2) at (0.8, 0.1);
\coordinate (C2) at (0.65, 0.35);

\coordinate (CA3) at (0.4217,0.7304); 
\coordinate (BA4) at (0.86,0);

\draw[thick] (A) -- (B) -- (C) -- cycle;



\coordinate (BC1) at (barycentric cs:A1=1,B1=1,C1=1);  
\coordinate (BC2) at (barycentric cs:A2=1,B2=1,C2=1);  

\filldraw[red] (BC1) circle (0.25pt) node[anchor=south, font=\scriptsize] {BC1};
\filldraw[red] (BC2) circle (0.25pt) node[anchor=north, font=\scriptsize] {BC2};

\filldraw[black] (A) circle (0.2pt) node[anchor=north east, font=\scriptsize] {A};
\filldraw[black] (B) circle (0.2pt) node[anchor=north west, font=\scriptsize] {B};
\filldraw[black] (C) circle (0.2pt) node[anchor=south, font=\scriptsize] {C};


\end{tikzpicture}
\caption*{(a)}
\end{minipage}
\hfill
\begin{minipage}[t]{0.24 \textwidth}
	\centering
\begin{tikzpicture}[scale=5]

\coordinate (A) at (0, 0);
\coordinate (B) at (1, 0);
\coordinate (C) at (0.5, 0.866);  
\coordinate (AC) at (0.3186,0.5519);
\coordinate (AB) at (0.65,0);
\coordinate (AC1) at (0.1444,0.25); 
\coordinate (CB) at (0.8556,0.25);
\coordinate (AB1) at (0.05,0);
\coordinate (BC) at (0.5341,0.8069);

\coordinate (CB1) at (0.7254,0.4757);
\coordinate (AB2) at (0.44,0);

\coordinate (AB3) at (0.0577,0.1);
\coordinate (BC3) at (0.9422,0.1);

\coordinate (A1) at (0.2, 0.25);
\coordinate (B1) at (0.5, 0.25);
\coordinate (C1) at (0.35, 0.5);

\coordinate (A2) at (0.5, 0.1);
\coordinate (B2) at (0.8, 0.1);
\coordinate (C2) at (0.65, 0.35);

\coordinate (CA3) at (0.4217,0.7304); 
\coordinate (BA4) at (0.86,0);

\draw[thick] (A) -- (B) -- (C) -- cycle;



\coordinate (BC1) at (barycentric cs:A1=1,B1=1,C1=1);  
\coordinate (BC2) at (barycentric cs:A2=1,B2=1,C2=1);  

\filldraw[red] (BC1) circle (0.25pt) node[anchor=south, font=\scriptsize] {BC1};
\filldraw[red] (BC2) circle (0.25pt) node[anchor=north, font=\scriptsize] {BC2};
\filldraw[black] (AB) circle (0pt) node[anchor=north, font=\scriptsize] {};
\filldraw[black] (AC) circle (0pt) node[anchor=north, font=\scriptsize] {};

\filldraw[black] (A) circle (0.2pt) node[anchor=north east, font=\scriptsize] {A};
\filldraw[black] (B) circle (0.2pt) node[anchor=north west, font=\scriptsize] {B};
\filldraw[black] (C) circle (0.2pt) node[anchor=south, font=\scriptsize] {C};

\filldraw[black] (A1) circle (0.3pt) node[anchor=north east, font=\scriptsize] {A1};
\filldraw[black] (B1) circle (0.3pt) node[anchor=north west, font=\scriptsize]{B1};
\filldraw[black] (C1) circle (0.3pt) node[anchor=south, font=\scriptsize] {C1};

\filldraw[black] (A2) circle (0.3pt) node[anchor=north east, font=\scriptsize] {A2};
\filldraw[black] (B2) circle (0.3pt) node[anchor=north west, font=\scriptsize]{B2};
\filldraw[black] (C2) circle (0.3pt) node[anchor=south, font=\scriptsize] {C2};
\end{tikzpicture}
\caption*{(b)}
\end{minipage} 
\hfill 
\begin{minipage}[t]{0.25	 \textwidth}
	\centering

\begin{tikzpicture}[scale=5]

\coordinate (A) at (0, 0);
\coordinate (B) at (1, 0);
\coordinate (C) at (0.5, 0.866);  
\coordinate (AC) at (0.3186,0.5519);
\coordinate (AB) at (0.65,0);
\coordinate (AC1) at (0.1444,0.25); 
\coordinate (CB) at (0.8556,0.25);
\coordinate (AB1) at (0.05,0);
\coordinate (BC) at (0.5341,0.8069);

\coordinate (CB1) at (0.7254,0.4757);
\coordinate (AB2) at (0.44,0);

\coordinate (AB3) at (0.0577,0.1);
\coordinate (BC3) at (0.9422,0.1);

\coordinate (A1) at (0.2, 0.25);
\coordinate (B1) at (0.5, 0.25);
\coordinate (C1) at (0.35, 0.5);

\coordinate (A2) at (0.5, 0.1);
\coordinate (B2) at (0.8, 0.1);
\coordinate (C2) at (0.65, 0.35);

\coordinate (CA3) at (0.4217,0.7304); 
\coordinate (BA4) at (0.86,0);

\draw[thick] (A) -- (B) -- (C) -- cycle;

\draw[thick] (A1) -- (B1) -- (C1) -- cycle;
\draw[thick] (A2) -- (B2) -- (C2) -- cycle;


\coordinate (BC1) at (barycentric cs:A1=1,B1=1,C1=1);  
\coordinate (BC2) at (barycentric cs:A2=1,B2=1,C2=1);  

\filldraw[red] (BC1) circle (0.25pt) node[anchor=south, font=\scriptsize] {BC1};
\filldraw[red] (BC2) circle (0.25pt) node[anchor=north, font=\scriptsize] {BC2};
\filldraw[black] (AB) circle (0pt) node[anchor=north, font=\scriptsize] {};
\filldraw[black] (AC) circle (0pt) node[anchor=north, font=\scriptsize] {};

\filldraw[black] (A) circle (0.2pt) node[anchor=north east, font=\scriptsize] {A};
\filldraw[black] (B) circle (0.2pt) node[anchor=north west, font=\scriptsize] {B};
\filldraw[black] (C) circle (0.2pt) node[anchor=south, font=\scriptsize] {C};

\filldraw[black] (A1) circle (0.3pt) node[anchor=north east, font=\scriptsize] {A1};
\filldraw[black] (B1) circle (0.3pt) node[anchor=north west, font=\scriptsize]{B1};
\filldraw[black] (C1) circle (0.3pt) node[anchor=south, font=\scriptsize] {C1};

\filldraw[black] (A2) circle (0.3pt) node[anchor=north east, font=\scriptsize] {A2};
\filldraw[black] (B2) circle (0.3pt) node[anchor=north west, font=\scriptsize]{B2};
\filldraw[black] (C2) circle (0.3pt) node[anchor=south, font=\scriptsize] {C2};

\end{tikzpicture}
\caption*{(c)}
\end{minipage} 
\begin{minipage}[t]{0.3 \textwidth}
    \centering
\begin{tikzpicture}[scale=5]

\coordinate (A) at (0, 0);
\coordinate (B) at (1, 0);
\coordinate (C) at (0.5, 0.866);  
\coordinate (AC) at (0.3186,0.5519);
\coordinate (AB) at (0.65,0);
\coordinate (AC1) at (0.1444,0.25); 
\coordinate (CB) at (0.8556,0.25);
\coordinate (AB1) at (0.05,0);
\coordinate  (BC) at (0.5341,0.8069);

\coordinate (CB1) at (0.7254,0.4757);
\coordinate (AB2) at (0.44,0);

\coordinate (AB3) at (0.0577,0.1);
\coordinate (BC3) at (0.9422,0.1);

\coordinate (A1) at (0.2, 0.25);
\coordinate (B1) at (0.5, 0.25);
\coordinate (C1) at (0.35, 0.5);

\coordinate (A2) at (0.5, 0.1);
\coordinate (B2) at (0.8, 0.1);
\coordinate (C2) at (0.65, 0.35);

\coordinate (CA3) at (0.4217,0.7304); 
\coordinate (BA4) at (0.86,0);

\draw[thick] (A) -- (B) -- (C) -- cycle;

\draw[thick] (A1) -- (B1) -- (C1) -- cycle;
\draw[thick] (A2) -- (B2) -- (C2) -- cycle;

\draw[thick] (AC) -- (AB);
\draw[thick] (AC1) -- (CB);
\draw[thick] (AB1) -- (BC);

\draw[thick] (CB1) -- (AB2);

\draw[thick] (AB3) -- (BC3);

\draw[thick] (BA4) -- (CA3);

\coordinate (BC1) at (barycentric cs:A1=1,B1=1,C1=1);  
\coordinate (BC2) at (barycentric cs:A2=1,B2=1,C2=1);  

\filldraw[red] (BC1) circle (0.25pt) node[anchor=south, font=\scriptsize] {BC1};
\filldraw[red] (BC2) circle (0.25pt) node[anchor=north, font=\scriptsize] {BC2};
\filldraw[black] (AB) circle (0pt) node[anchor=north, font=\scriptsize] {};
\filldraw[black] (AC) circle (0pt) node[anchor=north, font=\scriptsize] {};

\filldraw[black] (A) circle (0.2pt) node[anchor=north east, font=\scriptsize] {A};
\filldraw[black] (B) circle (0.2pt) node[anchor=north west, font=\scriptsize] {B};
\filldraw[black] (C) circle (0.2pt) node[anchor=south, font=\scriptsize] {C};

\filldraw[black] (A1) circle (0.3pt) node[anchor=north east, font=\scriptsize] {A1};
\filldraw[black] (B1) circle (0.3pt) node[anchor=north west, font=\scriptsize]{B1};
\filldraw[black] (C1) circle (0.3pt) node[anchor=south, font=\scriptsize] {C1};

\filldraw[black] (A2) circle (0.3pt) node[anchor=north east, font=\scriptsize] {A2};
\filldraw[black] (B2) circle (0.3pt) node[anchor=north west, font=\scriptsize]{B2};
\filldraw[black] (C2) circle (0.3pt) node[anchor=south, font=\scriptsize] {C2};
\end{tikzpicture}
\caption*{(d)}
\end{minipage}
\end{figure}
\end{example}

\medskip

\noindent {\bf Step 3.}   From here on, we  fix the map $f$ given in Step 2, along with the multisimplices $X^{ij} \equiv \prod_{n}{X}^{ij}_n$ and $Y^{ij} \equiv \prod_{n}Y^{ij}_n$.  We now describe a  triangulation $\T_n$ of $\S_n$ for each $n$. 

\begin{lemma}\label{lem triangulation T}
	For each $n$, there exists a triangulation $\T_n$ of $\S_n$ such that:
	\begin{enumerate}
		\item For each $i,j$, there exists a simplex $T_n^{ij} \subset X_n^{ij}$ of $\T_n$ with $\s_n^{ij}$ as its barycenter;
		\item  For each $i,j$, letting $T^{ij} \equiv \prod_n T^{ij}_n$, if $\s \notin \cup_{i,j} (T^{ij} \backslash \partial T^{ij})$, there exists $n$ such that for each $\t_n$ that belongs to the carrier of $\s_n$ in $\T_n$,  $\s_n$ does not belong to the carrier of $f_n(\s_{-n}, \t_n)$ in $\T_n$.   
		\item $(\s, \t) \in \Graph(\BR_\e)$ if for each $n$, $\t_n$ is a vertex of $\T_n$ and $f_n(\s)$ belongs to the simplicial neighborhood of the closed star of $\t_n$; 
		\item if $(\s, \t) \in \Graph(\BR_\e)$ where $\s \in T^{ij}$ for some $ij$, then for each $\s' \in T^{ij}$ and each $n$, $G_n(\s_{-n}, \t_n) > \max_{s_n} G_n(\s_{-n}', s_n) - \e$. 
	\end{enumerate}
\end{lemma}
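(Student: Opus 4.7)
The plan is to build $\T_n$ in two layers. First, for each $n,i,j$, pick a small full-dimensional simplex $T_n^{ij}$ with $\s_n^{ij}$ as its barycenter, contained in the interior of $X_n^{ij}$; Lemma \ref{lem function f}(3)(e) guarantees that the carrier of $\s_n^{ij}$ in $\mathcal{X}_n$ is full-dimensional, so such a $T_n^{ij}$ exists. Second, extend these simplices together with the polyhedral subdivision $\mathcal{X}_n$ to a triangulation $\T_n$ of $\S_n$ in which each $T_n^{ij}$ is a simplex and each $X_n^{ij}$ is a union of simplices. By iterated barycentric subdivision applied only outside $\cup_{i,j} T_n^{ij}$ (in particular, inside $X_n^{ij}\setminus T_n^{ij}$), the mesh of $\T_n$ there can be made as small as required, independently of the sizes of the $T_n^{ij}$'s themselves. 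Property (1) is then immediate.

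Properties (3) and (4) are routine compactness / uniform-continuity arguments. For (3), $\Graph(f)$ is a compact subset of the open set $\Graph(\BR_\e)$, so there exists $\eta>0$ such that the $\eta$-neighborhood of $\Graph(f)$ lies in $\Graph(\BR_\e)$. Requiring the mesh to be smaller than $\eta/3$ ensures that whenever $f_n(\s)$ lies in the simplicial neighborhood of the closed star of a vertex $\t_n$, one has $\Vert \t_n - f_n(\s)\Vert < \eta$, and hence $(\s,\t)\in\Graph(\BR_\e)$. For (4), both $\s_{-n}\mapsto G_n(\s_{-n},\t_n)$ and $\s_{-n}\mapsto \max_{s_n} G_n(\s_{-n},s_n)$ are uniformly continuous on $\S_{-n}$, so choosing the diameters of the $T_n^{ij}$'s small enough absorbs their variation across $T^{ij}$ into the strict inequality defining $\Graph(\BR_\e)$.

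The main obstacle is property (2). I split $\s\notin \cup_{i,j}(T^{ij}\setminus\partial T^{ij})$ into two regimes. If $\s\notin \cup_{i,j} X^{ij}$, then $\s$ is not a fixed point of $f$ (all of which lie in $\cup_{i,j}\text{int}(X^{ij})$), so compactness and continuity of $f$ yield $\delta_0>0$ with $\max_n \Vert f_n(\s)-\s_n\Vert \ge \delta_0$ on $\S\setminus \cup_{i,j}\text{int}(X^{ij})$. If instead $\s\in X^{ij}$ but $\s\notin T^{ij}\setminus\partial T^{ij}$, some coordinate $\s_n$ satisfies $\Vert \s_n-\s_n^{ij}\Vert\ge r_n^{ij}$, where $r_n^{ij}$ is the distance from $\s_n^{ij}$ to $\partial T_n^{ij}$; and on $X^{ij}$ we have $f=\prod_m f_m^{ij}$, so by affineness $\Vert f_n^{ij}(\s_n)-\s_n\Vert = \Vert A_n^{ij}(\s_n-\s_n^{ij})\Vert \ge c_n^{ij} r_n^{ij}$, where $A_n^{ij}:=f_n^{ij}-\mathrm{Id}$ is nonsingular (since $\s_n^{ij}$ is the unique fixed point of $f_n^{ij}$) and $c_n^{ij}>0$ denotes its least singular value. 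In both regimes, fix that coordinate $n$ and apply continuity of $f_n$ in $\s_n$ (a Lipschitz estimate for $f_n^{ij}$ in the affine case): the displacement lower bound carries over to $\Vert f_n(\s_{-n},\t_n)-\s_n\Vert$ for all $\t_n$ in the carrier of $\s_n$, provided the mesh of $\T_n$ is strictly smaller than this bound minus the Lipschitz slack. That inequality forces $f_n(\s_{-n},\t_n)$ to lie outside the carrier of $\s_n$, so the carriers of the two points differ, giving (2).

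The main technical hurdle is orchestrating the choices compatibly: the sizes of the $T_n^{ij}$'s must be fixed first, small enough for (4), as they determine the constants $c_n^{ij} r_n^{ij}$; the mesh outside the $T_n^{ij}$'s must then be chosen small enough to secure both (2) and (3). Because the refinement step can be performed without altering the $T_n^{ij}$'s, these two levels of smallness can indeed be imposed simultaneously.
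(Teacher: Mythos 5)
Your overall architecture matches the paper's (fix the $T_n^{ij}$'s first, then refine the mesh away from them), and your treatment of properties (1), (3), (4) is fine. The gap is in property (2), specifically the sub-case $\s\in\partial T^{ij}$, where every coordinate $\s_m$ lies in $T_m^{ij}$ and the witnessing coordinate $n$ has $\s_n\in\partial T_n^{ij}$. There the carrier of $\s_n$ is a proper face of $T_n^{ij}$, and that face cannot be subdivided without destroying property (1); its diameter is fixed once $T_n^{ij}$ is fixed. So $\t_n$ in the carrier can be as far from $\s_n$ as the diameter of $T_n^{ij}$, not of order ``mesh''. The inequality you rely on, $\Vert f_n^{ij}(\t_n)-f_n^{ij}(\s_n)\Vert\le L\cdot\text{mesh}$, is false in this regime, and a naive Lipschitz bound in terms of $\operatorname{diam}(T_n^{ij})$ does not beat the displacement lower bound $c_n^{ij}r_n^{ij}$ once the Lipschitz constant of $f_n^{ij}$ exceeds roughly $1$. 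You cannot escape this case by switching coordinates either: when $\s\in\partial T^{ij}$, any coordinate $m$ with $\s_m\in T_m^{ij}\setminus\partial T_m^{ij}$ may have arbitrarily small displacement.

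What the paper does differently in exactly this regime, and what your argument is missing, is a structural fact rather than a Lipschitz estimate: because $T_n^{ij}$ is chosen as a scaled copy of $X_n^{ij}$ centered at $\s_n^{ij}$ (not an arbitrary small simplex with that barycenter), and because $f_n^{ij}$ is an affine homeomorphism fixing $\s_n^{ij}$ that sends $X_n^{ij}$ onto $Y_n^{ij}\supset X_n^{ij}$, one gets $f_n^{ij}(T_n^{ij})\supset T_n^{ij}$ and hence $f_n^{ij}(\partial T_n^{ij})\cap T_n^{ij}=\emptyset$. This gives a uniform positive distance $d>0$ from $f_n^{ij}(\partial T_n^{ij})$ to $T_n^{ij}$, so refining $\T_n'$ \emph{modulo} $\cup_{i,j}T_n^{ij}$ (Zeeman's relative simplicial approximation) until the outside mesh is below $d$ forces the carrier of $f_n^{ij}(\t_n)$ to be disjoint from $T_n^{ij}$ for every $\t_n\in\partial T_n^{ij}$, in particular not to contain $\s_n$. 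Your unconstrained choice of $T_n^{ij}$ (``a small full-dimensional simplex with $\s_n^{ij}$ as barycenter'') need not have this disjointness property, since the maps from Lemma \ref{lem f ij} may involve reflections or rotations. To repair the proof you need both ingredients: the specific scaled-copy choice of $T_n^{ij}$ giving $f_n^{ij}(\partial T_n^{ij})\cap T_n^{ij}=\emptyset$, and the relative (modulo-$T_n^{ij}$) refinement to exploit it.
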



\begin{example}
In Figure \ref{Figlemstep3}, we illustrate condition (2) of Lemma \ref{lem triangulation T}. In the figure, the carrier of $\t_n$ and $\s_n$ is the 2-simplex $A-G-C$, and the carrier of $f_n(\s_{-n}, \t_n)$ is the 2-simplex $A-G-B$.

\begin{figure}[ht]
\caption{Illustration of condition (2) of Lemma \ref{lem triangulation T}}\label{Figlemstep3}
\begin{tikzpicture}[scale=1.5]

\coordinate (A) at (0, 0);
\coordinate (B) at (4, 0);
\coordinate (C) at (2, 3.5);  

\coordinate (G) at (barycentric cs:A=1,B=1,C=1);  

\draw[thick] (A) -- (B) -- (C) -- cycle;

\draw (G) -- (A);
\draw (G) -- (B);
\draw (G) -- (C);

\draw (A) -- (G) -- (B) -- cycle;
\draw (B) -- (G) -- (C) -- cycle;
\draw (C) -- (G) -- (A) -- cycle;

\coordinate (F1) at (barycentric cs:A=1,G=1,C=1); 
\coordinate (F2) at (barycentric cs:A=1,G=1,B=1); 

\coordinate (IntPoint) at (barycentric cs:A=1,G=2,C=2); 

\fill[black] (A) circle (1.5pt) node[anchor=north east] {$A$};
\fill[black] (B) circle (1.5pt) node[anchor=north west] {$B$};
\fill[black] (C) circle (1.5pt) node[anchor=south] {$C$};
\fill[black] (G) circle (1.5pt) node[anchor=south east] {$G$};

\fill[red] (F1) circle (1.5pt) node[anchor=south] {$\t_n$};
\fill[red] (F2) circle (1.5pt) node[anchor=south] {$f_n(\tau_n,\sigma_{-n})$};

\fill[red] (IntPoint) circle (1.5pt) node[anchor=south] {$\sigma_n$};

\end{tikzpicture}

\end{figure}
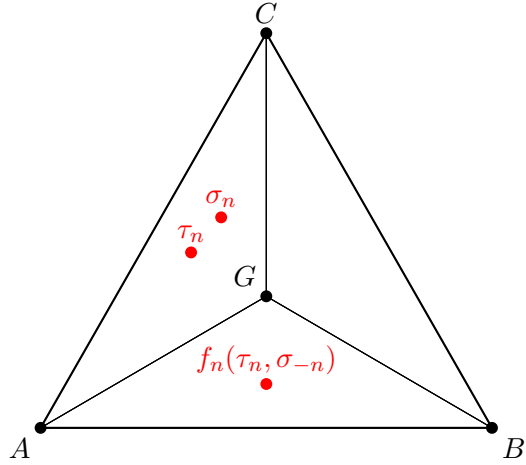

\end{example}

\begin{proof}
There exists $\tilde \eta > 0$ such that: 
\begin{enumerate} 

\item[(5)] if $\s \notin \cup_{i,j} (X^{ij}\backslash \partial X^{ij})$, then there exists $n$ such that $\Vert f_n(\s_{-n}, \t_n) - \s_n \Vert > \tilde \eta$, for all $\t$ within $\tilde \eta$ of $\s$; 
\item[(6)] $(\s, \t) \in \Graph(\BR_\e)$, if $\t$ is within $2\tilde \eta$ of $f(\s)$.  

\end{enumerate}

Let $T_n^{ij}$ be the simplex that has as its vertices  $\{(1-.5\tilde \eta)\s_n^{ij} + .5\tilde \eta v \mid  v \text{ a vertex of } X_n^{ij}\}$. Taking $\tilde \eta>0$ smaller if necessary, we can assume that $T^{ij}_n$ is contained in the interior of the carrier of $\s^{ij}_n$ in $\mathcal{X}_n$ (cf. Lemma \ref{lem function f}) and such that 
\begin{enumerate}
		\item[(7)] for each $(\s, \t) \in \Graph(\BR_\e)$ where $\s \in T^{ij}$, it follows that, for all $n$, $G_n(\s, \t_n) > \max_{s_n} G_n(\s', s_n) -\e$ for all $\s' \in T^{ij}$. 
\end{enumerate}
By construction, $T_n^{ij}$ is contained in $X_n^{ij}$ for each $n$ and $i,j$. Take a triangulation $\T_n'$ of $\S_n$ which refines  $\X_n$ such that:  
\begin{enumerate} 
\item[(8)] $T_n^{ij}$ is a simplex of $\T_n'$ for each $i,j$;  
\medskip
\item[(9)] The diameter of each simplex in $\T'_n$ is no more than $\tilde \eta$. 
\end{enumerate}

Then, properties (1)  (3) and (4) hold for the triangulation $\T_n'$ (from (7), (8) and (9)); for $\s \notin \cup_{i,j}(X^{ij} \backslash \partial X^{ij})$, property (2) also holds for this triangulation because of (4). Since $f_n^{ij}$ is an affine homeomorphism of $X_n^{ij}$ onto $Y_n^{ij}$, both of which have $\s_n^{ij}$ as their barycenters, we have  $f_n^{ij}(\partial  T_n^{ij}) \cap T_n^{ij} = \emptyset$. Therefore, there exists a sufficiently fine subdivision $\T_n$ of $\T_n'$ modulo  $\cup_{i,j} T_n^{ij}$ ---i.e., $\T_n$
subdivides all simplices which are not faces of any $T^{ij}$ (cf. \cite{Z1964})--- such that property (2) holds even for $\s \in \cup_{i,j} (X^{ij} \backslash (T^{ij} \backslash \partial T^{ij}))$. \end{proof}

\begin{example}
The example illustrates how to refine the triangulation shown in Figure \ref{triangmodulo} modulo the 2-simplex $A1-C1-B1$. The point $p$ in the figure is the barycenter of $A1-B1-C1$ and $q$ is another point that does not belong to this $2$-simplex. One can see then in $(b)$ how we can insert vertices away from the 2-simplex $A1-B1-C1$ so as to obtain another triangulation that refines the initial one in $(a)$. The carriers of $p$ and $q$ in the refined triangulation are, in particular, disjoint. Note also that, for the simplices of the refined triangulation that do not share an edge with $A1-B1-C1$, we could continue to subdivide them so as to shrink their diameter to an arbitrarily small number. 
\begin{figure}[ht]
\caption{Refining a triangulation modulo a 2-simplex}\label{triangmodulo}
\begin{minipage}[t]{0.3 \textwidth}
    \centering
\begin{tikzpicture}[scale=1]

\coordinate (A) at (0, 0);
\coordinate (B) at (4, 0);
\coordinate (C) at (2, 3.5);  

\coordinate (G) at (barycentric cs:A=1,B=1,C=1);  

\coordinate (A1) at ($(G)!0.3!(A)$);
\coordinate (B1) at ($(G)!0.3!(B)$);
\coordinate (C1) at ($(G)!0.3!(C)$);

\draw[thick] (A) -- (B) -- (C) -- cycle;

\draw[thick] (A1) -- (B1) -- (C1) -- cycle;

\draw[thick] (C1) -- (C);
\draw[thick] (C1) -- (B);
\draw[thick] (B1) -- (B);
\draw[thick] (B1) -- (A);
\draw[thick] (A1) -- (A);
\draw[thick] (A1) -- (C);

\coordinate (BaryAA1B1) at (barycentric cs:A=1,A1=1,B1=1); 
\coordinate (BaryAA1) at (barycentric cs:A=1,A1=1); 
\coordinate (BaryA1B1) at (barycentric cs:A1=1,B1=1); 
\coordinate (BaryAB1) at (barycentric cs:A=1,B1=1); 


\coordinate (Bary_A_BaryAA1_BaryAA1B1) at (barycentric cs:A=1,BaryAA1=1,BaryAA1B1=1);
\fill[red] (Bary_A_BaryAA1_BaryAA1B1) circle (1pt) node[anchor=north east] {q};

\fill[red] (G) circle (1pt) node[anchor=south] {p};

\fill[black] (A) circle (1pt) node[anchor=north east] {A};
\fill[black] (B) circle (1pt) node[anchor=north west] {B};
\fill[black] (C) circle (1pt) node[anchor=south] {C};

\fill[black] (A1) circle (1pt) node[anchor=south east] {A1};
\fill[black] (B1) circle (1pt) node[anchor=south west] {B1};
\fill[black] (C1) circle (1pt) node[anchor=south] {C1};

\end{tikzpicture}
\caption*{(a)}
\end{minipage}

\begin{tikzpicture}[scale=2]

\coordinate (A) at (0, 0);
\coordinate (B) at (4, 0);
\coordinate (C) at (2, 3.5);  

\coordinate (G) at (barycentric cs:A=1,B=1,C=1);  

\coordinate (A1) at ($(G)!0.3!(A)$);
\coordinate (B1) at ($(G)!0.3!(B)$);
\coordinate (C1) at ($(G)!0.3!(C)$);

\draw[thick] (A) -- (B) -- (C) -- cycle;

\draw[thick] (A1) -- (B1) -- (C1) -- cycle;

\draw[thick] (C1) -- (C);
\draw[thick] (C1) -- (B);
\draw[thick] (B1) -- (B);
\draw[thick] (B1) -- (A);
\draw[thick] (A1) -- (A);
\draw[thick] (A1) -- (C);

\coordinate (BaryAA1B1) at (barycentric cs:A=1,A1=1,B1=1); 
\coordinate (BaryAA1) at (barycentric cs:A=1,A1=1); 
\coordinate (BaryA1B1) at (barycentric cs:A1=1,B1=1); 
\coordinate (BaryAB1) at (barycentric cs:A=1,B1=1); 

\fill[red] (BaryAA1B1) circle (0.7pt) node[anchor=south east] {};
\fill[red] (BaryAA1) circle (0.7pt) node[anchor=south] {};
\fill[red] (BaryAB1) circle (0.7pt) node[anchor=north] {};

\coordinate (Bary_A_BaryAA1_BaryAA1B1) at (barycentric cs:A=1,BaryAA1=1,BaryAA1B1=1);
\fill[red] (Bary_A_BaryAA1_BaryAA1B1) circle (0.7pt) node[anchor=north east] {q};

\fill[red] (G) circle (0.7pt) node[anchor=south] {p};

\fill[black] (A) circle (0.7pt) node[anchor=north east] {A};
\fill[black] (B) circle (0.7pt) node[anchor=north west] {B};
\fill[black] (C) circle (0.7pt) node[anchor=south] {C};

\fill[black] (A1) circle (0.7pt) node[anchor=south east] {A1};
\fill[black] (B1) circle (0.7pt) node[anchor=south west] {B1};
\fill[black] (C1) circle (0.7pt) node[anchor=north] {C1};

\draw[thick] (BaryAA1B1) -- (B1);
\draw[thick] (BaryAA1B1) -- (A1);
\draw[thick] (BaryAA1B1) -- (BaryAB1);
\draw[thick] (BaryAA1B1) -- (A1);
\draw[thick] (BaryAA1) -- (BaryAA1B1);
\draw[thick] (A) -- (BaryAA1B1);

\end{tikzpicture}
\caption*{(b)}
\end{figure}
\end{example}

\medskip

\noindent {\bf Step 4.} We now construct a polytope-form game $\tilde G$ that is equivalent to the original game $G$. The best-reply correspondence of this game is denoted $\tilde{BR}$ and we analogously denote by $\tilde{BR}_{\e}$ the $\e$-best-reply correspondence. For each $n$, let $\tilde S_{n,0}$ be the set of vertices of the triangulation $\T_n$ obtained in Step 3. Let $\tilde \S_{n,0} \equiv \D(\tilde S_{n,0})$ and $\tilde \S_{n,1} \equiv \D(\tilde S_{n,1})$, where $\tilde S_{n,1} = \tilde S_{n+1, 0}$, with the convention that $n+1 = 1$ if $n = N$.  Let $\tilde \S_n = \tilde \S_{n,0} \times \tilde \S_{n, 1}$ and $\tilde \S = \prod_n \tilde \S_n$.  Let $\phi_{n,0}: \tilde \S_{n, 0} \to \S_n$ be the affine function that sends each $\tilde s_{n,0}$ to the corresponding mixed strategy $\s_n \in \S_n$.  We also use $\phi_{n, 1}: \tilde \S_{n,1} \to \S_{n+1}$ to denote the corresponding map for the second factor, $\tilde \S_{n,1}$. The payoffs are now defined as $\tilde G_n(\tilde \s) \equiv G_n({(\phi_{n,0}(\tilde \s_{n,0})}_{n \in \N})$.  In particular, the choice of the strategy $\tilde{\s}_{n,1}$ of any player $n$ is payoff-irrelevant. Therefore, the game $\tilde G$ is equivalent to $G$.

Let $\Graph(\tilde \BR_\e)$ be the graph of the $\e$-best-reply correspondence of $\tilde G$. We now use the map $f$ to construct a map $\tilde f: \tilde \S \to \tilde \S$ such that for each $n$, $\tilde{f}_n$ is independent of $n$'s coordinate $\tilde \s_n$. Let $\tilde \b_{n,0}: \S_n \to \tilde \S_{n,0}$ be the map that assigns to each $\s_n$, the barycentric coordinates of $\s_n$ in the triangulation $\T_n$: that is, $\s_n$ can be written uniquely as a convex combination of the vertices of its carrier in $\T_n$ and $\tilde{\b}_{n,0}$ assigns $\s_n$ to the same convex combination in the face of $\tilde{\S}_{n,0}$ generated by the vertices of that carrier. The map $\tilde \b_{n,1}: \S_{n+1} \to \tilde \S_{n,1}$ is defined similarly. We denote $$\tilde \b_{n,k} \equiv \times_{\tilde s_{n,k}} \beta^{\tilde s_{n,k}}_{n,k},$$ where $k=0,1$, $\beta^{\tilde s_{n,0} }_{n,0}: \S_{n} \to \Re$ assigns the value of the coordinate $\tilde s_{n,0}$ and  $\beta^{\tilde s_{n,1}}_{n,1}: \S_{n+1} \to \Re$ assigns the value of the coordinate $\tilde s_{n,1}$.

\begin{example} 
Figure \ref{barycoordinates} illustrates the definition of the barycentric coordinate map $\tilde \b_{n,k}$. Figure (a) represents a triangulation of a 1-simplex AB, in three vertices \{A, B, C\} and two 1-simplices (AC and BC). We singled-out a point in blue which has $BC$ as a carrier. We define the barycentric coordinate map $b$ from the 1-simplex $AB$ to a 2-simplex. First, the vertices of the 2-simplex are images under $b$ of vertices of the triangulation, as indicated in figure (b). In each 1-simplex of the triangulation represented in (a), the barycentric coordinate map is affine over the same colored face in (b). The barycentric coordinates of the point in blue depicted in (a) are its coordinates with respect to the simplex that carries it in the triangulation, so its coordinates are $(1/2,1/2)$---where the first entry corresponds to vertex $C$ and the second to $B$. These coordinates are preserved in figure (b), since $b$ is affine from the red simplex on (a) over the red simplex on (b). Note in addition that the image of $b$ is the space of a subcomplex of the face complex (cf. Example \ref{examplefacecomplex}) of (b): the subcomplex is defined by the green and red 1-simplices and their vertices, and the space is their union.

\begin{figure}[ht]
\caption{Illustration of $\tilde \b_{n,k}$}\label{barycoordinates}
\medskip 
\centering
\begin{minipage}[t]{0.3 \textwidth}

\begin{tikzpicture}[scale =2]
    \draw[thick] (0,1) -- (1,0);
	\draw[thick, red] (0.5,0.5) -- (1,0);
	\draw[thick, green] (0.5,0.5) -- (0,1);
    
    \filldraw[black] (0,1) circle (1pt) node[left] {\small A=(0,1)};
    \filldraw[black] (1,0) circle (1pt) node[right] {\small B=(1,0)};
    
    \filldraw[black] (0.5,0.5) circle (1pt) node[above right] {\small C=(1/2, 1/2)};
    
    \filldraw[blue] (0.75,0.25) circle (1pt) node[above right] {\small (1/2)C + (1/2)B};
    
\end{tikzpicture}
\caption*{(a)}
\end{minipage} 
\begin{minipage}[t]{0.6 \textwidth}
	\centering
\begin{tikzpicture}
    \draw[thick] (0,0) -- (4,0) -- (2,3.46) -- cycle;
	\draw[thick, red] (4,0) -- (2,3.46);
	\draw[thick, green] (2,3.46) -- (0,0);    

    \filldraw[black] (0,0) circle (2pt) node[below left] {\small $b(A) = (0,0,1)$};
    \filldraw[black] (4,0) circle (2pt) node[below left] {\small $b(B) = (1,0,0)$};
    \filldraw[black] (2,3.46) circle (2pt) node[above] {\small $b(C) = (0,1,0)$};
    
    \filldraw[blue] (3,1.73) circle (2pt) node[left] {\small $b((1/2) C + (1/2) B) = (1/2)b(C) + (1/2)b(B)$};
    
    \draw[dashed] (4,0) -- (3,1.73); 
    
\end{tikzpicture}
\caption*{(b)}
\end{minipage} 

\end{figure}
\end{example}

Define $\tilde f$ by $\tilde f_n(\tilde \s) = (\tilde f_{n,0}(\tilde \s), \tilde f_{n,1}(\tilde \s))$, where: 
\[
\tilde f_{n,0}(\tilde \s) \equiv \tilde \beta_{n,0}(f_n(\phi_{-n, 0}(\tilde \s_{-n,0}), \phi_{n-1, 1}(\tilde \s_{n-1,1})))
\]
with the convention that $n-1 = N$ if $n = 1$, and
\[
\tilde f_{n,1}(\tilde \s) \equiv \tilde \beta_{n,1}(\phi_{n+1,0}(\tilde \s_{n+1,0})).
\]

Let $\tilde \S_{n,k}^* \equiv \tilde \beta_{n,k}(\S_{n+k})$, for $k = 0, 1$. The set $\tilde \S_{n,k}^*$ is then the space of a subcomplex $\mathcal{S}^*_{n,k}$ of the face complex of $\D(\tilde{S}_{n,k})$. For each $n$ and $k = 0,1$, let $\tilde T_{n,k}^{ij} \equiv \tilde \beta_{n,k}({(f_{n+k}^{ij})}^{-1}(T_{n,k}^{ij}))$. Recall that $f^{ij}_n: X^{ij}_n \to Y^{ij}_n, X^{ij}_n \subset \text{int}(Y^{ij}_n)$ is an affine homeomorphism.  Since $T^{ij}_{n,k} \subseteq \text{int}(X^{ij}_n)$, then, for $k \in \{0,1\}$, $(f^{ij}_n)^{-1}(T^{ij}_{n,k}) \subseteq T^{ij}_{n,k}$. Hence, $\tilde \beta_{n,k}(T_{n,k}^{ij})$ is a face of the simplex $\tilde{\S}_{n,k}$ and $\tilde{T}^{ij}_{n,k}$ is contained in that face. 


Let

$$\tilde \S_n^* \equiv \tilde \S_{n,0}^* \times \tilde \S_{n,1}^*  \hspace*{0.5cm} ; \hspace*{0.5cm} \tilde T_n^{ij} \equiv \tilde T_{n,0}^{ij} \times \tilde T_{n,1}^{ij}  \hspace*{0.5cm} ; \hspace*{0.5cm} \tilde \S^* \equiv \prod_n \tilde \S^*_n  \hspace*{0.5cm} ; \hspace*{0.5cm} \tilde T^{ij} \equiv \prod_n \tilde T_n^{ij}.$$

For each $i,j,n$ and $k$, let $\tilde \s_{n,k}^{ij} \equiv \tilde{\beta}_{n,k}(\s_{n+k}^{ij})$, $\tilde \s_n^{ij} \equiv (\tilde \s_{n,0}^{ij}, \tilde \s_{n,1}^{ij})$ and $\tilde \s^{ij} \equiv {(\tilde \s_n^{ij})}_{n \in \N}$. By Property (1) of Lemma \ref{lem triangulation T}, for each $n$, $i, j$ and $k = 0, 1$,  $\tilde \s_{n,k}^{ij}$ is the barycenter of both $\tilde T_n^{ij}$ and $\tilde \beta_{n,k}(T_{n,k}^{ij})$.

The following lemma summarizes key properties of $\tilde f$.

\begin{lemma}\label{lem function f tilde}
	The function $\tilde f$ maps $\tilde \S$ into $\tilde \S^*$ and player $n$'s coordinate functions $\tilde{f}_{n,0}$ and $\tilde{f}_{n,1}$ are independent of the coordinate $\tilde \s_n$.  Moreover,
	\begin{enumerate}
		\item $\tilde f_{n,1}$ is a function of $\tilde \s_{n+1, 0}$ and the restriction of the function to $\tilde \S_{n+1, 0}^*$ is affine on each simplex of $\mathcal{S}_{n+1, 0}^*$;
		\item the restriction of $\tilde f_n$ to each $\tilde T^{ij}$ is the cartesian product of affine homeomorphisms $\tilde f_{n,0}: \tilde T_{n-1, 1}^{ij} \to \tilde \beta_{n,0}(T_n^{ij})$ and $\tilde f_{n,1}: \tilde T_{n+1, 0}^{ij} \to  \tilde \beta_{n, 1}(T_{n+1}^{ij})$;
		\item if for each $n \in \N, k=0,1$, $\tilde f_{n,k}(\tilde \s)$ belongs to the simplicial neighborhood of the closed star of a pure strategy $\tilde s_{n,k} \in \tilde \S^*_{n,k}$ in the triangulation $ \mathcal{S}^*_{n,k}$, then $(\tilde \s, \tilde s) \in \Graph(\tilde \BR_\e)$;
		\item if $\tilde \s \notin \tilde T^{ij} \backslash \partial \tilde T^{ij}$ for any $i,j$ and if for all $n$, $\phi_{n-1, 1}(\tilde \s_{n-1, 1})$ belongs to the carrier of $\phi_{n,0}(\tilde \s_{n,0})$ in $\T_n$, then there exists $n$ such that $\tilde \s_{n,0}$ does not belong to the carrier of $\tilde f_{n,0}(\tilde \s)$ in $\mathcal{S}^*_{n,0}$; 
		\item the only fixed points of $\tilde f$ are the profiles $\tilde \s^{ij}$ and each $\tilde \s^{ij}$ has index $r^{ij}$.
	\end{enumerate}
\end{lemma}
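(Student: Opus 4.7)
The opening statement that $\tilde f$ maps into $\tilde \S^*$ and that $\tilde f_n$ is independent of the coordinate $\tilde \s_n$ is immediate from the explicit definitions of $\tilde f_{n,0}$ and $\tilde f_{n,1}$. Part (1) is equally direct: $\tilde f_{n,1}(\tilde \s) = \tilde\beta_{n,1}(\phi_{n+1,0}(\tilde \s_{n+1,0}))$, where $\phi_{n+1,0}$ is globally affine and $\tilde\beta_{n,1}$ is affine on each cell of $\mathcal{S}^*_{n,1}$ (which corresponds under $\tilde\beta_{n,1}$ to a simplex of $\T_{n+1}$).

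For part (2), when $\tilde \s \in \tilde T^{ij}$ the projected profile $\phi(\tilde \s)$ lies in $X^{ij}$, where $f$ coincides with the product $\prod_m f_m^{ij}$ of affine homeomorphisms, each depending only on player $m$'s coordinate. Thus $\tilde f_{n,0}$ restricted to $\tilde T^{ij}$ collapses to $\tilde\beta_{n,0} \circ f_n^{ij} \circ \phi_{n-1,1}$, and depends only on $\tilde \s_{n-1,1}$. Arranging in Step 1 that $(f_n^{ij})^{-1}(T_n^{ij}) \subseteq T_n^{ij}$ ensures that all three factors are affine on the relevant domain, so the composition is an affine homeomorphism from $\tilde T_{n-1,1}^{ij}$ onto $\tilde\beta_{n,0}(T_n^{ij})$; $\tilde f_{n,1}$ is handled symmetrically.

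Parts (3) and (4) are bookkeeping translations of parts (3) and (2) of Lemma \ref{lem triangulation T}. For (3), projecting via $\phi_{n,0}$ reduces the hypothesis to saying that $f_n$ at the profile $(\phi_{-n,0}(\tilde \s_{-n,0}), \phi_{n-1,1}(\tilde \s_{n-1,1}))$ lies in the simplicial neighborhood of the closed star of $\phi_{n,0}(\tilde s_{n,0})$ in $\T_n$; Lemma \ref{lem triangulation T}(3) then guarantees that $\phi_{n,0}(\tilde s_{n,0})$ is an $\e$-best reply in $G$. Since payoffs in $\tilde G$ depend only on first coordinates and any second coordinate is a best reply, $(\tilde \s, \tilde s) \in \Graph(\tilde\BR_\e)$. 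For (4), setting $\s_n = \phi_{n,0}(\tilde \s_{n,0})$ and $\t_n = \phi_{n-1,1}(\tilde \s_{n-1,1})$ turns the hypotheses into those of Lemma \ref{lem triangulation T}(2); the resulting conclusion on $\T_n$ is lifted back to $\mathcal{S}^*_{n,0}$ through $\tilde\beta_{n,0}$.

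The substantive point is part (5). The equation $\tilde f_{n,1}(\tilde\s) = \tilde\s_{n,1}$ forces $\tilde\s_{n,1} = \tilde\beta_{n,1}(\phi_{n+1,0}(\tilde\s_{n+1,0}))$, so every fixed point lies in the subspace
\[
\tilde\S^{**} = \{\tilde\s \in \tilde\S^* : \tilde\s_{n,1} = \tilde\beta_{n,1}(\phi_{n+1,0}(\tilde\s_{n+1,0})) \text{ for all } n\};
\]
moreover $\tilde f(\tilde\S) \subseteq \tilde\S^{**}$ by definition of $\tilde f_{n,1}$. The map $g' : \S \to \tilde\S^{**}$ defined by $g'(\s)_n = (\tilde\beta_{n,0}(\s_n), \tilde\beta_{n,1}(\s_{n+1}))$ is a homeomorphism, and using $\phi \circ \tilde\beta = \text{Id}$ a direct computation gives $(g')^{-1} \circ \tilde f|_{\tilde\S^{**}} \circ g' = f$. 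This identifies the fixed points of $\tilde f$ with $g'(\s^{ij}) = \tilde\s^{ij}$. For the indices, factoring $\tilde f = \iota \circ \tilde f^{\sharp}$ where $\iota : \tilde\S^{**} \hookrightarrow \tilde\S$ is the inclusion and $\tilde f^{\sharp} : \tilde\S \to \tilde\S^{**}$ is $\tilde f$ viewed as a map into $\tilde\S^{**}$, the commutativity property of the fixed point index yields that the index of $\tilde\s^{ij}$ under $\tilde f = \iota \circ \tilde f^{\sharp}$ equals the index of $\tilde\s^{ij}$ under $\tilde f^{\sharp} \circ \iota = \tilde f|_{\tilde\S^{**}}$; conjugation with $g'$ then identifies this with the index of $\s^{ij}$ under $f$, which is $r^{ij}$ by Lemma \ref{lem function f}. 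The index reduction via commutativity is the main step; the rest is definition chasing.
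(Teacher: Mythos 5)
Your treatment of the preamble and of parts (1)--(3) is fine and matches the paper's reasoning. Two problems arise in parts (4) and (5).

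For part (4) your ``bookkeeping translation'' is incomplete. Recall $\tilde T^{ij}_{n,k} = \tilde\b_{n,k}\bigl((f^{ij}_{n+k})^{-1}(T^{ij}_{n+k})\bigr)$, and $(f^{ij}_{n+k})^{-1}(T^{ij}_{n+k})$ is a \emph{strict} subset of $T^{ij}_{n+k}$. Hence the hypothesis $\tilde\s \notin \tilde T^{ij}\backslash\partial\tilde T^{ij}$ does \emph{not} imply that the projected profile $\s=(\phi_{n,0}(\tilde\s_{n,0}))_n$ lies outside $\cup_{ij}(T^{ij}\backslash\partial T^{ij})$. Lemma~\ref{lem triangulation T}(2) cannot be invoked in that residual case. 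The paper handles it explicitly: when $\s\in T^{ij}\backslash(\partial T^{ij}\cup (f^{ij})^{-1}(T^{ij}\backslash\partial T^{ij}))$, one argues directly that for some $n$ the image $f^{ij}_n(\phi_{n-1,1}(\tilde\s_{n-1,1}))$ escapes $T^{ij}_n\backslash\partial T^{ij}_n$. This is a genuine second case, not just a coordinate change.

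Part (5) has the more serious error. You claim $\tilde f(\tilde\S)\subseteq\tilde\S^{**}$ ``by definition of $\tilde f_{n,1}$,'' but this is false. Letting $\tilde\t=\tilde f(\tilde\s)$, membership of $\tilde\t$ in $\tilde\S^{**}$ would require $\tilde\t_{n,1}=\tilde\b_{n,1}(\phi_{n+1,0}(\tilde\t_{n+1,0}))$; but $\tilde\t_{n,1}=\tilde\b_{n,1}(\phi_{n+1,0}(\tilde\s_{n+1,0}))$ copies the \emph{original} first coordinate of player $n+1$, while $\phi_{n+1,0}(\tilde\t_{n+1,0})=f_{n+1}(\cdots)$ is the \emph{updated} one. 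These differ away from fixed points, so only the fixed points of $\tilde f$ lie in $\tilde\S^{**}$ --- the image does not. Consequently the corestriction $\tilde f^\sharp:\tilde\S\to\tilde\S^{**}$ is not well-defined, the factorization $\tilde f=\iota\circ\tilde f^\sharp$ fails, and the subsequent conjugation identity $(g')^{-1}\circ\tilde f|_{\tilde\S^{**}}\circ g'=f$ does not even parse (because $\tilde f(\tilde\S^{**})\not\subseteq\tilde\S^{**}$). The commutativity reduction you call ``the main step'' therefore collapses. The paper instead computes the index locally: restrict $\tilde f$ to $\tilde T^{ij}$, use parts (2) and (4) to justify the localization, set up a one-parameter homotopy $\hat f^\l$ that at $\l=0$ produces, for each $n$, a fixed-point component of the product form $\{\tilde\s^{ij}_{n,0}\}\times\tilde T^{ij}_{n-1,1}$, and read off the index via the multiplication property. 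A correct commutativity-based alternative exists --- corestrict $\tilde f$ to $\tilde\S^*$ (where it \emph{does} land), identify $\tilde\S^*\cong\S\times\S$, homotope the resulting map $(\s^{(0)},\s^{(1)})\mapsto(\hat f(\s^{(0)},\s^{(1)}),\s^{(0)})$ to $(\s^{(0)},\s^{(1)})\mapsto(f(\s^{(0)}),\s^{(0)})$, and only then apply commutativity --- but that is substantially more than the one-line conjugation you wrote.
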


\begin{proof}It follows from the definition of $\tilde f$ that each player's coordinate function $\tilde f_n$ is independent of his strategies. By definition of $\tilde{f}$ it maps into $\tilde \S^*$.  The numbered items (1), and (2) also follow by definition of $\tilde{f}$ and $\tilde \b_{n,k}$.  As for (3), it follows from property (3) of Lemma \ref{lem triangulation T}.  Property (4) follows from property (2) of Lemma \ref{lem triangulation T} in the case where ${(\phi_{n,0}(\tilde \s_{n,0}))}_{n \in \N} \notin \cup_{i,j} (T^{ij} \backslash \partial T^{ij})$.  Consider now the case where ${(\phi_{n,0}(\tilde \s_{n,0}))}_{n \in \N}$ belongs to the set $T^{ij} \backslash (\partial T^{ij} \cup ((f^{ij})^{-1}(T^{ij}\backslash \partial T^{ij}))$ for some $i,j$. There exists then $n$ such that $\phi_{n,0}(\tilde \s_{n,0})$ belongs to the set $T_n^{ij} \backslash (\partial T_n^{ij} \cup ((f^{ij}_n)^{-1}(T_n^{ij} \backslash \partial T_n^{ij}))$. As $\phi_{n-1, 1}(\tilde \s_{n-1,1})$ belongs to the same simplex as $\phi_{n,0}(\tilde \s_{n,0})$, it also lies in  $T_n^{ij} \backslash ((f^{ij}_n)^{-1}(T_n^{ij} \backslash \partial T_n^{ij}))$. Its image under $f_{n}^{ij}$ (recall it depends only on $\phi_{n-1, 1}(\tilde \s_{n-1,1})$) lies outside $T_n^{ij} \backslash \partial T_n^{ij}$. This completes the proof of property (4).  
	
	There remains to prove point (5). The index of $\tilde \s^{ij}$ can be computed using the restriction of $\tilde f$ to $\tilde T^{ij}$. Call this restriction $\hat{f}$ and note that each coordinate function $\hat f_n$ is a cartesian product of affine homeomorphisms: $\hat f_n = \tilde f_{n,0} \times \tilde f_{n-1, 1}: \tilde T_{n,0}^{ij} \times \tilde T_{n-1, 1}^{ij} \to \tilde  \beta_{n,0}(T_n^{ij}) \times \tilde \beta_{n-1, 1}(T_n^{ij})$, where $\hat f_n(\tilde \s_{n,0}, \tilde \s_{n-1, 1}) = (\tilde \beta_{n,0}(f_n^{ij}(\phi_{n-1,1}(\tilde \s_{n-1, 1}))), \tilde \s_{n,0})$.  For $\l \in [0, 1]$, let $\hat f^\l$ be the map sending $(\tilde \s_{n,0}, \tilde \s_{n-1, 1})$ to $(\tilde \beta_{n,0}(f_n^{ij}((1-\l)\phi_{n,0}(\tilde \s_{n,0}) + \l \phi_{n-1,1}(\tilde \s_{n-1,1}))), (1-\l)\tilde \s_{n-1,1} + \l \tilde \s_{n,0})$.  When $\l = 1$, we get the map $\hat f$.  For $\l > 0$, the unique fixed point is $(\tilde \s_n^{ij}, \tilde \s_{n-1, 1}^{ij})$ and for $\l = 0$, we have a component of fixed points $\{\, \tilde \s_{n,0}^{ij} \, \} \times \tilde T_{n,1}^{ij}$.  By the multiplication property of the index, the index of the component of fixed points for $\l = 0$ is the index of $\s_n^{ij}$ under the map $f_n^{ij}$.  Therefore, the index of $\tilde \s^{ij}$ is $r^{ij}$, as claimed.\end{proof}

\medskip

\noindent {\bf Step 5.} We now construct a family of $\e$-perturbed games, indexed by $\tilde \s \in \tilde \S$.  First we need to introduce some notation.  Given vectors $\tilde g^0 \in \prod_n \Re^{\tilde S_{n,0}}$ and $\tilde g^1 \in \prod_n \Re^{\tilde S_{n,1}}$, we define the game $\tilde G \oplus \tilde g^0 \oplus \tilde g^1$ as a polytope-form game where for each $\tilde \s$ and each $n$, the payoff is $\tilde G_n(\tilde \s) + \tilde g_n^0 \cdot \tilde \s_{n,0} + \tilde g_n^1 \cdot \tilde \s_{n,1}$ (where $a \cdot b$ denotes the scalar product of the vectors $a$ and $b$). Thus $\tilde g_{n,\tilde s_{n,0}}^0$ represents a ``bonus'' for playing $\tilde s_{n,0}$ and similarly $\tilde g_{n,\tilde s_{n,1}}^0$, for each $\tilde s_{n,1}$.  Given functions $\tilde g^0: \tilde \S \to \prod_n \Re_+^{\tilde S_{n,0}}$, $\tilde g^1: \tilde \S \to \prod_n \Re_+^{\tilde S_{n,1}}$ and profile $\tilde \s$, we have a finite polytope-form game $\tilde G \oplus \tilde g^0(\tilde \s) \oplus \tilde g^1(\tilde \s)$.   


For the time being fix $\e_0 > 0$. The exact choice will be determined later.  
We first define $\tilde g^1(\cdot)$.  For each $n$ and $\tilde s_{n,1}$, it depends only on $\tilde \s_{n+1, 0}$ and it is given by 
\[
\tilde g_{n, \tilde s_{n,1}}^1(\tilde \s_{-n}) = \e_0\tilde \beta^{\tilde{s}_{n,1}}_{n, 1}(\phi_{n+1, 0}(\tilde \s_{n+1, 0})).
\]
where $\tilde \beta^{\tilde s_{n,1}}_{n,1}$ gives the $\tilde s_{n,1}$ coordinate of the barycentric coordinates w.r.t. the triangulation $\T_{n+1}$.
To define $\tilde g^0$, we first construct for each $n$, a function  $r_n : \S_{-n} \to \Re$ satisfying:	
	\begin{enumerate}
		\item For all $\s_{-n} \in \S_{-n}, r_n(\s_{-n}) \ge \text{max}_{s_n}G_n(s_n, \s_{-n})$;
		\item The map $r_n$ is constant over $T^{ij}_{-n}$.
		\item If $\t_n$ is an $\e$-best reply against $\s_{-n}$, then $r_n(\s_{-n}) - G_n(\t_n, \s_{-n}) < \e$.
	\end{enumerate}
To construct this function, for each $i,j$ and $\s \in T^{ij}$, define $r_n(\s) = \max_{s_n \in S_n, \s' \in T^{ij}} G_n(\s', s_n)$.  On the sets $T^{ij}$ the function satisfies the three desired properties (thanks to point (4) of Lemma \ref{lem triangulation T}). There exists an open neighborhood $W^{ij}$ of $T^{ij}$ such that (constantly) extending this function to $W^{ij}$ still satisfies the three properties.  By Urysohn's Lemma,  there is a continuous function $\xi: \S \to [0, 1]$ that is one on $\cup_{ij} T^{ij}$ and zero outside $\cup_{ij} W^{ij}$.  The function  $r_n$ is then obtained as follows:  $r_n(\s) = \text{max}_{s_n}G_n(s_n, \s_{-n})$ if $\s \notin W^{ij}$; $r_n(\s) = \xi(\s)\max_{s_n \in S_n, \s' \in T^{ij}} G_n(\s', s_n) + (1-\xi(\s))\max_{s_n}G_n(s_n, \s_{-n})$ if $\s \in W^{ij}$.     

Recall that $\tilde \S_{n,0}^*$ is the space of a subcomplex of the face complex of $\tilde \S_{n,0}$.  Let $\tilde \g_{n,s_{n,0}}: \tilde \S_{n,0}^* \to [0, 1]$ be a Urysohn function that is one on the closed star of $\tilde s_{n,0}$ in the complex $\mathcal{S}_{n,0}^*$ and zero outside the simplicial neighborhood of the closed star of $\tilde s_{n,0}$ in this complex. Then,
\[
\tilde g_{n, \tilde s_{n,0}}^0(\tilde \s_{-n}) \equiv \tilde \g_{n, \tilde s_{n,0}}(\t_n)[r_n(\phi_{-n, 0}(\tilde \s_{-n,0})) - G_n(\tilde \s_{-n, 0}, \phi_{n,0}(\tilde{s}_{n,0}))]  +  \e_0 \tilde f_{n, \tilde s_{n,0}}(\tilde \s_{-n}),
\]
where
\[
\t_n = \tilde \beta_{n,0}(f_{n}(\phi_{-n, 0}(\tilde \s_{-n}), \phi_{n-1, 1}(\tilde \s_{n-1, 1}))),
\]
and $r_n(\cdot)$ is the function we just defined.  The following lemma sets out key properties of the perturbations.  

\begin{lemma}\label{lem affine}
	Recall that $\e>0$ is fixed from the beginning of the proof (cf. step 2). Each player $n$'s coordinate functions $\tilde g_n^0$ and $\tilde g_n^1$ are independent of $\tilde \s_n$.  Moreover:
	\begin{enumerate}
		\item the restriction of $\tilde g_{n, \tilde s_{n,1}}^1$ to $\S_{n+1, 0}^*$ is affine on each simplex of $\mathcal{S}_{n+1, 0}^*$;
		\item the restriction of $\tilde g^0_n$ to each $\tilde T^{ij}$ is a cartesian product  $\prod_{\tilde{s}_{n,0}}\tilde g_{n,\tilde s_{n,0}}^{ij,0}$, where for each $\tilde s_{n,0} \in \tilde T^{ij}_n$, $\tilde g_{n,\tilde s_{n,0}}^{ij,0} :\tilde T_{-n,0}^{ij} \times \tilde T_{n-1, 1}^{ij} \to \Re_+$ and $\tilde G_n(\tilde{s}_{n,0}, \tilde{\s}_{-n}) + \tilde g_{n,\tilde s_{n,0}}^{ij,0}(\tilde \s_{-n}) = r_n(\phi_{-n,0}(\tilde \s_{-n, 0}))  + \e_0 \tilde f_{n,\tilde{s}_{n,0}}(\tilde \s_{n-1,1})$, which is an affine function. 
		\item $\Vert \tilde g^0 \Vert + \Vert \tilde g^1 \Vert < \e$, if $\e_0$ is small.
	\end{enumerate}
\end{lemma}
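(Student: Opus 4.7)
The plan is to verify the four assertions of the lemma---independence from $\tilde \s_n$, the piecewise affineness of $\tilde g^1$ in item (1), the cartesian-product-plus-affine-completion structure of $\tilde g^0$ on each $\tilde T^{ij}$ in item (2), and the norm bound in item (3)---by unwinding the definitions of Step 5 against the properties already established for $\tilde f$, $r_n$, and the triangulation $\T_n$. For the independence claim I would just read the formulas: $\tilde g^1_{n,\tilde s_{n,1}}$ sees only $\tilde \s_{n+1,0}$, while each input of $\tilde g^0_{n,\tilde s_{n,0}}$ (namely $\t_n$, $r_n\circ\phi_{-n,0}$, $G_n(\cdot,\phi_{n,0}(\tilde s_{n,0}))$, and $\tilde f_{n,\tilde s_{n,0}}$) is a function of $\tilde \s_{-n}$ alone; the crucial points are that $\t_n$ substitutes $\phi_{n-1,1}(\tilde \s_{n-1,1})$ for player $n$'s own strategy and that $\tilde f_{n,\tilde s_{n,0}}$ is independent of $\tilde\s_n$ by Lemma \ref{lem function f tilde}.

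For item (1), I would write $\tilde g^1_{n,\tilde s_{n,1}} = \e_0 \tilde \beta^{\tilde s_{n,1}}_{n,1} \circ \phi_{n+1,0}$, observe that $\phi_{n+1,0}$ restricts to a simplicial isomorphism from the subcomplex $\mathcal{S}^*_{n+1,0}$ onto $\T_{n+1}$, and note that a single barycentric coordinate is affine on each simplex of $\T_{n+1}$; hence the composition is affine on each simplex of $\mathcal{S}^*_{n+1,0}$.

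For item (2), I would restrict to $\tilde \s \in \tilde T^{ij}$ and first use that $(\phi_{-n,0}(\tilde \s_{-n,0}), \phi_{n-1,1}(\tilde \s_{n-1,1})) \in T^{ij}$, so Lemma \ref{lem function f}(3)(d) gives $f_n = f^{ij}_n$ there, which places $\t_n$ inside the face $\tilde \beta_{n,0}(T^{ij}_n)$. For any pure strategy $\tilde s_{n,0}$ that is a vertex of $T^{ij}_n$, $\t_n$ lies in its closed star, hence $\tilde \gamma_{n,\tilde s_{n,0}}(\t_n) = 1$; the bonus formula collapses, and adding $\tilde G_n(\tilde s_{n,0}, \tilde \s_{-n})$ cancels the $-G_n$ summand to leave $r^{ij}_n + \e_0 \tilde f_{n,\tilde s_{n,0}}(\tilde \s_{n-1,1})$. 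Here $r_n \equiv r^{ij}_n$ is constant on $T^{ij}_{-n}$ by property (2) of the construction of $r_n$, and $\tilde f_{n,\tilde s_{n,0}}$ is affine in $\tilde \s_{n-1,1}$ by Lemma \ref{lem function f tilde}(2), yielding the claimed affine structure. The cartesian product form is the trivial assembly of scalar coordinates into the vector $\tilde g^0_n$, with each scalar depending only on the asserted coordinates.

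For item (3), $\|\tilde g^1\| \leq \e_0$ by inspection, and the $\e_0 \tilde f_{n,\tilde s_{n,0}}$ summand of $\tilde g^0$ is likewise $\e_0$-bounded. The hard part will be controlling the Urysohn term $\tilde \gamma(r_n - G_n)$: when $\tilde \gamma_{n,\tilde s_{n,0}}(\t_n)>0$, a single-player application of Lemma \ref{lem triangulation T}(3) (picking arbitrary vertices of the other carriers to fulfill the other players' conditions) forces $\phi_{n,0}(\tilde s_{n,0})$ to be an $\e$-best reply to $\phi_{-n,0}(\tilde \s_{-n,0})$, and property (3) of $r_n$'s construction then yields $r_n - G_n < \e$. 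To secure the strict total bound $\|\tilde g^0\| + \|\tilde g^1\| < \e$ I would absorb the $\e_0$-sized overflow by rerunning Steps 2--3 with some $\e' < \e$ in place of $\e$---this is legitimate because the earlier constructions only require some positive constant---producing a definite slack $\e - \e'$, and then choosing $\e_0$ smaller than, say, one-quarter of that slack.
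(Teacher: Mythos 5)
Your proposal tracks the paper's argument almost step for step: the independence claim and items (1) and (2) match (reading off the formulas, composing $\tilde\beta^{\tilde s_{n,1}}_{n,1}$ with the affine $\phi_{n+1,0}$ for (1), and for (2) using $\tilde\gamma_{n,\tilde s_{n,0}}(\t_n)=1$ at vertices of $\tilde T^{ij}_{n,0}$, constancy of $r_n$ on $\tilde T^{ij}_{-n}$, and affineness of $\tilde f_{n,\tilde s_{n,0}}$). Two differences in your treatment of item (3) are worth noting, though neither is a gap. First, you invoke Lemma \ref{lem triangulation T}(3) directly in the original game $G$, whereas the paper invokes its lifted counterpart Lemma \ref{lem function f tilde}(3) in $\tilde G$; these are interchangeable here because the $\e$-best-reply condition is coordinatewise and $\tilde\beta_{n,0}$ identifies $\mathcal{S}^*_{n,0}$ simplicially with $\T_n$, so your route just unwinds one less layer of notation. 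Second, the paper proves the pointwise strict bound $r_n-G_n<\e$ and declares that sufficient, which implicitly appeals to compactness and continuity of the Urysohn summand to upgrade to a uniform bound $c<\e$ (leaving room for the two $\e_0$-sized terms). You instead propose to run Steps 2--3 from the outset with a smaller $\e'<\e$ (say $\e/2$), producing an explicit slack $\e-\e'$ before choosing $\e_0$. That is a legitimate and perhaps cleaner way to secure the strict inequality, since the theorem only asserts existence of some $V_i$'s for the given $\e$ and the earlier constructions require only a positive tolerance; one should be explicit, though, that this replacement must be made at the start of Step 2 (so the $V_i$'s and all subsequent objects are built against $\BR_{\e'}$), not retrofitted afterward.
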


\begin{proof} Item (1) follows from the fact that $\tilde \b^{\tilde{s}_{n,1}}_{n,1}$ is affine in each cell of $\mathcal{S}^{*}_{n+1,0}$ and $\phi_{n+1,0}$ is affine by definition.  For (2), the map $\tilde g^0_n$ is a cartesian product $\prod_{\tilde{s}_{n,0}}\tilde g_{n,\tilde s_{n,0}}^{ij,0}$ by definition. Fix  $\tilde \s_{-n} \in \tilde T^{ij}_{-n}$. Note that any vertex $\tilde{s}_{n,0}$ of $\tilde T^{ij}_n$ is such that $\tilde \g_{n, \tilde s_{n,0}}(\t_n) =1$, where $\t_n = \tilde \beta_{n,0}(f_{n}(\phi_{n-1, 1}(\tilde \s_{n-1, 1})))$ (recall $f_n$ depends exclusively on $\s_{n-1,1}$ in this region). Therefore, $\tilde G_n(\tilde{s}_{n,0}, \tilde{\s}_{-n}) + \tilde g_{n,\tilde s_{n,0}}^{ij,0}(\tilde \s_{-n}) = r_n(\phi_{-n,0}(\tilde \s_{-n, 0}))  + \e_0 \tilde f_{n,\tilde{s}_{n,0}}(\tilde \s_{n-1,1})$. By definition $r_n(\phi_{-n,0}(\tilde \s_{-n, 0}))$ is constant on $\tilde T^{ij}$, which makes $r(\phi_{-n,0}(\tilde \s_{-n, 0}))  + \e_0 \tilde f_{n,\tilde{s}_{n,0}}(\tilde \s_{n-1,1})$ affine. For (3), it is sufficient to prove that the difference 

$$\tilde \g_{n, \tilde s_{n,0}}(\t_n)[r_n(\phi_{-n, 0}(\s_{-n,0})) - G_n(\tilde \s_{-n, 0}, \phi_{n,0}(\tilde{s}_{n,0}))]$$ is strictly less than $\e>0$. First, if $\t_n$ is outside the simplicial neighborhood of the closed star of $\tilde s_{n,0}$, then it follows that $\tilde \g_{n, \tilde s_{n,0}}(\t_n) =0$ and the result follows. If $\t_n$ is inside that simplicial neighborhood, then item (3) of Lemma \ref{lem function f tilde} implies $\tilde s_{n,0}$ is an $\e$-best reply to $\tilde \s$ in $\tilde G$. Hence $\e > r_n(\phi_{-n, 0}(\s_{-n,0})) - G_n(\tilde \s_{-n, 0}, \phi_{n,0}(\tilde{s}_{n,0})) \ge 0$, which implies the desired result.\end{proof}

We conclude this step with a lemma concerning best replies in these perturbed games. Its proof, too, follows from the construction of the perturbed games.

\begin{lemma}\label{lem best replies in perturbed games}
	The profile $(\tilde s_{n,0}, \tilde s_{n,1})$ is a best reply to $\tilde \s$ in the game $\tilde G \oplus \tilde g^0(\tilde \s) \oplus \tilde g^1(\tilde \s)$ only if:
	\begin{enumerate}
		\item $\tilde s_{n,0}$ is a vertex of the carrier of $\tilde f_{n,0}(\tilde \s_{-n,0}, \tilde \s_{n-1, 1})$ in $\mathcal{S}_{n,0}^*$;
		\item $\tilde s_{n,1}$ is a vertex of the carrier of  $\tilde f_{n,1}(\tilde \s_{n+1,0})$ in $\mathcal{S}_{n+1,0}^*$.
	\end{enumerate} 
\end{lemma}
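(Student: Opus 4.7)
The plan is to analyze the payoff of each pure strategy $(\tilde s_{n,0},\tilde s_{n,1})$ against $\tilde\s$ in the perturbed game and show that, coordinate by coordinate, the only way to achieve the maximum is to sit inside the prescribed carriers. Since $\tilde G_n$ depends only on the ``zero-th'' coordinates of the players and since $\tilde g^0$ and $\tilde g^1$ decouple across the two factors of $\tilde\S_n$, the payoff of $(\tilde s_{n,0},\tilde s_{n,1})$ splits as
\[
\underbrace{G_n(\tilde\s_{-n,0},\phi_{n,0}(\tilde s_{n,0})) + \tilde g^0_{n,\tilde s_{n,0}}(\tilde\s_{-n})}_{\text{depends only on }\tilde s_{n,0}} \;+\; \underbrace{\tilde g^1_{n,\tilde s_{n,1}}(\tilde\s_{-n})}_{\text{depends only on }\tilde s_{n,1}},
\]
so best replies in each factor can be analyzed independently.

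For the second coordinate, $\tilde g^1_{n,\tilde s_{n,1}}(\tilde\s_{-n}) = \e_0\,\tilde\beta^{\tilde s_{n,1}}_{n,1}(\phi_{n+1,0}(\tilde\s_{n+1,0}))$ is exactly $\e_0$ times the $\tilde s_{n,1}$-coordinate of $\tilde f_{n,1}(\tilde\s_{n+1,0})$. Maximizing this over $\tilde s_{n,1}$ means maximizing a barycentric coordinate; since coordinates are zero outside the carrier and strictly positive at the vertices of the carrier (and the maximum is at least $1/|\tilde S_{n,1}|>0$), the maximizers must be among the vertices of the carrier of $\tilde f_{n,1}(\tilde\s_{n+1,0})$ in $\mathcal{S}^*_{n+1,0}$.

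For the first coordinate, set $\tau_n\equiv\tilde f_{n,0}(\tilde\s_{-n,0},\tilde\s_{n-1,1})$ and write the relevant part of the payoff as
\[
G_n(\tilde\s_{-n,0},\phi_{n,0}(\tilde s_{n,0})) + \tilde\gamma_{n,\tilde s_{n,0}}(\tau_n)\bigl[r_n(\phi_{-n,0}(\tilde\s_{-n,0})) - G_n(\tilde\s_{-n,0},\phi_{n,0}(\tilde s_{n,0}))\bigr] + \e_0\,\tilde f_{n,0,\tilde s_{n,0}}(\tilde\s_{-n}).
\]
I would split into two cases. If $\tilde s_{n,0}$ is a vertex of the carrier of $\tau_n$, then $\tau_n$ lies in the closed star of $\tilde s_{n,0}$, so $\tilde\gamma_{n,\tilde s_{n,0}}(\tau_n)=1$ and the first two summands collapse to $r_n(\phi_{-n,0}(\tilde\s_{-n,0}))$; moreover the last term $\e_0\,\tilde f_{n,0,\tilde s_{n,0}}(\tilde\s_{-n})$ is strictly positive because it is exactly the $\tilde s_{n,0}$-coordinate of $\tau_n$, which is positive precisely at carrier vertices. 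Hence the total payoff is strictly greater than $r_n$. On the other hand, if $\tilde s_{n,0}$ is not a vertex of the carrier of $\tau_n$, then $\tilde f_{n,0,\tilde s_{n,0}}(\tilde\s_{-n})=0$ and, since $\tilde\gamma_{n,\tilde s_{n,0}}(\tau_n)\in[0,1]$ while $r_n\ge G_n$ by property (1) of $r_n$, the payoff is a convex combination of $G_n(\tilde\s_{-n,0},\phi_{n,0}(\tilde s_{n,0}))$ and $r_n(\phi_{-n,0}(\tilde\s_{-n,0}))$, hence bounded above by $r_n$. Comparing the two cases shows that no $\tilde s_{n,0}$ outside the carrier can be a best reply, completing the proof.

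The main technical point to be careful about is the slight asymmetry between ``vertex of the carrier'' and ``in the closed star'': the Urysohn function $\tilde\gamma_{n,\tilde s_{n,0}}$ equals one on the entire closed star, which could a priori give the full bonus $r_n - G_n$ to pure strategies that are only vertices of a simplex containing the carrier. The $\e_0\,\tilde f_{n,0,\tilde s_{n,0}}$ term is precisely the tiebreaker that rules this out, since it is strictly positive exactly at carrier vertices and zero elsewhere in the star. This is where the specific form of the perturbation $\tilde g^0$, combining a Urysohn weight with the barycentric bonus, is essential.
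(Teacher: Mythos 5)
Your proof is correct and follows essentially the same argument as the paper's: decouple the payoff into an $\tilde s_{n,0}$-part and an $\tilde s_{n,1}$-part, handle the second via the observation that $\tilde g^1_{n,\tilde s_{n,1}}$ is proportional to a barycentric coordinate of $\tilde f_{n,1}(\tilde\s_{n+1,0})$, and handle the first by comparing the payoff at a carrier vertex (exactly $r_n + \e_0\tilde f_{n,\tilde s_{n,0}} > r_n$) against the payoff at a non-carrier vertex (a convex combination of $G_n$ and $r_n$, hence $\le r_n$). You spell out the convex-combination bound and the tie-breaking role of the $\e_0\tilde f_{n,\tilde s_{n,0}}$ term more explicitly than the paper does, but the mechanism is identical.
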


\begin{proof} 
To prove (1), take $(\tilde t_{n,0}, \tilde s_{n,1})$ such that $\tilde t_{n,0}$ belongs to the carrier of $\tilde f_{n,0}(\tilde \s_{-n,0}, \tilde \s_{n-1, 1})$ in $\mathcal{S}_{n,0}^*$.  Then the payoff it obtains  in $\tilde G \oplus \tilde g^0(\tilde \s) \oplus \tilde g^1(\tilde \s)$  is $r_n(\phi_{-n,0}(\tilde \s_{-n, 0}))  + \e_0 \tilde f_{n,\tilde{t}_{n,0}}(\tilde \s_{-n}) + \tilde g_{n, \tilde s_{n,1}}^1(\tilde \s_{-n})$; this quantity is strictly larger than the payoff from $(\tilde s_{n,0}, \tilde s_{n,1})$ where  $\tilde s_{n,0}$ is not in that carrier, because $\g_{n, \tilde s_{n,0}}(\t_n) \le 1$ and $\tilde f_{n,\tilde{s}_{n,0}}(\tilde \s_{-n}) = 0$. The proof of (2) follows a similar reasoning: since $\tilde g_{n, \tilde s_{n,1}}^1(\tilde \s_{-n}) = \e_0\tilde \beta^{\tilde{s}_{n,1}}_{n, 1}(\phi_{n+1, 0}(\tilde \s_{n+1, 0})) = \e_0 \tilde f_{\tilde{s}_{n,1}}(\tilde \s_{n+1,0})$, a best reply $\tilde s_{n,1}$ must be a vertex of the of the carrier $\tilde f_{n,1}(\tilde \s_{n+1,0})$.\end{proof} 

\medskip

\noindent{\bf Step 6.} 
Let $\tilde \varphi: \tilde \S \twoheadrightarrow \tilde \S$ be the correspondence that assigns to each $\tilde \s$ the set of best replies to it in the polytope-form game $\tilde G \oplus \tilde g^0(\tilde \s) \oplus \tilde g^1(\tilde \s)$. The correspondence $\tilde \varphi$ is non-empty, compact, and convex-valued. It is also upper semi-continuous. Hence it has a fixed point and a well-defined index for its fixed points.  We now characterize the fixed points.

\begin{lemma}\label{lem phi corresp}
	The fixed points of $\tilde \varphi$ are the profiles $\tilde \s^{ij}$. The index of each $\tilde \s^{ij}$ is $r^{ij}$. 
\end{lemma}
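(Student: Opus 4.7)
The plan is a three-stage argument: verify each $\tilde \s^{ij}$ is a fixed point, localize all fixed points to $\cup_{ij} (\tilde T^{ij} \setminus \partial \tilde T^{ij})$, and then compute the local index on each $\tilde T^{ij}$ by a homotopy to $\tilde f$.

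For the first stage, Lemma \ref{lem function f tilde}(5) gives $\tilde f(\tilde \s^{ij}) = \tilde \s^{ij}$, so in particular $\tilde f_{n,0}(\tilde \s^{ij}) = \tilde \s^{ij}_{n,0}$ is the barycenter of its carrier $\tilde \beta_{n,0}(T^{ij}_n)$. By Lemma \ref{lem affine}(2), the payoff in $\tilde G \oplus \tilde g^0(\tilde \s^{ij}) \oplus \tilde g^1(\tilde \s^{ij})$ to each vertex $\tilde s_{n,0}$ of this carrier is $r_n$ plus $\e_0$ times the barycentric coordinate of $\tilde s_{n,0}$, which is constant across those vertices at the barycenter. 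Hence every such vertex is a best reply, every convex combination of them is a best reply, and in particular $\tilde \s^{ij}_{n,0}$ is a best reply; the analogous argument for the $(n,1)$ coordinate then yields $\tilde \s^{ij} \in \tilde \varphi(\tilde \s^{ij})$.

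For the second stage, let $\tilde \s^*$ be any fixed point of $\tilde \varphi$. Lemma \ref{lem best replies in perturbed games} forces the support of $\tilde \s^*_{n,0}$ to lie in the vertices of the carrier of $\tilde f_{n,0}(\tilde \s^*)$, and the support of $\tilde \s^*_{n-1,1}$ to lie in the vertices of the carrier of $\tilde f_{n-1,1}(\tilde \s^*) = \tilde \beta_{n-1,1}(\phi_{n,0}(\tilde \s^*_{n,0}))$. Projecting via $\phi_{n-1,1}$, the latter places $\phi_{n-1,1}(\tilde \s^*_{n-1,1})$ inside the carrier of $\phi_{n,0}(\tilde \s^*_{n,0})$ in $\T_n$ for every $n$, which is exactly the hypothesis of Lemma \ref{lem function f tilde}(4). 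Together with the fact that $\tilde \s^*_{n,0}$ itself lies in the carrier of $\tilde f_{n,0}(\tilde \s^*)$, the contrapositive of Lemma \ref{lem function f tilde}(4) forces $\tilde \s^* \in \tilde T^{ij} \setminus \partial \tilde T^{ij}$ for some $(i,j)$.

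For the third and hardest stage, I restrict to a single $\tilde T^{ij}$ and invoke Lemma \ref{lem affine}(2), which rewrites the pure-strategy payoff as $r_n(\phi_{-n,0}(\tilde \s_{-n,0})) + \e_0 \tilde f_{n,\tilde s_{n,0}}(\tilde \s_{n-1,1}) + \e_0 \tilde f_{n,\tilde s_{n,1}}(\tilde \s_{n+1,0})$, turning $\tilde \varphi$ into a polymatrix-type correspondence determined coordinate-by-coordinate by $\tilde f$. I would then construct a homotopy $H_t: \tilde T^{ij} \twoheadrightarrow \tilde \S$, $t \in [0,1]$, with $H_0 = \tilde f$ and $H_1 = \tilde \varphi$, and verify that no fixed point of $H_t$ lies on $\partial \tilde T^{ij}$ for any $t$. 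Since Lemma \ref{lem function f tilde}(2),(5) realizes $\tilde f|_{\tilde T^{ij}}$ as a product of affine homeomorphisms with unique fixed point $\tilde \s^{ij}$ of index $r^{ij}$, homotopy invariance of the fixed-point index yields both uniqueness of the fixed point of $\tilde \varphi$ in $\tilde T^{ij}$ and its index $r^{ij}$. The main obstacle is boundary control: the naive straight-line homotopy need not avoid $\partial \tilde T^{ij}$ because $\tilde \varphi$ selects only extreme vertices of carriers away from the barycenter, so one must exploit the strict containment $\tilde T^{ij}_{n,k} \subsetneq \tilde \beta_{n,k}(T^{ij}_n)$ and the affine structure of $\tilde f$ on $\tilde T^{ij}$ to keep the displacement nonzero on the boundary throughout the homotopy.
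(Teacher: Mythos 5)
Your three-stage architecture is close to the paper's, but the paper collapses stages two and three into a single argument, and that collapse is exactly what resolves the obstacle you flag at the end and cannot overcome. The paper considers the straight-line homotopy $\l\tilde f + (1-\l)\tilde\varphi$ and shows that \emph{for every} $\l\in[0,1]$ its fixed points lie in $\cup_{ij}(\tilde T^{ij}\setminus\partial\tilde T^{ij})$ and are the $\tilde\s^{ij}$'s. The reason this works is the observation missing from your write-up: your stage-two localization argument applies verbatim to fixed points of $\l\tilde f + (1-\l)\tilde\varphi$, not just to fixed points of $\tilde\varphi$. Indeed, $\tilde f_{n,0}(\tilde\s)$ lies in its own carrier in $\mathcal S^*_{n,0}$, and Lemma \ref{lem best replies in perturbed games}(1) forces the $\tilde\varphi$-component to mix only over vertices of that same carrier, so the convex combination $\l\tilde f_{n,0}(\tilde\s)+(1-\l)q_{n,0}$ stays in the carrier for every $\l$; the same holds for the $(n,1)$-coordinate via \ref{lem best replies in perturbed games}(2). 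This puts one in the hypotheses of Lemma \ref{lem function f tilde}(4) uniformly in $\l$, and the contrapositive then gives the boundary control you were worried about --- you do not need to ``exploit the strict containment $\tilde T^{ij}_{n,k}\subsetneq\tilde\beta_{n,k}(T^{ij}_n)$'' or build a non-naive homotopy; the straight-line one already works once you carry the argument along the whole segment.

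Separately, your closing sentence contains a logical error: ``homotopy invariance of the fixed-point index yields both uniqueness of the fixed point of $\tilde\varphi$ in $\tilde T^{ij}$ and its index $r^{ij}$.'' Homotopy invariance gives only the index; it does not rule out several fixed points in $\tilde T^{ij}$ whose indices sum to $r^{ij}$. Uniqueness has to be argued directly --- which is what the paper does by showing that any fixed point of $\l\tilde f + (1-\l)\tilde\varphi$ lying in $\tilde T^{ij}\setminus\partial\tilde T^{ij}$ must equal $\tilde\s^{ij}$, after which the index claim follows from Lemma \ref{lem function f tilde}(5) and the homotopy property. As stated, your stage three would prove the index assertion of the lemma but not the assertion that the $\tilde\s^{ij}$'s exhaust the fixed points of $\tilde\varphi$.
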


\begin{proof}To prove this step we will show that for all $\l \in [0, 1]$ the only fixed points of $\l \tilde f + (1-\l)\tilde \varphi$ are the $\tilde \s^{ij}$'s. The proof then follows from item (5) of Lemma \ref{lem function f tilde}.

Take some $\tilde \s$ that is a fixed point of $\l \tilde f + (1-\l) \tilde \varphi$ for some $\l \in [0, 1]$.  It follows from (1) of Lemma \ref{lem best replies in perturbed games}  that $\tilde \s_{n,0}$ belongs to the simplex of $\mathcal{S}_{n,0}^*$ that contains $\tilde f_{n,0}(\tilde \s_{-n,0}, \tilde \s_{n-1,1})$ in its interior. An optimal $\tilde s_{n-1,1}$ has to be a vertex of the simplex containing $\tilde \s_{n,0}$ (which is also the carrier of the image of $\tilde \s_{n,0}$ under $\tilde f_{n-1,1}$). Hence $\tilde \s_{n-1, 1}$ belongs to the simplex containg $\tilde \s_{n,0}$.  By property (4) of Lemma \ref{lem function f tilde},  $\tilde \s$ must be in $\tilde T^{ij}$ for some $i,j$. It then follows in this case $\tilde \s = \tilde \s^{ij}$.   \end{proof}

\medskip

Recall that for each $\tilde \s \in \tilde \S$, we can define a polytope-form game $\tilde G \oplus \tilde g^0(\tilde \s) \oplus \tilde g^{1}(\tilde \s)$. Hence we have a continuous family of polytope-form games indexed by $\tilde \s \in \tilde \S$: $\{\tilde G \oplus \tilde g^0(\tilde \s) \oplus \tilde g^{1}(\tilde \s)\}_{\tilde \s \in \tilde \S}$. We will now use this family to construct a game $\hat{G}$ that is equivalent to $\tilde{G}$ and identify a finite payoff perturbation that attains the objectives of our main theorem. This construction is performed in the final three steps. 

\medskip

\noindent{\bf Step 7.} In this step, we describe a triangulation of $\tilde \S_n$ for each $n$ that allows us to convert the continuous family of perturbed games to a single perturbed game.
We first observe that we can obtain a $\xi > 0$ such that: 
\begin{enumerate} 

	\item[(A)] for $\tilde \s \notin \cup_{i,j} (\tilde T^{ij} \backslash \partial \tilde T^{ij})$, if $\tilde g^0 \in \Re^{\sum_{n}|\tilde S_{n,0}|}$ and $\tilde g^1 \in \Re^{\sum_{n}|\tilde S_{n,1}|}$ satisfy $\Vert \tilde g^0 - \tilde g^0(\tilde \s) \Vert  + \Vert \tilde g^1 - \tilde g^1(\tilde \s) \Vert < 2\xi$, then $\tilde \s$ is not an equilibrium of $\tilde G \oplus \tilde g^0 \oplus \tilde g^1$.
	\medskip
	
\end{enumerate}
\medskip

Claim (A) follows from the fact that if $\tilde \s \notin \cup_{i,j} (\tilde T^{ij} \backslash \partial \tilde T^{ij})$, then $\tilde \s$ is not an equilibrium of $\tilde G \oplus \tilde g^0(\tilde \s) \oplus \tilde g^1(\tilde \s)$ (cf. Lemma \ref{lem phi corresp}). Therefore, for sufficiently small perturbations $\tilde g^0$ of $\tilde g^0(\tilde \s)$ and  $\tilde g^1$ of $\tilde g^1(\tilde \s)$, $\tilde \s$ is also not an equilibrium of the finite game $\tilde G \oplus \tilde g^0 \oplus \tilde g^1$.

\medskip

Since $g(\cdot)$ is uniformly continuous, there exists $\zeta > 0$ such that $\Vert g(\tilde \s) - g(\tilde \s')\Vert  < \xi$ if $\Vert \tilde \s - \tilde \s' \Vert < \zeta$.  
Recall that $\tilde S_{n,0}$ is the set of vertices of $\T_n$ and $\tilde \S_{n,0} = \tilde \S_{n-1, 1} = \D(\tilde S_{n,0})$. Therefore, any triangulation of $\D(\tilde S_{n,0})$ induces a corresponding triangulation of $\tilde \S_{n,0}$ and $\tilde \S_{n-1, 1}$. 

\begin{lemma}\label{lem triangulation T tilde}
	For each $n$, there exists a triangulation $\tilde \T_n$ of $\D(\tilde S_{n,0})$ with the following properties. 
	\begin{enumerate}
		\item the diameter of each simplex is less than $\zeta$;
		\item for each $i,j$, $\tilde T_{n,0}^{ij}$ is the space of a subcomplex of $\T_n$,  $\tilde \s^{ij}_{n,k}$  lies in the interior of a simplex $\hat T_{n,k}^{ij} \in \mathcal{T}_n$ for $k= 0,1$, with dim($\hat{T}_{n,k}^{ij}$) = dim($\S_{n+k}$), and $(\tilde \s^{ij}_{n,0}, \tilde \s^{ij}_{n,1})$ is not in the convex hull of $(\text{dim}(\S_n) + \text{dim}(\S_{n+1})$) vertices of $\hat T_{n,0}^{ij} \times \hat T_{n,1}^{ij}$;
		\item there is a convex piecewise linear function $\g_n: \D(\tilde S_{n,0}) \to \Re_{+}$ such that:
		\begin{enumerate}
			\item the map is linear precisely on the simplices of $\tilde \T_n$;
			\item the function is constant over each $\hat T_{n,0}^{ij}$.
		\end{enumerate}
	\end{enumerate}
\end{lemma}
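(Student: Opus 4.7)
The plan is to build $\tilde \T_n$ as a regular triangulation of $\D(\tilde S_{n,0})$, so that the convex piecewise linear function $\g_n$ of item (3) arises automatically from the associated height function, while items (1) and (2) are enforced by prescribing vertex locations and heights carefully.

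First, for each $i,j$, I would fix a small full-dimensional simplex $\hat T^{ij}_{n,0}$ in the interior of $\tilde T^{ij}_{n,0}$ (which is itself a full-dimensional simplex with $\tilde \s^{ij}_{n,0}$ as its barycenter), containing $\tilde \s^{ij}_{n,0}$ in its own interior, and analogously a simplex $\hat T^{ij}_{n,1}$ around $\tilde \s^{ij}_{n,1}$ in the ambient space containing it. The vertices of the $\hat T^{ij}_{n,k}$'s can be chosen in sufficiently general position to guarantee the non-degeneracy requirement that $(\tilde \s^{ij}_{n,0},\tilde \s^{ij}_{n,1})$ avoids the convex hull of any $\dim(\S_n)+\dim(\S_{n+1})$ vertices of the product $\hat T^{ij}_{n,0}\times \hat T^{ij}_{n,1}$: such ``bad'' convex hulls form finitely many subpolytopes of dimension at most $\dim(\S_n)+\dim(\S_{n+1})-1$, so the configurations of vertices for which the fixed point $(\tilde \s^{ij}_{n,0},\tilde \s^{ij}_{n,1})$ lies in one of them constitute a meager algebraic set in the configuration space, avoidable by arbitrarily small perturbations that preserve the interiority condition.

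Next, I would construct an initial polyhedral subdivision of $\D(\tilde S_{n,0})$ containing each $\tilde T^{ij}_{n,0}$ and each $\hat T^{ij}_{n,0}$ as subcomplexes, via the hyperplane-extension construction used to build $\mathcal{X}_n$ in Lemma~\ref{lem function f}. Refining this to a triangulation modulo the subcomplex $\bigcup_{i,j}\hat T^{ij}_{n,0}$, and freely subdividing simplices not meeting this subcomplex until every simplex has diameter less than $\zeta$ (cf.\ \cite{Z1964} and the modulo-refinement argument in Step 3), secures item (1) together with the subcomplex assertion in item (2), and ensures $\hat T^{ij}_{n,0}$ is itself a simplex of the resulting triangulation with $\tilde \s^{ij}_{n,0}$ in its interior.

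Finally, to produce $\g_n$, I would lift each vertex $v$ of $\tilde \T_n$ to a height $h(v)\in\Re_+$, assigning $h(v)=0$ on the vertices of every $\hat T^{ij}_{n,0}$ and sufficiently small, strictly positive, generic heights otherwise, and define $\g_n$ as the piecewise linear function whose graph is the lower convex envelope of the lifted configuration. By construction $\g_n$ is convex, piecewise linear, and vanishes identically on each $\hat T^{ij}_{n,0}$; for generic heights its maximal domains of linearity form a regular triangulation. The main obstacle is that an arbitrary combinatorial triangulation need not be regular, so the lower envelope might not coincide with the $\tilde \T_n$ built above. The resolution is to take $\tilde \T_n$ to be \emph{the triangulation induced by the lift itself}: one starts with a vertex configuration that respects the simplices $\hat T^{ij}_{n,0}$ and the boundaries of the $\tilde T^{ij}_{n,0}$'s, and is dense enough to force all simplices of the lift-induced triangulation to have diameter below $\zeta$. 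Because assigning height zero to all vertices of a geometric simplex makes that simplex a face of the lower envelope, and because a sufficiently small choice of the remaining positive heights guarantees that no other face of the envelope crosses the prescribed subcomplex structure, this realizes items (1), (2), and (3) simultaneously.
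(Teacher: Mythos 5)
Your plan to produce $\g_n$ via a lower convex envelope with heights set to zero on the vertices of every $\hat T^{ij}_{n,0}$ has a genuine flaw: a lifted point that lies strictly above the lower envelope simply does not participate in the induced regular subdivision. In your construction all the intermediate vertices (those with small positive height sitting in or near the convex hull of the $\hat T^{ij}_{n,0}$'s, or between two of them) lift above the envelope, because the envelope is already zero on the entire convex hull of the zero-height vertices. Consequently, the subdivision determined by your lift omits those vertices, its cells can be arbitrarily large, and none of your auxiliary refinements (the $\zeta$-fine subdivision you built earlier, the subcomplex structure around $\tilde T^{ij}_{n,0}$) are reflected in the final object. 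You cannot simply ``take $\tilde\T_n$ to be the triangulation induced by the lift itself,'' as that triangulation fails item (1) and need not have the $\hat T^{ij}_{n,0}$'s as cells at all. Your claim that ``assigning height zero to all vertices of a geometric simplex makes that simplex a face of the lower envelope'' is also wrong: it only makes the simplex a \emph{subset} of the zero-level face, which in general is the convex hull of all zero-height vertices and can be strictly larger. A related issue is the phrase ``a small full-dimensional simplex $\hat T^{ij}_{n,0}$ in the interior of $\tilde T^{ij}_{n,0}$'': the set $\tilde T^{ij}_{n,0}$ has dimension $\dim(\S_n)$, which is far below $m = \dim\D(\tilde S_{n,0})$, so the $\hat T^{ij}_{n,0}$ you need is a \emph{low}-dimensional simplex of $\tilde\T_n$, not a full-dimensional one.

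The key idea you are missing, and which the paper uses, is to decouple the construction of the triangulation from the constancy requirement on $\g_n$ by exploiting the affine invariance of regular subdivisions: subtracting an affine function from a height function changes the piecewise-linear function but leaves the induced regular triangulation unchanged, since it shifts all linear pieces by the same affine amount and therefore cannot merge adjacent cells. The paper first builds a sufficiently fine regular triangulation $\tilde\T_n$ (via iterated generalized barycentric subdivision refined by a hyperplane arrangement in the style of Eaves--Lemke, with a further non-adding-vertices refinement using Lemma~2.3.15 of De Loera--Rambau--Santos to guarantee regularity and genericity) with a strictly convex height function $\g'_n$, and only afterwards makes $\g_n$ constant on the $\hat T^{ij}_n$'s by setting $\g_n = \g'_n - \phi_n$ for an affine $\phi_n$ that matches $\g'_n$ on the $\hat T^{ij}_n$ vertices and lies strictly below it elsewhere. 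Such a $\phi_n$ exists precisely because the paper arranges for $\T_n$ to have enough vertices that $K_n[\dim(\S_n)+1] < m+1$, so the $\hat T^{ij}_n$ vertices together span a proper affine subspace and there is room to interpolate. This step is not cosmetic: it is what reconciles the constancy of $\g_n$ on the $\hat T^{ij}_n$'s with $\g_n$ being linear \emph{precisely} on the simplices of $\tilde\T_n$, since the zero set of $\g_n$ is then a set of dimension strictly less than $m$ and cannot absorb any full-dimensional cell of $\tilde\T_n$. Your construction never establishes either ingredient---neither the dimension bound on the affine hull nor the affine-tilt mechanism---so the constancy requirement and the ``linear precisely on simplices'' requirement pull against each other in your setup in a way you have not resolved.
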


\begin{proof} 
	See Appendix. 
\end{proof}

\begin{example} 
Figure \ref{lemtriangT} illustrates condition (2) of Lemma \ref{lem triangulation T tilde}. The idea is to view AD as the 1-simplex $\hat T_{n,0}^{ij}$ and AB as the 1-simplex $\hat T_{n,1}^{ij}$. The condition says the point $\s^{ij}_n$ should avoid both diagonals AC and DB (as figure (b) shows). The purpose of this requirement will be clear in step 8 and it is illustrated in Example \ref{expolytopecomponent}.

\begin{figure} 
\caption{Illustration of property (2) of Lemma \ref{lem triangulation T tilde}}\label{lemtriangT}
\medskip
\begin{minipage}[t]{0.4 \textwidth}
	\centering
\begin{tikzpicture}[scale=2]

\coordinate (A) at (0, 0);
\coordinate (B) at (2, 0);
\coordinate (C) at (2, 2);
\coordinate (D) at (0, 2);

\draw[thick] (A) -- (B) -- (C) -- (D) -- cycle;

\filldraw[black] (A) circle (1pt) node[anchor=north east] {A};
\filldraw[black] (B) circle (1pt) node[anchor=north west] {B};
\filldraw[black] (C) circle (1pt) node[anchor=south west] {C};
\filldraw[black] (D) circle (1pt) node[anchor=south east] {D};


\coordinate (P) at (1.2, 1.5); 

\filldraw[red] (P) circle (1pt) node[anchor=south east] {$\tilde{\sigma}^{ij}_n$};

\coordinate (ProjAB) at (1.2, 0);  
\draw[dashed] (P) -- (ProjAB);  
\filldraw[blue] (ProjAB) circle (1pt) node[anchor=north] {$\tilde{\sigma}^{ij}_{n,1}$};  

\coordinate (ProjAD) at (0, 1.5);  
\draw[dashed] (P) -- (ProjAD);  
\filldraw[blue] (ProjAD) circle (1pt) node[anchor=east] {$\tilde{\sigma}^{ij}_{n,0}$};  

\end{tikzpicture}
\caption*{(a)}
\end{minipage} 
\begin{minipage}[t]{0.4 \textwidth}
\centering
\begin{tikzpicture}[scale=2]

\coordinate (A) at (0, 0);
\coordinate (B) at (2, 0);
\coordinate (C) at (2, 2);
\coordinate (D) at (0, 2);

\draw[thick] (A) -- (B) -- (C) -- (D) -- cycle;

\filldraw[black] (A) circle (1pt) node[anchor=north east] {A};
\filldraw[black] (B) circle (1pt) node[anchor=north west] {B};
\filldraw[black] (C) circle (1pt) node[anchor=south west] {C};
\filldraw[black] (D) circle (1pt) node[anchor=south east] {D};

\draw[dashed] (A) -- (C);  
\draw[dashed] (B) -- (D);  

\coordinate (P) at (1.2, 1.5); 

\filldraw[red] (P) circle (1pt) node[anchor=south east] {$\tilde{\sigma}^{ij}_n$};

\end{tikzpicture}
\caption*{(b)}
\end{minipage}

\end{figure}
\end{example}

\medskip

\noindent{\bf Step 8.} We now define a game $\hat G$ in normal form.   Let $\hat S_{n,0}$ be the set of vertices of $\tilde \T_n$.  For each $n$, let  $\hat S_{n,1} \equiv  \hat S_{n+1, 0}$ and  $\hat S_n \equiv \hat S_{n,0} \times \hat S_{n,1}$.  The pure strategy set of player $n$ in $\hat G$ is $\hat S_n$.  Let $\hat \S_{n, 0} = \D(\hat S_{n,0})$,  $\hat \S_{n,1}= \D(\hat S_{n, 1})$, and $\hat \S = \hat \S_{n,0} \times \hat \S_{n,1}$. For $k = 0, 1$, let $\hat \mu_{n,k}: \hat \S_n \to  \hat \S_{n,k}$ be the map that sends each $\hat \s_n$ to the marginal distribution over $\hat S_{n,k}$.  Each $\hat s_{n,0} \in \hat S_{n,0}$ corresponds to a mixed strategy in $\tilde \S_{n,0}$.  Therefore, there exists for $k = 0, 1$, an affine map $\hat \phi_{n,k}: \hat \S_{n,k} \to \tilde \S_{n,k}$. Let $\hat{\phi}_{n} \equiv \hat \phi_{n,0} \times \hat \phi_{n,1}$ and $\hat{\mu}_{n} \equiv \hat \mu_{n,0} \times \hat \mu_{n,1}$. The payoffs in $\hat G$ are defined, for each player $n$, by $\hat G_n(\hat \s) = \tilde G_n(\hat \phi_0 \circ \hat \mu_0 (\hat \s))$, where $\hat{\phi}_0 \equiv \times_{n} \hat{\phi}_{n,0}$ and $\hat \mu_{0} \equiv \times_{n} \hat \mu_{n,0}$. The game $\hat G$ just defined is equivalent to $\tilde G$ and hence to $G$.  

For each $n$, $i,j$ and $k = 0, 1$, let $\hat S_{n,k}^{ij}$ be the set of vertices of $\hat T_{n,k}^{ij}$ obtained in (2) of Lemma \ref{lem triangulation T tilde} and let $\hat \S_{n,k}^{ij}$ be the face of $\hat \S_{n,k}$ consisting of strategies with support contained in $\hat S_{n,k}^{ij}$.  Let $\hat \S_n^{ij}$ be the set of mixed strategies $\hat \s_n$ such that $\hat \mu_{n,k}(\hat \s_n) \in \hat \S_{n,k}^{ij}$ for each $k$.  Let $\hat \S^{ij} \equiv \prod_n \hat \S_n^{ij}$. Let $\hat C^{ij} = \{ \hat \s \in \hat \S^{ij}\mid \hat \phi_{n,k} (\hat \mu_{n,k}(\hat \s_n)) = \tilde \s^{ij}_{n,k}, \text{ for each $n,k$ }  \}$. The set $\hat C^{ij}$ is then a polytope, since $\hat \phi_{n,k} \circ \hat \mu_{n,k}$ is affine, for $k=0,1$. Observe that $\hat C^{ij} = \prod_{n}\hat C^{ij}_n$, where $\hat C^{ij}_n = \{ \hat \s_n \in \hat \S^{ij}_n \mid \hat \phi_{n,k} (\hat \mu_{n,k}(\hat \s_n)) = \tilde \s^{ij}_{n,k}, \text{ for each $k=0,1$} \}$.

By Property (2) of Lemma \ref{lem triangulation T tilde}, each extreme point of $\hat C_n^{ij}$ belongs to a face of $\hat{\S}_n$ of dimension $\text{dim}(\S_n) + \text{dim}(\S_{n+1})$ (see Example \ref{expolytopecomponent}).  Let $\hat \s^{ij}$ be one of the extreme points of the polytope.  For each $n$, let $\hat S_n^{ij, *}$ be the support of $\hat  \s^{ij}$. 

\begin{example}\label{expolytopecomponent}
Figure \ref{illustcomponent}  illustrates the set $\hat C^{ij}_n$ and the effects of property (2) of Lemma \ref{lem triangulation T tilde}. The vertices of the square in (b) of Figure \ref{lemtriangT} are declared to be pure strategies for player $n$ in $\hat G$, as we described above, yielding a tetrahedron as a mixed strategy set, like in Figure \ref{illustcomponent}. Because of property (2) of Lemma \ref{lem triangulation T tilde}, the end-points of the resulting component of equilibria  (one illustrated in red, and the other in blue), are contained in the interior of faces of dimension equal to dim($\S_n$). Without property (2) of Lemma \ref{lem triangulation T tilde}, it could be the case that such end points would fall in faces of lower dimension, which could restrict the index number we would like to assign to equilibria. This point will be addressed in the last step. 

\begin{figure}
\caption{Illustration of $\hat C^{ij}_n$}\label{illustcomponent}

\begin{tikzpicture}[scale=3]

\coordinate (A) at (0, 0.3, 0);
\coordinate (B) at (2, 0, 0);
\coordinate (C) at (1, 1.5, 0);
\coordinate (D) at (1, 0.5, 1.5);

\draw[thick] (A) -- (B) -- (C) -- (A) -- (D) -- (B);

\filldraw[black] (A) circle (0.5pt) node[anchor=north east] {A};
\filldraw[black] (B) circle (0.5pt) node[anchor=north west] {B};
\filldraw[black] (C) circle (0.5pt) node[anchor=south] {C};
\filldraw[black] (D) circle (0.5pt) node[anchor=south east] {D};

\coordinate (BarycenterADC) at (barycentric cs:A=1,D=1,C=2);  
\filldraw[red] (BarycenterADC) circle (0.5pt) node[anchor=north west] {$\hat \sigma^{ij}_n$};

\coordinate (BarycenterDCB) at (barycentric cs:D=2,C=1,B=2);  
\filldraw[blue] (BarycenterDCB) circle (0.5pt) node[anchor=north west] {$\hat \sigma^{ij}_n$};

\draw[thick,red] (BarycenterADC) -- (BarycenterDCB);

\draw[thick] (D) -- (C);

\end{tikzpicture}

\end{figure}
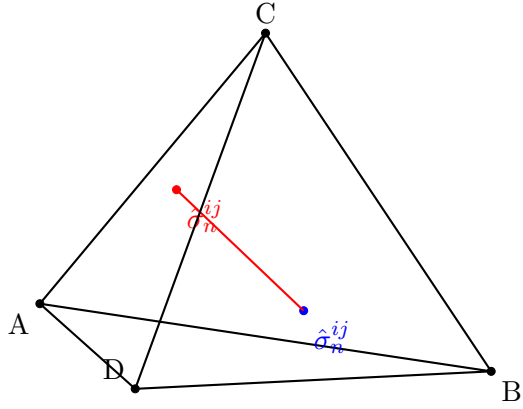
\end{example}

We will now construct an $\e$-perturbation of $\hat G$. In the next and final step, we show that the equilibria of this perturbed game are the points $\hat \s^{ij}$ and that the index of each $\hat \s^{ij}$ is $r^{ij}$.

The perturbation of the payoffs of $\hat G$ for each $n$ has five components: $\hat G_n^0: \hat \S \to \Re_{++}$; $\hat G^1_n: \hat \S \to \Re_+$; $\hat G^*_n: \hat \S \to \Re_{-}$; $\hat g_n^0: \hat S_{n,0} \to \Re$; and $\hat g_n^1: \hat S_{n,1} \to \Re$. For $\hat s \in \hat S$, with $\hat s_m = (\hat s_{m,0}, \hat s_{m,1})$ for each $m$:
\[
\begin{array}{rcl}
\hat G_n^0(\hat s) & = & \hat \phi_{n,0}(\hat s_{n,0})\cdot\tilde g^0_{n}(\hat \phi_{-n,0}(\hat s_{-n, 0}), \hat \phi_{n-1, 1}(\hat s_{n-1,1})) \\  
\hat G_n^1(\hat s) & =  & \hat \phi_{n,1}(\hat s_{n,1})\cdot\tilde g^1_{n}(\hat \phi_{n+1, 0}(\hat s_{n+1,0}))\\  
\hat G_n^*(\hat s) & = & - \mathbbm{1}_{[\cup_{i,j} (\hat S_n^{ij} \backslash \hat S_{n}^{ij, *}) \times \hat S_{-n}^{ij}]}\\
\hat g_n^0(\hat s_n) & =  &-\g_n(\hat \phi_{n,0}(\hat s_{n,0})) \\
\hat g_n^1(\hat s_n) & =  &-\g_{n+1}(\hat \phi_{n,1}(\hat s_{n,1})).
\end{array}  
\]
The function $\hat G_n^0$ depends on $\hat \s_{-n,0}$ and $\hat \s_{n,0}$ (the marginal of strategies $\hat \s_n$ on $\hat \S_{n,0}$); the function $\hat G^1$ depends on $\hat \s_{n+1, 0}$ and $\hat \s_{n,1}$.   The function $\hat G_n^1$ is affine on subsets of $\hat \S_{n+1,0}$ that project under $\hat{\phi}_{n+1,0}$ to simplices of $\tilde \T_{n+1}$: recall from item (1) in Lemma \ref{lem affine} that this is the case in the subcomplex $\mathcal{S}^*_{n+1,0}$, and everywhere else $\tilde g^1_{n, \tilde s_{n,1}}$ is identically $0$.  These two functions provide a multilinear approximation of the continuous perturbations $\tilde g_n^0$ and $\tilde g_n^1$: they are linear in the strategies of each player $m$.  Observe that two strategies $\hat \s_n$ of player $n$ that have the same marginals $\hat \s_{n, 0}$ and $\hat \s_{n,1}$ are equally good replies if we use only these perturbations $\hat G_n^0$ and $\hat G_n^1$: in particular there are a lot of equilibria of the game that are equivalent to the equilibria $\s^{ij}$. In order to eliminate this multiplicity, we introduce three other components in the perturbation.

The function $\hat G^*$ penalizes $n$'s use of a strategy in $\cup_{i,j} \hat S_n^{ij}$ that does not belong to the support of $\hat \s_n^{ij}$ (i.e., $\hat S^{ij,*}_n$) when $n$'s opponents are playing a strategy in $\hat S_{-n}^{ij}$.  Even with all other perturbations, there is a multiplicity of equilibria equivalent to $\s^{ij}$ as those arguments explicitly depend on the marginals on $\S_n^0$ and $\S_n^1$, rather than the normal-form strategies $\hat \S_n$. This penalty helps to isolate  a unique profile among those that are equivalent to $\s^{ij}$. 

The functions $\hat g_n^0$, $\hat g_n^1$ are bonus functions like we constructed for the game $\tilde G$. Their purpose is to eliminate equivalent equilibria that arise when we introduce duplicate strategies as the vertices of triangulation. We illustrate this in Example \ref{final} below. 

\begin{example}\label{final}
Recall that $\g_n$ is a piecewise linear convex function that is linear precisely in the cells of the triangulation $\tilde \T_n$ (cf. property (3) of Lemma \ref{lem triangulation T tilde}). Suppose player $n$ has a $1$-simplex as mixed strategy set, represented in the $x$-axis in Figure \ref{therole}. This simplex is triangulated as indicated, i.e., partitioned in intervals, and the vertices introduced in this triangulation are duplicates of mixed strategies in an equivalent game. Once the vertices become duplicates, player $n$ has mutiple ways of playing the same mixed strategy: for example, in order to play the strategy corresponding to $0.5$ in the figure, player $n$ can either play $0$ with probability $1/2$ and $1$ with probability $1/2$, or $0.25$ with probability $1/2$ and $0.75$ with probability $1/2$, or even $0.5$ with probability $1$. All these strategies induce the same payoffs against any strategy of the opponents in the equivalent game. However, when we add to the payoffs of player $n$ a piecewise linear convex penalty, as illustrated by the graph of the function in red, then the strategy with the lowest penalty associated is $0.5$. In general, for any strategy $\s_n$ of the $1$-simplex, among the many convex combinations that can induce this strategy in the equivalent game, the vertices of the carrier of $\s_n$ in the triangulation are the strategies with lowest associated penalties in the equivalent game. So, if $\s_n$ is a best reply to some opponents' profile in the original game, adding $\hat g^k_n$ to the payoffs in the equivalent game allows us to select the support of the best reply (in the equivalent game) among all strategies equivalent to $\s_n$.\footnote{The same technique has been used to select supports of best replies in \cite{GW2005} and \cite{GLP2023}.}

\begin{figure}
\caption{The role of $\hat g^{k}_n, k=0,1$ }\label{therole}
\begin{tikzpicture}[scale=4]

\draw[thick, ->] (0,0) -- (1.1,0) node[anchor=north] {$x$};

\draw[thick, ->] (0,0) -- (0,1.2) node[anchor=east] {$\g^{k}_n$};

\foreach \x in {0, 0.25, 0.5, 0.75, 1} {
    \draw (\x, 0) -- (\x, -0.02);  
    \node at (\x, -0.05) {\x};  
}

\draw[thick, red] (0, -0.5+1.2) -- (0.25, -0.85+1.2) -- (0.5, -1+1.2) -- (0.75, -0.85+1.2) -- (1, -0.5+1.2);

\filldraw[red] (0, -0.5+1.2) circle (0.5pt);
\filldraw[red] (0.25, -0.85+1.2) circle (0.5pt);
\filldraw[red] (0.5, -1+1.2) circle (0.5pt);
\filldraw[red] (0.75, -0.85+1.2) circle (0.5pt);
\filldraw[red] (1, -0.5+1.2) circle (0.5pt);

\end{tikzpicture}

\end{figure}
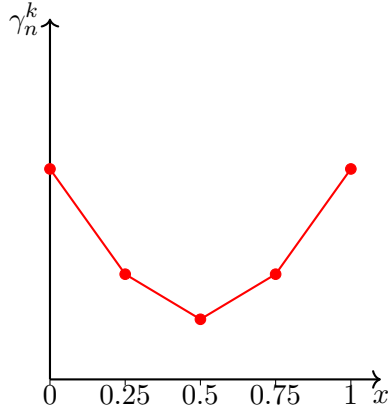
\end{example}

For positive constants $\a$ and $\a^*$, define the  game $\hat G^{\a, \a^*} \equiv \hat G \oplus \hat G^0 \oplus \hat G^1 \oplus \a^*\hat G^* \oplus \a \hat g^0 \oplus \a\hat g^1$.  Fix $\a>0$ sufficiently small such that $\hat G^{\a, 0}$ is an $\e$-perturbation.  If $\a^*$ is chosen small as well then $\hat G^{\a, \a^*}$ is also an $\e$-perturbation of $\hat G$. 

\medskip

\noindent {\bf Step 9.}  Let us first analyze the game $\hat G^{\a, 0}$, i.e, the perturbation with $\a^* = 0$.   Suppose $\hat \phi(\mu(\hat \s)) = \tilde \s^{ij}$ for some $ij$. We claim that $\hat \s$ is an equilibrium iff $\hat \s \in \hat C^{ij}$.  If $\hat \s_n  \notin \hat C_n^{ij}$ for some $n$, then a direct computation shows that $\hat \s_n^{ij}$ gives: (1) the same payoff as $\hat \s_n$ in the components $\hat G  \oplus \hat G^0 \oplus \hat G^1$;  and (2) a strictly higher payoff in $\hat g^0 \oplus \hat g^1$.  Thus, if $\hat \s$ is an equilibrium then $\hat \s \in \hat C^{ij}$.  Conversely, a similar computation shows that if $\hat \s \in \hat C^{ij}$, then any pure strategy that is not in $\hat S_n^{ij}$ is a strictly inferior reply against $\hat \s$, and $\hat{\s}$ is an equilibrium of $\hat G^{\a, 0}$.   

We will now show that the $\hat C^{ij}$'s are the only equilibria of the game $\hat G^{\a, 0}$ (note that in the previous paragraph we showed that they are the only equilibria of $\hat G^{\a, 0}$ among those profiles $\hat{\s}$ for which $\hat \phi(\mu(\hat \s)) = \tilde \s^{ij}$). Let $\hat \s$ be an equilibrium of $\hat G^{\a,0}$. For each $n$, let $\tilde \s_{n,k} = \hat \phi_{n,k}(\hat \mu_{n,k}(\hat \s_n))$ for $k = 0, 1$. It is sufficient to show that  $\tilde \s =\tilde \s^{ij}$ for some $i,j$.

Any strategy $\hat \t_n$ of player $n$ in $\hat G^{\a, 0}$  such that  $\tilde \s_{n,k} = \hat \phi_{n,k}(\hat \mu_{n,k}(\hat \t_n))$ for each $k$ yields the same payoff in $\hat G$, $\hat G^0$, and $\hat G^1$.  The choice of which among these strategies is optimal depends on the payoffs they obtain under $\hat g^k_{n}$ for $k = 0, 1$. The optimality of $\hat \s_n$ implies that $\hat \s_{n,0}$ is a mixture over the vertices of the simplex of $\tilde \T_{n}$ that contains $\tilde \s_{n-1,1}$ in its interior and $\hat \s_{n,1}$ is a mixture over the vertices of the simplex of $\tilde \T_{n+1}$ that contains $\tilde \s_{n+1,0}$ in its interior (cf. with our discussion in Example \ref{final}). Thus $\tilde \s$ is an equilibrium of the game $\tilde G \oplus g^0 \oplus g^1$ where:
\[
\begin{array}{rcl}
g_{n, \tilde s_{n,0}}^0 & = & \sum_{\tilde s_{-n}}\prod_{m \neq n}\tilde \s_{m}(\tilde s_m)\tilde g^0_{n, \tilde s_{n,0}}(\tilde s_{-n, 0}, \tilde s_{n-1, 1}) \\
g_{n, \tilde s_{n,1}}^1 &  = & \sum_{\tilde s_{n+1}}\tilde \s_{n+1}(\tilde s_{n+1})\tilde g^1_{n, \tilde s_{n,1}}(\tilde s_{n+1, 0}).
\end{array}
\]
Since for each $k=0,1$, $\Vert g^k - \tilde g^k(\tilde \s) \Vert < \xi$, (A) of step 7 implies that $\tilde \s$ is in $\tilde T^{ij} \setminus \partial \tilde T^{ij}$ for some $i,j$.  If  $\tilde \s \neq \tilde \s^{ij}$, there is some $n$ for which either (a) $\tilde \s_{n,0} \neq \tilde \s_{n,0}^{ij}$ or (b) $\tilde \s_{n,1} \neq \tilde \s_{n,1}^{ij}$. Case (a) implies that  $\tilde \s_{n-1,1}$ belongs to the boundary of $T_{n-1,1}^{ij}$ by definition of $\tilde {g}^1_{n-1}$ and the fact it is affine over $\tilde T^{ij}_{n,0}$, which is a contradiction with our assumption that $\tilde \s$ is in $\tilde T^{ij} \setminus \partial \tilde T^{ij}$. Hence, (a) does not occur, and it must be (b). The payoffs from $\tilde \s$ to player $n+1$ in $\tilde G \oplus g^0 \oplus g^1$ are then $\sum_{\tilde s_{-n}}\prod_{m \neq n}\tilde \s_{m}(\tilde s_m)r_n(\phi_{n,0}(\tilde s_{-n,0})) + \tilde \s_{n+1,0} \cdot \e_0 \tilde f_{n}(\tilde \s_{n,1}) + \tilde \s_{n+1,1} \cdot g^1_n$. By construction, $r_n$ is constant on $\tilde T^{ij}_{-n}$ (cf. step 5), which implies it is $\tilde f_{n+1,0}(\tilde \s_{n,1})$ that decides which $\tilde \s_{n+1,0}$ is a best reply. But since $\tilde f_{n+1,0}(\tilde \s_{n,1}) \neq \tilde \s^{ij}_{n+1}$ (cf. (3) of Lemma \ref{lem f ij}), it follows that $\tilde \s_{n+1,0}$ belongs to the boundary of $\tilde \beta_{n+1,0}(T^{ij}_{n+1})$, which is again a contradiction. Therefore, $\tilde \s = \tilde \s^{ij}$. From the claim proved at the beginning of this step, it follows that the $\hat C^{ij}$'s are the only equilibria of the game $\hat G^{\a, 0}$, as we wanted to prove. 


Now we turn to the equilibria of $\hat G^{\a, \a^*}$.  For each $ij$, $\hat \s^{ij}$ is an isolated equilibrium of $\hat G^{\a, \a^*}$ because of the penalty function $\hat G^*$.  We will show that the index of $\hat \s^{ij}$ is $r^{ij}$.  Since the strategies not in $\hat \S^{ij*}$ are inferior replies to $\hat \s^{ij}$, the index can be computed in the game obtained by deleting them.  Recall also that $\hat g^0_n$ and $\hat g^1_n$ are constant over $\hat \S_n^{ij}$.  Therefore, the payoff of player $n$ in $\hat G \oplus \hat G^0 \oplus \hat G^1$ restricted to $\hat \S_n^{ij}$ from any strategy $(\hat s_{n,0}, \hat s_{n,1})$ is an affine function of $\hat \s_{-n}$ equal to $D + \hat \phi_{n,0}(\hat s_{n,0}) \cdot \tilde f_{n,0}(\hat \phi_{n-1,1}(\hat \mu_{n-1,1}(\hat \s_{n-1}))) + \hat \phi_{n,1}(\hat s_{n,1}) \cdot \tilde f_{n,1}(\hat \phi_{n+1,0}(\hat \mu_{n+1,0}(\hat \s_{n+1})))$, where the constant $D$ comes from the fact that $r_n$ is constant in $\hat \S^{ij}$. Therefore, the best reply correspondence of this game assigns to $\hat \s^{ij}$ the same index as $\tilde \varphi$ assigns to $\tilde \s^{ij}$.

To finish the proof, it remains to be shown that for small $\a^* > 0$, the only equilibria of $\hat G^{\a, \a^*}$ are the $\hat \s^{ij}$'s. Let $\hat \s$ be an equilibrium of $\hat G^{\a, \a^*}$ where $\a^* > 0$ and let $\tilde \s = \hat \phi \circ \hat \mu(\hat \s)$.   If $\a^*$ is sufficiently small,  then $\tilde \s$ is close to $\tilde \s^{ij}$ for some $ij$ and there exists $\hat \s^*$ with support $S^{ij*}$ such that $\hat \phi \circ \hat \mu(\hat \s^*) = \tilde \s$. Also, any strategy that is not in $S_n^{ij}$ for player $n$ is a inferior reply to $\hat \s$, as this property holds when $\a^* = 0$ and the equilibrium is in $\hat C^{ij}$. This implies that $\hat{\s}_n$ has support in $\hat{S}^{ij}_n$, for each $n$. Because of the penalty function $\hat G_n^*$, the support of $\hat{\s}_n$ must be in particular in $\hat{S}^{ij*}_n$. Therefore, $\hat \s^* = \hat \s$. The restriction of the game to $S^{ij*}$ is a polymatrix game and as such the set of its equilibria with support $S^{ij*}$ is a convex set.  As $\hat \s^{ij}$ is an isolated equilibrium with support $S^{ij*}$, $\hat \s^{ij} = \hat \s$ and the proof is complete.

\medskip

\setcounter{secnumdepth}{-1}

\section{Appendix: Proof of Lemma \ref{lem triangulation T tilde}}

 Let $m \equiv$ dim$(\D(\tilde S_{n,0}))$, so the number of vertices of $\D(\tilde S_{n,0})$ is $m+1$. Recall the definition of $\T_n$ in Lemma \ref{lem triangulation T}. Note that $\T_n$ can be assumed to have any finite number of vertices. Letting $K_n \equiv \#\{\tilde \s^{ij}_n\}_{i,j}$, we will therefore assume that $K_n[\text{dim}(\S_n) + 1]< m+1$.

A polyhedral subdivision $\P_n$ of $\D(\tilde S_{n,0})$ is said to be \textit{regular} (cf. \cite{LRS2010}) if there exists a \textit{height function} $h: P_n \to \Re$ from the vertex set $P_n$ of $\P_n$, where the vertex set $\{p_t\}_{t}$ of a maximal-dimensional cell of the subdivision is mapped to points in $\Re$ s.t. $\{(p_t, h(p_t))\}_t$ spans a (non-vertical) hyperplane in $\D(\tilde S_{n,0}) \times \Re$ and all other points $p \in P_n$ are such that $(p, h(p))$ is strictly above the hyperplane. In this case, we say that the set  $\{p_t\}_{t}$ is \textit{lifted} by $h$ \textit{to a hyperplane}, with all other vertices \textit{lifted above} the hyperplane. 

Let $\tilde \T_n$ be a regular triangulation with vertex set $T_n$. For each $p \in T_n$, let $v_p \in \D(\tilde S_{n,0})$ be any point with the same carrier as $p$ in the face complex of $\D(\tilde S_{n,0})$.  For $\e \geq 0$, let $p(\e) \equiv (1-\e)p + \e v_p$. For $\e>0$ sufficiently small, define the triangulation $\T^{\e}_n$ as follows: $B(\e) \equiv [p_0(\e),...,p_{m}(\e)]$ is a maximal dimensional cell of $\tilde \T^{\e}_n$, given by the convex hull of $\{p_0(\e),...,p_m(\e)\}$ iff $B \equiv [p_0,...,p_{m}]$ is a maximal dimensional cell in $\tilde \T_n$ given by the convex hull of $\{p_0,...,p_m\}$.



\begin{lemma}\label{regularisgeneric}
Suppose $\tilde \T_n$ is a regular triangulation. There exists $\e>0$ sufficiently small such that any triangulation $\tilde \T^{\e}_n$ is regular.
\end{lemma}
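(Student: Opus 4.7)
The plan is to use the given height function $h: T_n \to \Re$ that witnesses regularity of $\tilde \T_n$ and transport it to a candidate height function $h^\e$ for $\tilde \T^\e_n$ by setting $h^\e(p(\e)) \equiv h(p)$ for each $p \in T_n$. The strategy is based on the observation that the regularity condition is a finite system of strict linear inequalities, and such conditions are preserved under sufficiently small continuous perturbations of the data.

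First, for each maximal-dimensional cell $B = [p_0, \ldots, p_m]$ of $\tilde \T_n$, let $\ell_B: \D(\tilde S_{n,0}) \to \Re$ be the unique affine function satisfying $\ell_B(p_i) = h(p_i)$ for $i = 0, \ldots, m$; its existence and uniqueness follow from the affine independence of the $p_i$'s. Since $\tilde \T_n$ is regular, for every $q \in T_n \setminus \{p_0, \ldots, p_m\}$ one has the strict inequality $h(q) > \ell_B(q)$. For $\e > 0$ small, the perturbed vertices $p_0(\e), \ldots, p_m(\e)$ remain affinely independent (affine independence is an open condition on tuples of points), so there is a unique affine function $\ell_{B(\e)}$ with $\ell_{B(\e)}(p_i(\e)) = h(p_i)$ for each $i$. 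The coefficients of $\ell_{B(\e)}$ depend continuously on the vertices $p_i(\e)$, which in turn depend continuously on $\e$; consequently $\ell_{B(\e)} \to \ell_B$ uniformly on the compact set $\D(\tilde S_{n,0})$ as $\e \to 0$.

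Second, for each pair $(B, q)$ with $B$ a maximal cell of $\tilde \T_n$ and $q \in T_n \setminus B$, the map
\[
\e \ \longmapsto \ h(q) - \ell_{B(\e)}(q(\e))
\]
is continuous and strictly positive at $\e = 0$. Since $T_n$ is finite there are only finitely many such pairs, so we may choose $\e^* > 0$ such that $h(q) > \ell_{B(\e)}(q(\e))$ for every such pair and every $\e \in (0, \e^*)$. For any such $\e$, the function $h^\e$ defined above lifts the vertices of each maximal cell $B(\e)$ of $\tilde \T^\e_n$ to a (non-vertical) hyperplane, and lifts every other vertex strictly above it; this is exactly the regularity condition for $\tilde \T^\e_n$.

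Choosing $\e^*$ possibly even smaller to also guarantee affine independence of every maximal cell's vertex set after perturbation completes the argument. The only potential obstacle is uniformity of the threshold $\e^*$ across all regularity inequalities, but this is immediate from the finiteness of the vertex set $T_n$ and the finite number of maximal cells of $\tilde \T_n$, together with the continuity of $\ell_{B(\e)}$ and $q(\e)$ in $\e$.
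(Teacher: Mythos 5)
Your proof is correct and takes essentially the same approach as the paper's: both transport the original height function to the perturbed vertex set and use the fact that regularity is a finite system of strict inequalities that is preserved under sufficiently small continuous perturbations. Your write-up is actually more explicit than the paper's (which compresses the continuity/finiteness argument into one sentence), spelling out the affine functions $\ell_B$, their continuous dependence on $\e$, and the uniform choice of $\e^*$.
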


\begin{proof} 
Take $\{p_0,....,p_{m}\}$ to be the vertex set of a maximal dimensional cell $B$ of $\tilde \T_n$ and let $p \in T_n \setminus B$. Since $\tilde \T_n$ is regular, let its associated height function be denoted by $h$. 
Then $(p, h(p))$ lies above the hyperplane defined by $(p_t, h(p_t))^{m}_{t=0}$. Therefore, substituting $p_t$ for $p_t(\e)$ and letting $\e>0$ be sufficiently small, we have that $(p_t(\e), h(p_t(\e)))^{m}_{t=0}$ defines a non-vertical hyperplane in $\Re^{m+2}$ with $(p(\e), h(p(\e))$ above this hyperplane, for any other vertex $p(\e)$. From this, we can define the graph of a convex piecewise linear function $h^{\e}: \D(\tilde S_{n,0}) \to \Re$, by letting $h^{\e}(p) \equiv h(p)$ for every vertex $p$ of $\T^{\e}_n$, and, for another arbitrary point $p$, we consider the carrier of $p$ in $\tilde \T^{\e}_n$ and define $h^{\e}$ by linear interpolation.  \end{proof}

We now define a \textit{generalized barycentric subdivision} of the simplex $\D(\tilde S_{n,0})$, which is a minor generalization of the classic barycentric subdivision. Let $\tilde{\T}^0_n$ be the face complex of $\D(\tilde S_{n,0})$. Assume $\tilde{\T}^{k-1}_n$ is defined and let us define $\tilde{\T}^{k}_n$: the \textit{$0$-dimensional generalized barycenters} are the vertices of $\tilde{\T}^{k-1}_n$. For each $1$-dimensional cell $\tilde \t$ of $\tilde{\T}^{k-1}_n$, chose exactly one point $b^1$ in its interior. The point $b^1$ is referred to as a \textit{$1$-dimensional generalized barycenter} of $\tilde \t$. One then proceeds by choosing points in the interior of cells of increasing dimension, in the exact same fashion as with the classical barycentric subdivision.

The full-dimensional cells of $\tilde{\T}^{k}_n$ are also defined in the exact same fashion as in the $k$-th-iterate of the classical barycentric subdivision: an $m$-dimensional cell of $\tilde{\T}^{k}_n$ is denoted by its vertices $(b^0_k, b^1_k,...,b^m_k)$, where $b^0_k$ is a vertex of $\tilde{\T}^{k-1}_n$, $b^1_k$ a generalized barycenter of a $1$-dimensional cell of $\tilde{\T}^{k-1}_n$,  $b^2_k$ a generalized barycenter of a $2$-dimensional cell of $\tilde{\T}^{k-1}_n$, etc. When generalized barycenters are chosen sufficiently near the actual barycenters, the diameter of a generalized barycentric subdivision shrinks, as $k$ increases, just like with the classical barycentric subdivision.


Letting $\zeta>0$ be as in the statement of Lemma \ref{lem triangulation T tilde}, there exists $\d>0$ sufficiently small and $k_0 \in \mathbb{N}$ such that for any $k \geq k_0$, any choice of generalized barycenters within $\d$ of the actual barycenters yields a generalized barycentric subdivision $\tilde{\T}^{k}_n$ with diameter less than $\zeta$. Moreover, we can assume without loss of generality that for each $i,j$, the carrier of $\tilde \s^{ij}_n$ in $\tilde{\T}^{k}_{n}$ has all its vertices in the interior of $\tilde{T}^{ij}_n$. We fix from now on such a $\d>0$ and $k_0$.

We now define a polyhedral refinement of $\tilde{\T}^{k}_n$, which we call the \textit{Eaves-Lemke (EL)-refinement $\P_n$ of $\tilde{\T}^{k}_n$}. This is precisely the same polyhedral refinement as the one used in \cite{GW2005}, we recall its construction for completeness. For each $(m-1)$-dimensional cell $\t$ of $\tilde{\T}^{k}_n$, let $H_{\t} \equiv \{ x \in \R^{m+1} \mid a_{\t} \cdot x = b_{\t} \}$ be the hyperplane that includes $\t$ and is orthogonal to $\D(\tilde S_{n,0})$. Each full-dimensional cell of $\P_n$ is the intersection of $\D(\tilde S_{n,0})$ with a polyhedron $\cap_{\t} H^{i}_{\t} $, where $i \in \{+,-\}$ and $H^{i}_{\t}$ is one of the two half-spaces defined by the hyperplane $H_{\t}$.

\begin{lemma}\label{EL}
There exists a polyhedral subdivision $\P_n$ of $\D(\tilde S_{n,0})$ satisfying the following properties: 

\begin{enumerate}
\item for each $i,j$, $\tilde \s^{ij}_n$ has a carrier $\rho^{ij}$ in $\P_n$ of dimension dim($\S_n$);
\item there is a convex piecewise linear function $\g_n: \D(\tilde S_{n,0}) \to [0,1]$ which is linear precisely in each cell of $\P_n$;
\item the diameter of $\P_n$ is less than $\zeta$.
\end{enumerate} 
\end{lemma}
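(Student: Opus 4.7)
The plan is to realize $\P_n$ as the Eaves-Lemke refinement of a generalized barycentric subdivision chosen in sufficiently general position, and to construct $\g_n$ explicitly as a positive combination of absolute-value functions built from the hyperplanes cutting out the refinement.

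First, I fix $k \geq k_0$ so that, even before any further subdivision, $\tilde{\T}^k_n$ has diameter well below $\zeta$. The remaining freedom lies in choosing generalized barycenters within the $\d$-ball around the true barycenters. For each $i,j$, the point $\tilde\s^{ij}_n$ lies in the relative interior of a face $F^{ij}$ of $\D(\tilde S_{n,0})$ of dimension $\dim(\S_n)$: this follows from Lemma \ref{lem function f}(3e) together with the fact that $\T_n$ refines $\X_n$. I would use the remaining freedom to enforce a general-position condition: for every $i,j$ and every $(m-1)$-cell $\t$ of $\tilde{\T}^k_n$ whose hyperplane $H_\t$ does not contain $F^{ij}$, the point $\tilde\s^{ij}_n$ does not lie on $H_\t$. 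Small perturbations of the generalized barycenters preserve the combinatorial type of $\tilde{\T}^k_n$, so each such ``bad'' condition is a single algebraic equation in finitely many parameters, and finitely many of them can be avoided simultaneously on a dense open subset of allowed choices.

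Next, define $\P_n$ as the EL-refinement of $\tilde{\T}^k_n$. Property (3) is immediate, since every cell of $\P_n$ is contained in some cell of $\tilde{\T}^k_n$. For property (1), the carrier $\rho^{ij}$ of $\tilde\s^{ij}_n$ in $\P_n$ lies inside $F^{ij}$; by the general-position condition, the only hyperplanes $H_\t$ through $\tilde\s^{ij}_n$ are those whose affine span contains $F^{ij}$, and such hyperplanes do not cut $F^{ij}$ locally at $\tilde\s^{ij}_n$. Hence $\tilde\s^{ij}_n$ lies in the relative interior of a cell of $\P_n$ of dimension $\dim(\S_n)$. For property (2), write $H_\t = \{x : a_\t \cdot x = b_\t\}$ for each $(m-1)$-cell $\t$ of $\tilde{\T}^k_n$, pick any positive weights $c_\t > 0$, and set
\[
\g_n(x) \equiv M^{-1} \sum_{\t} c_\t \, | a_\t \cdot x - b_\t |,
\]
where $M$ normalizes the range into $[0,1]$. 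Each summand is nonnegative, convex and piecewise linear with non-smooth locus exactly $H_\t$; the sum is convex and piecewise linear, with non-smooth locus equal to $\left(\bigcup_\t H_\t\right) \cap \D(\tilde S_{n,0})$. Since the full-dimensional cells of $\P_n$ are by construction the closures of the connected components of $\D(\tilde S_{n,0}) \setminus \bigcup_\t H_\t$, the function $\g_n$ is linear on each cell of $\P_n$ and fails to extend linearly across any defining hyperplane, yielding property (2).

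The main obstacle is property (1) rather than the existence of a suitable convex function: the face $F^{ij}$ is typically of strictly smaller dimension than the ambient $\D(\tilde S_{n,0})$, so one has to exclude the coincidence that a hyperplane $H_\t$ inherited from an interior cell of $\tilde{\T}^k_n$ happens to cut $F^{ij}$ exactly through $\tilde\s^{ij}_n$. The generic-position step above is what ensures this does not occur.
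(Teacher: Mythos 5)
Your proof is correct and follows essentially the same approach as the paper: realize $\P_n$ as the Eaves--Lemke refinement of a generalized barycentric subdivision with the generalized barycenters chosen in general position, and define $\g_n$ as a normalized sum of absolute values of the cutting hyperplane equations. One small caveat: the fact that $\tilde\s^{ij}_n$ lies in the relative interior of a $\dim(\S_n)$-dimensional face of $\D(\tilde S_{n,0})$ should be cited from Lemma~\ref{lem triangulation T}(1) (which gives that $T^{ij}_n$ is a full-dimensional simplex of $\T_n$ with $\s^{ij}_n$ as its barycenter), not from Lemma~\ref{lem function f}(3e) together with ``$\T_n$ refines $\X_n$''---refinement alone does not preserve the dimension of a carrier.
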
 

\begin{proof} 
We first focus on constructing a polyhedral subdivision that satisfies (1). In $\tilde \T^0_n$, choose generalized barycenters which are distinct from $\tilde{\s}^{ij}_n$, for each $i,j$, defining $\tilde{\T}^1_n$. Consider now the derived (EL)-polyhedral subdivision $\P^1_ n$ of $\tilde \T^{1}_n$. If there exists $i,j$ such that the carrier of $\tilde{\s}^{ij}_n$ in $\P^1_n$ has dimension less than dim($\S_n$), then there exists a hyperplance $H_{\t}$ which contains $\tilde{\s}^{ij}_n$, where $\t$ is a face containing an $m$-dimensional generalized barycenter $b^m_1$. Let $\{p_0, p_1, ..., p_{m-2}, b^{m}_{1}\}$ be affinely independent vertices of the triangulation $\tilde{\T}^{1}_n$ defining $H_{\t} \cap \D(\tilde{S}_{n,0})$. If $H_{\t}$ contains in addition $\tilde{\s}^{ij}_n$, then $\{p_0, p_1, ..., p_{m-2}, b^{m}_{1}, \tilde{\s}^{ij}_n\}$ is an affinely dependent set. Choosing therefore $b^{m}_{1}$ outside an affine set with dimension strictly lower than $m$ in $\D(\tilde{S}_{n,0})$ implies $H_{\t}$ does not contain $\tilde{\s}^{ij}_n$. This procedure can be iteratively applied for each point $\tilde{\s}^{ij}_n$, ensuring all of them have a carrier of dimension dim($\S_n$) in $\P^1_n$. Suppose now we have obtained $\P^{k_0-1}_n$ from $\tilde{\T}^{k_0-1}_n$ satisfying the property that all $\tilde \s^{ij}_n$ have carriers with dimension equal to dim($\S_n$). Choose generalized barycenters in $\tilde{\T}^{k_0-1}_n$ such that each generalized barycenter is distinct from each $\tilde \s^{ij}_n$ and consider the (EL)-polyhedral subdivision $\P^{k_0}_n$. If there exists $i,j$, such that the carrier of $\tilde \s^{ij}_n$ in $\P^{k_0}_n$ has dimension less than dim($\S_n$), then it follows that there exists $\t$ containing an $m$-dimensional generalized barycenter $b^m_{k_0}$ such that $H_{\t}$ contains $\tilde{\s}^{ij}_n$. Similarly as before, let $\{p_0, p_1, ..., p_{m-2}, b^{m}_{k_0}\}$ be affinely independent vertices of the triangulation $\tilde{\T}^{k_0}_n$ defining $H_{\t} \cap \D(\tilde{S}_{n,0})$. If $H_{\t}$ contains in addition $\tilde{\s}^{ij}_n$, then $\{p_0, p_1, ..., p_{m-2}, b^{m}_{k_0}, \tilde{\s}^{ij}_n\}$ is an affinely dependent set. Choosing therefore $b^{m}_{k_0}$ outside an affine set with dimension strictly lower than $m$ in $\D(\tilde{S}_{n,0})$ implies $H_{\t}$ does not contain $\tilde{\s}^{ij}_n$. This procedure can be iteratively applied for each point $\tilde \s^{ij}_n$, ensuring all of them have a carrier of dimension dim($\S_n$) in $\P^{k_0}_n$. This shows $\P^{k_0}_n$ satisfies (1). Since $\P^{k_0}_n$ refines $\tilde{\T}^{k_0}_n$, its diameter is also less than $\zeta$, implying (3).  To show (2), consider the following function: $\g_n: \D(\tilde S_{n,0}) \to [0,1]$, $\g_n(\s_n) \equiv \a \sum_{\t}|a_{\t} \cdot \s_n - b_{\t}|$, where the scaling factor $\a>0$ is chosen so as to let $\g_n(\D(\tilde S_{n,0})) \subseteq [0,1]$. The function $\g_n$ then satisfies (2). \end{proof} 

We now derive a triangulation from $\P_n$ in Lemma \ref{EL}, by triangulating each full-dimensional cell of $\P_n$. This triangulation will add no vertices beyond those of $\P_n$, i.e., it will simply subdivide each $m$-dimensional cell of $\P_n$ into $m$-dimensional simplices, without adding new vertices to the triangulation. This triangulation, up to an arbitrarily small perturbation of its vertices, achieves the objectives of Lemma \ref{lem triangulation T tilde}.

\begin{lemma}\label{final refinement}
There exists a triangulation $\tilde{\T}_n$, which refines $\P_n$ without adding new vertices, satisfying the following conditions: 
\begin{enumerate} 
\item The diameter of $\tilde{\T}_n$ is less than $\zeta$;
\item For  $i,j$, $\tilde T^{ij}_n$ is the space of a subcomplex of $\tilde \T_n$;
\item For  $i,j$, the carrier $\hat{T}^{ij}_n$ of $\tilde \s^{ij}_n$ in $\tilde \T_n$ has dimension dim($\S_n$);
\item There is a convex piecewise linear function $\g_n: \D(\tilde S_{n,0}) \to \Re_{+}$ such that: 
	\begin{enumerate}
	\item[(4.1)] $\g_n$ is linear precisely on the simplices of $\tilde{\T}_n$;
	\item[(4.2)] $\g_n$ is constant in each $\hat T^{ij}_n$;
	\end{enumerate}

\item For each $i,j$, $(\tilde \s^{ij}_n, \tilde \s^{ij}_{n+1})$ is not in the convex hull of dim($\S_n$) $+$ dim($\S_{n+1}$) (or less) vertices of $\hat{T}^{ij}_n \times \hat{T}^{ij}_{n+1}$.
\end{enumerate} 
\end{lemma}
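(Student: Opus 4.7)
I need a triangulation $\tilde\T_n$ of $\D(\tilde S_{n,0})$ satisfying five simultaneous requirements: diameter below $\zeta$; containment of each $\tilde T^{ij}_{n,0}$ as the space of a subcomplex; top-dimensional carriers $\hat T^{ij}_{n,k}$ at each special point $\tilde\s^{ij}_{n,k}$; product non-degeneracy at each pair $(\tilde\s^{ij}_{n,0},\tilde\s^{ij}_{n,1})$; and existence of a convex piecewise linear $\g_n$ that is linear precisely on its simplices and constant on each $\hat T^{ij}_{n,0}$. My plan is to construct $\tilde\T_n$ as a \emph{regular} triangulation (cf.\ \cite{LRS2010}), so that a convex piecewise linear height function exists by construction, and then to realize all remaining conditions through genericity and careful placement.

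I would start with a fine simplicial refinement of the face complex of $\D(\tilde S_{n,0})$ in which the vertices of every $\tilde T^{ij}_{n,0}$ appear as vertices of the triangulation---this guarantees each $\tilde T^{ij}_{n,0}$ is a subcomplex. Iterated barycentric subdivision modulo $\bigcup_{ij}\tilde T^{ij}_{n,0}$, exactly as used in the proof of Lemma \ref{lem triangulation T}, shrinks the diameter below $\zeta$ while leaving the inner simplices intact. A small generic perturbation of the non-fixed vertices then removes the bad loci ($\tilde\s^{ij}_{n,k}$ sitting on a codimension-one face; $(\tilde\s^{ij}_{n,0},\tilde\s^{ij}_{n,1})$ lying in a low-dimensional product face), which are finite unions of positive-codimension subvarieties of the vertex configuration space and hence avoidable. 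For the convex piecewise linear function, I would invoke an Eaves--Lemke refinement: cutting every cell along each orthogonal hyperplane $H_\t = \{a_\t\cdot\s = b_\t\}$ spanning a codimension-one simplex $\t$ yields a polyhedral refinement $\P_n$ with canonical convex function $\g_n(\s) = \a\sum_\t |a_\t\cdot\s - b_\t|$; triangulating the cells of $\P_n$ without introducing new vertices then produces $\tilde\T_n$ on whose simplices $\g_n$ remains linear.

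The hard part is making $\g_n$ constant---not merely linear---on each $\hat T^{ij}_{n,0}$. Since a linear function on a simplex is constant iff it takes equal values at its vertices, what is needed is that the oriented linear parts $\sum_\t\pm(a_\t\cdot\s-b_\t)$ cancel inside $\hat T^{ij}_{n,0}$, which amounts to arranging the relevant hyperplanes $H_\t$ symmetrically around $\tilde\s^{ij}_{n,0}$. I would enforce this by fixing each $\hat T^{ij}_{n,0}$ at the outset as a small simplex centered symmetrically at $\tilde\s^{ij}_{n,0}$, and demanding that all subsequent subdivisions and perturbations be performed modulo these inner simplices. The bookkeeping needed to verify that this rigid symmetric inner structure remains jointly compatible with the diameter, regularity, and genericity requirements---in particular, that the subsequent Eaves--Lemke cuts neither bisect $\hat T^{ij}_{n,0}$ nor place $\tilde\s^{ij}_{n,k}$ on a lower-dimensional facet of the resulting refinement---is where the main technical work sits, and is the step I expect to be the principal obstacle.
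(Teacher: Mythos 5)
The framework of your proposal — an Eaves--Lemke refinement to obtain the convex piecewise linear function, followed by generic perturbation to handle the carrier dimension condition (3) and the product non-degeneracy (5), with regularity invoked to get ``linear precisely on'' in (4.1) — agrees with the paper's route through Lemmas \ref{regularisgeneric} and \ref{EL}. The divergence, and the gap, is entirely in how you propose to handle the constancy requirement (4.2).

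You attempt to make $\g_n$ constant on each $\hat T^{ij}_n$ by positioning $\hat T^{ij}_n$ symmetrically about $\tilde\s^{ij}_n$ so that the oriented linear pieces $\sum_\t\pm(a_\t\cdot\s-b_\t)$ cancel inside. This will not work. On any cell, the restriction of $\g_n$ is $\a\sum_\t \epsilon_\t(a_\t\cdot\s-b_\t)$ summed over \emph{all} hyperplanes $H_\t$ of the Eaves--Lemke arrangement, not merely those adjacent to $\hat T^{ij}_n$; constancy requires $\sum_\t\epsilon_\t a_\t=0$, a global linear dependence among normal vectors of the entire arrangement. Hyperplanes far from $\hat T^{ij}_n$ all have a fixed sign on it and contribute a non-vanishing linear term that no local symmetric placement cancels. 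Enforcing the cancellation is also directly incompatible with the genericity you simultaneously invoke for (3) and (5): the degenerate locus you are steering towards in (4.2) is exactly the sort of positive-codimension event that a generic perturbation destroys. Moreover $\hat T^{ij}_n$ is the carrier of $\tilde\s^{ij}_n$ produced by the triangulation, not a free design parameter, so ``fixing it at the outset'' and subdividing modulo it overconstrains the construction.

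The paper's resolution is different and much lighter. Start from the convex piecewise linear $\g'_n$ that the regular triangulation already provides (linear precisely on simplices, nothing assumed about its values). Because the number $m+1$ of vertices of $\D(\tilde S_{n,0})$ can be taken large enough that $K_n[\dim(\S_n)+1]<m+1$ — a headroom condition fixed at the start of the Appendix — the union over $i,j$ of the vertex sets of the $\hat T^{ij}_n$ spans an affine subspace of dimension strictly less than $m$, so one can choose an affine function $\phi_n$ that equals $\g'_n$ at every vertex of every $\hat T^{ij}_n$ while staying strictly below $\g'_n$ at all remaining vertices. Then $\g_n\equiv\g'_n-\phi_n$ is still convex and piecewise linear, its linearity domains are unchanged by the affine shift so (4.1) is preserved, it is nonnegative, and it vanishes (hence is constant) on each $\hat T^{ij}_n$. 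Subtracting an affine function costs nothing combinatorially, which is what your symmetry approach was trying and failing to buy. You should replace the symmetric-placement argument by this affine shift and add the dimension headroom hypothesis $K_n[\dim(\S_n)+1]<m+1$, which your proposal silently omits but on which the argument hinges.
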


\begin{proof} The polyhedral complex $\P_n$ refines the generalized barycentric subdivision $\tilde{\T}^{k_0}_n$. Therefore it also has diameter less than $\z$ and has $\tilde T^{ij}_n$ as the space of a subcomplex. There exists a piecewise-linear, convex function $\hat{\g}_n: \D(\tilde{S}_{n,0}) \to \Re_{+}$, which is linear precisely in the cells of $\P_n$ (cf. Lemma \ref{EL}). The map $\hat{\g}_n(\cdot)$ is in particular a height function in the set of vertices of $\tilde{\T}^{k}_n$. By Lemma 2.3.15 in \cite{LRS2010},  any sufficiently small (generic) perturbation $\g'_n$ of this height function gives a refinement $\tilde{\T}_n$ of $\P_n$ (without adding vertices to $\P_n$), where $\tilde{\T}_n$ is a triangulation, yielding a convex piecewise linear function $\g'_n: \D(\tilde S_{n,0}) \to \Re_{+}$ which is linear precisely in the cells of $\tilde{\T}_n$. 

We show that it is without loss of generality to assume that the carrier of $\tilde{\s}^{ij}_n$ in $\tilde{\T}_n$ is a full-dimensional cell of $\tilde{\T}_n$. If there exists $i,j$ such that $\tilde{\s}^{ij}_n$ is contained in an $(\text{dim}(\S_n)-1)$-dimensional cell $\t$ of $\tilde{\T}_n$, using Lemma \ref{regularisgeneric}, we can consider a sufficiently small $\e>0$ and a $\tilde \T^{\e}_n$ in which the carrier of $\tilde \s^{ij}_n$ in $\tilde \T^{\e}_n$ has dimension dim($\S_n$): this is done by choosing suitable perturbations $p(\e)$ of vertices $p$ of the carrier of $\tilde \s^{ij}_n$ in $\tilde \T_n$, yielding  the desired $\tilde \T^{\e}_n$. Moreover, for a sufficiently small $\e$, $\tilde \T^{\e}_n$ satisfies (1), (2), (3) and (4.1). Therefore, if necessary by considering a small perturbation $\tilde \T^{\e}_n$, we can assume that $\tilde \T_n$ satisfies (1), (2), (3) and (4.1).


Note now that the convex hull of any collection of dim($\S_n$) $+$ dim($\S_{n+1}$) (or less) vertices of $\hat{T}^{ij}_n \times \hat{T}^{ij}_{n+1}$ is a union of convex sets whose dimensions are strictly less dim($\S_n$) + dim($\S_{n+1}$). This implies that if the vertices of $\hat{T}^{ij}_n \times \hat{T}^{ij}_{n+1}$ are positioned in $\tilde T^{ij}_n$ outside a set of dimension strictly less than dim($\S_n$) + dim($\S_{n+1}$), then $\tilde{\s}^{ij}_n$ satisfies property (5). By using Lemma \ref{regularisgeneric} just like in the previous paragraph, we can then assume $\tilde \T_n$ satisfy (5).

 To finish our construction it remains to prove that we can construct a piecewise linear convex function $\g_n: \D(\tilde S_{n,0}) \to \Re_{+}$, linear precisely on each cell of $\tilde{\T}_n$ and constant over each $\hat{T}^{ij}_{n}$. Since $K_n [\text{dim}(\S_n) + 1]< m+1$, the union of the set of vertices of $\hat T^{ij}_n$ over $i,j$ spans an affine set with dimension lower than $m$. This implies we can define an affine function $\phi_n : \D(\tilde S_{n,0}) \to \Re$ satisfying the following conditions: (a) $\phi_n (\hat {p}^{ij}_n) = \hat{\g}_n(\hat p^{ij}_n)$ , for all vertices $\hat p^{ij}_n$ of $\hat T^{ij}_n$; (b) $\phi_n (p_n) < \hat{\g}_n(p_n)$ for all remaining vertices $p_n$ of $\tilde \T_n$. Letting now $\g_n \equiv \g'_n - \phi_n $ yields the desired function satisfying all the required properties in the statement of the Lemma.  \end{proof}


\end{document}